\newcommand\percentage[2][round-precision = 1]{
    \SI[round-mode = places,
        scientific-notation = fixed, fixed-exponent = 0,
        output-decimal-marker={.}, #1]{#2e2}{\percent}%
}
\newtheorem{theorem}{Theorem}[section]
\newtheorem{lemma}[theorem]{Lemma}
\newtheorem{definition}[theorem]{Definition}
\newtheorem{result}{Result}
\newtheorem{proposition}[theorem]{Proposition}
\newtheorem{remark}[theorem]{Remark}
\title{The Causal Learning of Retail Delinquency}
\author {
	Yiyan Huang,\textsuperscript{\rm 2}\thanks{Co-first authors are in alphabetical order.} 
	Cheuk Hang Leung,\textsuperscript{\rm 2}\footnotemark[1] 
	Xing Yan,\textsuperscript{\rm 3} 
	Qi Wu,\textsuperscript{\rm 2}\thanks{Qi Wu is the corresponding author.}\\
	Nanbo Peng,\textsuperscript{\rm 1} 
	Dongdong Wang,\textsuperscript{\rm 1} 
	Zhixiang Huang\textsuperscript{\rm 1}\\
}
\begin{document}

\maketitle

\begin{abstract}
This paper focuses on the expected difference in borrower's repayment when there is a change in the lender's credit decisions. Classical estimators overlook the confounding effects and hence the estimation error can be magnificent. As such, we propose another approach to construct the estimators such that the error can be greatly reduced. The proposed estimators are shown to be unbiased, consistent, and robust through a combination of theoretical analysis and numerical testing. Moreover, we compare the power of estimating the causal quantities between the classical estimators and the proposed estimators. The comparison is tested across a wide range of models, including linear regression models, tree-based models, and neural network-based models, under different simulated datasets that exhibit different levels of causality, different degrees of nonlinearity, and different distributional properties. Most importantly, we apply our approaches to a large observational dataset provided by a global technology firm that operates in both the e-commerce and the lending business. We find that the relative reduction of estimation error is strikingly substantial if the causal effects are accounted for correctly.
\end{abstract}
\section{Introduction}\label{sec:Introduction}
A growing number of technology conglomerates provide lending services to shoppers who frequent their e-commerce marketplaces. Technology firms have the information advantage that commercial banks lack. A vast amount of proprietary digital footprints are now at their fingertips. Study shows the information content of proxies for human behavior, lifestyle, and living standard in the non-transnational data are just effective, hence highly valuable for default prediction \cite{berg2020rise}. However, managing retail credit risks in online marketplaces differs in one fundamental way from managing credit-card default risks faced by commercial banks. It comes from the pronounced ``action-response" relationship between the lender's credit decisions and the borrowers' delinquency outcomes. When customers apply for credit to finance their online purchases, they effectively enter into an unsecured loan contract where the counterparty is the platform lender, and they are expected to make installments according to the payment schedules. These loan decisions, especially the loan amount and loan interest, are far more individualized than those made in the traditional credit card business and are frequently adjusted. The e-commerce lenders observe that their credit policies have a systematic causal impact, which alters borrowers' payment behavior, irrespective of age, income, education, occupation, and so forth.

It is conceivable that machine learning (ML) algorithms can effectively tap into the information advantage for accurate estimations of delinquency rate \cite{khandani2010consumer}. However, the existing retail credit risk studies do not recognize the essential of action-response causalities as far as we know. Indeed, obtaining the accurate estimation of delinquency rate does not mean that we can obtain the accurate estimation of the action-response causality. Generally, we face two biases in estimating the action-response causality. The first bias is due to the neglect of the confounding effects. In reality, loan decisions are the results of the lender's decision algorithms that use an overlapping set of borrowers' features with those used for risk assessments, e.g., past shopping and financing records. Thus, ignoring the modelling of the relation between the credit policies and the credit features may cause a big bias in estimating the action-response causality. The second bias comes from the estimation of predictors-response relation. This is a regression bias related to data and ML algorithm, e.g., the sample size, the feature dimensions and the regressor selections.

The goal of this paper is to construct the estimators of the causal parameters which can address both the confounding bias and the regression bias. It is equivalent to finding the score functions with specific conditions (detailed discussions will be presented in the later sections) such that we can recover the estimators from the score functions. To obtain the corresponding estimates of the estimators, we need to estimate the ``counterfactuals'', which are the potential outcomes in delinquent probabilities of borrowers if they were given different amounts of credit lines from the ones they had received. Once the counterfactuals are estimated, we would like to assess both the Average Treatment Effect (ATE) and the Average Treatment Effect on the Treated (ATTE). For new customers, the lender can use the ATE to choose appropriate credit lines for them.  For existing customers, the ATTE allows the lender to gauge the potential changes of risks if their credit lines were changed from the current levels to new ones.

For the rest of the paper, Section \ref{sec:Related Work and Our Contributions} summarizes the related works and our contributions. Section \ref{sec:Background} presents our model setup, and Section \ref{sec:Empirical experiment} presents all the experiments. The paper ends with the conclusion section.
\section{Related Works and Contributions}\label{sec:Related Work and Our Contributions}
\paragraph{Related Works.} In the past credit risk works, the typical supervised learning methods such as direct regression are widely used to estimate the global relationship between the response and the predictors. For example, the regression tree \cite{khandani2010consumer}, Random Forests \cite{malekipirbazari2015risk}, and Recurrent Neural Network \cite{sirignano2016deep} are applied to construct default forecasting models. 

However, when it comes to studying the action-response relationship (e.g., estimating the ATE and ATTE), simply using the typical supervised learning methods to estimate the outcomes for each individual under different interventions and averaging the estimated outcomes for each intervention can produce unsatisfactory results \cite{chernozhukov2018double}, since there is a chance of misspecification of the relationship when confounding effects are present. Some other methods that can account for the confounding effects range from those based on balancing learning and matching \cite{li2017matching,kallus2018balanced,bennett2019policy} to those based on deep learning such as representation learning and adversarial learning \cite{johansson2016learning,yao2018representation,lattimore2016causal,yoon2018ganite}. However, for matching methods such as Propensity Score Matching (PSM) and inverse probability weighting (IPW) (e.g., \cite{hirano2003efficient}), they may amplify the estimation bias if the feature variables are not selected properly or the algorithm is not ideal enough \cite{heckman1998matching}.

To overcome some deficiencies of the above methods, Doubly Robust Estimators (DREs) was proposed \cite{farrell2015robust}. The DREs are recovered from the score functions which incorporate the confounding effects in general \cite{dudik2011doubly}. However, it is not sure if the score functions of the DREs satisfy the \textit{orthogonal condition}, which is defined in \cite{chernozhukov2018double} inspired by \cite{neyman1979c}. Heuristically, those DREs recovered from the score functions which may violate the orthogonal condition (see the detailed discussions in our supplementary) can be sensitive to the nuisance parameters, and hence easily lead to a biased estimation. To solve this problem, some researchers propose the new score functions that satisfy the orthogonal condition \cite{chernozhukov2018double,mackey2018orthogonal,oprescu2019orthogonal}, but they either only consider the binary treatment or derive the theoretical results based on the partially linear model (PLR) setting. To improve that, we not only extend the intervention variable from the binary values to the multiple values, but also consider the fully nonlinear model setting instead of the PLR model setting.


\paragraph{Contributions.}
The contributions of our paper are:
\begin{enumerate}
\item We are the first to show the importance and necessity of considering causality in retail credit risk study using observational records of e-commerce lenders and borrows; 
\item We extend from a partially linear model (e.g., \cite{mackey2018orthogonal, oprescu2019orthogonal}) to a fully nonlinear model. Our estimators recovered from the orthogonal score functions are proved to be regularization unbiased and consistent with an increasing number of population size, thus very suitable for large datasets. Besides, our setup allows the intervention variable to take multiple discrete values, rather than binary values as in \cite{chernozhukov2018double};
\item Our estimators are generic, not only restricted to linear or tree-based methods, but also for any complex methods such as neural network-based models including fully-connected ones (e.g., MLP), convolutional ones (e.g., CNN), and recurrent ones (e.g., GRU);
\item The obtained estimators are robust to model misspecifications when mappings between predictors and outcomes/interventions exhibit different degrees of nonlinearity, and data possess different distributional properties than assumed;
\item We use the parameters in our experiments to control the causality and nonlinearity and show how much error our estimators can correct for compared with direct regression estimators used in the past credit risk works.
\end{enumerate} 
The above points are comprehensively tested through well-designed simulation experiments and verified on a large proprietary real-world dataset via semi-synthetic experiments.
\section{The Model Setup}\label{sec:Background}
\paragraph{The Potential Outcomes.}
Given a probability space $(\Omega,\mathbb{P},\mathcal{F})$, the formulation \eqref{eqt:non-causal model} treats the treatment variable $D$ (or the policy intervention) as part of the explanatory variables $(D,\mathbf{Z})$, where $\mathbf{Z}$ is the feature set. $D$ and $\mathbf{Z}$ are used to regress the response variable (or the outcome) $Y$ such that
\begin{equation}
{\small\begin{aligned}\label{eqt:non-causal model}
	Y&=g(D,\mathbf{Z})+\zeta\quad\text{and}\quad\mathbb{E}\left[\zeta\mid D,\mathbf{Z}\right]=0.
	\end{aligned}}
\end{equation}
Here the outcome $Y$ is a random scalar, the feature set $\mathbf{Z}$ is a random vector, and the intervention $D$ takes discrete values in the set $\{d^{1},\cdots,d^{n}\}$.
$\zeta$ is the scalar noise which summaries all other non-$(D,\mathbf{Z})$-related unknown factors and is assumed to have zero conditional mean given $(D,\mathbf{Z})$. 
The function $g$ is a $\mathbb{P}$-integrable function.
The core of the impact inference is the estimation of the potential outcomes of an individual under the interventions that are different from what we observe. The unobservable potential outcomes are also called the \textit{counterfactuals}. Knowing how to estimate the counterfactuals provides a way to estimate two quantities, both of them are the average ``action-response" relationship between the potential outcome $Y$ and the potential treatment $d^{i}$. Mathematically, they are
\begin{equation*}
{\small\begin{aligned}
&\theta^{i} :=\mathbb{E}\left[g(d^i,\mathbf{Z})\right]\;\;\;\;\text{and}\;\;\;\;\theta^{i\mid j} :=\mathbb{E}\left[g(d^i,\mathbf{Z})\mid D=d^{j}\right].
\end{aligned}}
\end{equation*}

The quantity $\theta^{i}$ means that we want to find the expected outcome when the potential intervention is $d^{i}$. Concurrently, the quantity $\theta^{i\mid j}$ means that given the factual intervention is $d^{j}$, we want to find the expected counterfactual outcome if the intervention $D$ had taken the value $d^{i}$.

\paragraph{Impact Inference without Confounding.}
We begin with the Impact Inference without Confounding (IoC) which accounts for the policy impact but not the confounding effects.
We use $(y_{m},d_{m},\mathbf{z}_{m})$ to represent the observational data associated with the $m^{\textrm{th}}$ customer in the size-$N$ population and use $(y_{m}^{j},d_{m}^{j},\mathbf{z}_{m}^{j})$ to represent the observational data of the $m^{\textrm{th}}$ customer in the sub-population which contains $N_{j}$ customers with observed treatment $d^{j}$. Using sample averaging, we can estimate $\theta^{i}$ as $\frac{1}{N} \overset{N}{\underset{m=1}{\sum}}g(d^i,\mathbf{z}_{m})$.

On the other hand, since
\begin{equation*}
{\small
\begin{aligned}
&\mathbb{E}\left[g(d^i,\mathbf{Z})\mathbf{1}_{\{D=d^{j}\}}\right]=\mathbb{E}\left[\mathbb{E}\left[g(d^i,\mathbf{Z})\mathbf{1}_{\{D=d^{j}\}}\mid D\right]\right]\\
=&\mathbb{E}\left[g(d^i,\mathbf{Z})\mid D=d^{j}\right]\mathbb{E}\left[\mathbf{1}_{\{D=d^{j}\}}\right],
\end{aligned}
}
\end{equation*}
%
we can obtain that $\theta^{i\mid j}=\frac{\mathbb{E}\left[g(d^i,\mathbf{Z})\mathbf{1}_{\{D=d^{j}\}}\right]}{\mathbb{E}\left[\mathbf{1}_{\{D=d^{j}\}}\right]}$. Simultaneously, we use $\frac{1}{N}\overset{N}{\underset{m=1}{\sum}} \mathbf{1}_{\{d_{m}=d^{j}\}} g(d^i,\mathbf{z}_{m})$ to estimate $\mathbb{E}\left[g(d^{i},\mathbf{Z})\mathbf{1}_{\{D=d^{j}\}}\right]$. Consequently, we can estimate $\theta^{i\mid j}$ as $\frac{1}{N_{j}} \overset{N}{\underset{m=1}{\sum}} \mathbf{1}_{\{d_{m}=d^{j}\}}g(d^i,\mathbf{z}_{m})$.
%

Note that $\theta^{i}$ is an average over the whole population $N$, whereas $\theta^{i|j}$ is an average over the sub-population $N_{j}$. To compute the estimates of the two quantities, the key point is to obtain $\hat{g}$ (the estimate of $g$) for every $d^i$ from the observational dataset. In the related literature, specifications of $g$ include both the additive forms \cite{djebbari2008heterogeneous} and the multiplicative forms \cite{hainmueller2019much}. The choices of $\hat{g}$ also include linear models \cite{du2015home,li2017estimation} as well as nonlinear ones. In particular, the predictive advantage of using neural networks to estimate $g$ is demonstrated in \cite{shi2019adapting} where they formulated a similar relationship to estimate the ATE and the ATTE \cite{louizos2017causal,yoon2018ganite}. Other examples can be found in \cite{alaa2018limits,toulis2016long,li2017matching,syrgkanis2019machine}.

Once we obtain $\hat{g}(d^{i},\cdot)$ (the estimate of $g(d^{i},\cdot)$), $\theta^{i}$ and $\theta^{i\mid j}$ can be respectively estimated in the IoC context as
\begin{equation*}
\begin{aligned}
\scalebox{1}{
$\hat{\theta}_{o}^{i}=\frac{1}{N}\overset{N}{\underset{m=1}{\sum}} \hat{g}(d^{i},\mathbf{z}_{m})\;\;\; \text{and}\;\;\;
\hat{\theta}_{o}^{i\mid j}=\frac{1}{N_{j}}\overset{N_{j}}{\underset{m=1}{\sum}} \hat{g}(d^{i},\mathbf{z}_{m}^{j})$}.
\end{aligned}
\end{equation*}
We call them the IoC estimates since we omit the relationship between $D$ and $\mathbf{Z}$, or the so-called confounding effects. Indeed, the confounding effects are obscured from the IoC estimates.
\paragraph{Impact Inference with Confounding.} We then propose the Impact Inference with Confounding (IwC) which can account for both the policy impact and the confounding effects. 
%
Using the IoC estimates to estimate $\theta^i$ and $\theta^{i|j}$ can be misspecified \cite{dudik2011doubly,yuan2019improving} when the confounding effects are present.
The misspecification comes from the fact that, while $\mathbf{Z}$ in \eqref{eqt:non-causal model} affects the outcome $Y$, the intervention $D$ could also be driven by the confounding variable $\mathbf{Z}$. For example, the customer's income level could be a confounding variable in the retail lending context. Customers who receive higher credit lines usually have higher incomes, and higher-income people tend to have lower credit risk. Such examples also widely exist in recommendation systems (e.g., \cite{wang2019doubly,swaminathan2015counterfactual,swaminathan2015self} and references therein). In our paper, we construct better score functions satisfying the orthogonal condition such that the corresponding estimates are assured to be regularization unbiased \cite{chernozhukov2018double}.
 
In order to study the policy impact in the presence of confounding effects, we propose the following formulation for our IwC estimations:
\begin{subequations}
	\begin{align}
	Y&=g(D,\mathbf{U},\mathbf{Z})+\xi,
	\quad\mathbb{E}\left[\xi\mid \mathbf{U},\mathbf{X},\mathbf{Z}\right]=0,\label{eqt:causal model 1}\\
	D&=m(\mathbf{X},\mathbf{Z},\nu),
	\quad\quad\mathbb{E}\left[\nu\mid \mathbf{X},\mathbf{Z}\right]=0,\label{eqt:causal model 2}
	\end{align}
\end{subequations}
where $\mathbf{Z}$ is the confounder variable, $\mathbf{U}$ is the outcome-specific feature set, and $\mathbf{X}$ is the intervention-specific feature set. The map $m$ and the noise term $\nu$ are of the same nature as those of $g$ and $\xi$. We impose no functional restrictions on $g$ and $m$ such that they can be parametric or non-parametric, linear or nonlinear, etc.

If the confounding effects \eqref{eqt:causal model 2} are not recognized but present in the data, the IoC estimates
%
\begin{equation} \label{eqt:IoC estimator for IwC formulation}
{\small
\begin{aligned}
\hat{\theta}_{o}^{i}=\frac{1}{N}\overset{N}{\underset{m=1}{\sum}} \hat{g}(d^{i},\mathbf{u}_{m},\mathbf{z}_{m}),\;\hat{\theta}_{o}^{i\mid j}=\frac{1}{N_{j}}\overset{N_{j}}{\underset{m=1}{\sum}}  \hat{g}(d^{i},\mathbf{u}_{m}^{j},\mathbf{z}_{m}^{j})
\end{aligned}
}
\end{equation}
could be inaccurate. Indeed, we ignore the impacts caused by \eqref{eqt:causal model 2} when \eqref{eqt:IoC estimator for IwC formulation} are used. Thus, the estimated outcome-predictors relation $\hat{g}$ can have opposite outcome-predictors relation of the authentic $g$ w.r.t. the features variable. Furthermore, even if $\hat{g}$ has the similar relation with $g$ w.r.t. the features variable, the \eqref{eqt:IoC estimator for IwC formulation} can be \textit{regularized biased}, meaning that the estimators are sensitive to the estimation $\hat{g}$.
As such, we should construct the estimators which use the information given in \eqref{eqt:causal model 2} and regularized unbiased. PSM, IPW and doubly robust (DRE) approach (e.g., \cite{farrell2015robust}) are methodologies which incorporate the relation \eqref{eqt:causal model 2} through the computation of \textit{propensity score}. However, the DREs can be sensitive w.r.t. the small perturbations on the map $m$ in \eqref{eqt:causal model 2}. Consequently, it may not be suitable for the estimations of ATE and ATTE. To stabilize it, we should build the estimators which can be recovered from the score functions that satisfy the orthogonal condition in the Definition \ref{def:Neyman orthogonal score}. Heuristically, the partial derivative of score functions w.r.t. the nuisance parameters are expected to be $0$. Indeed, the regularization biases of the estimators ATE and ATTE are reduced using a multiplicative term of the propensity score and the residuals between the observed $Y$ and the estimate of $g(d^{i},\cdot,\cdot)$.
\begin{definition}[Orthogonal Condition]\label{def:Neyman orthogonal score}
	Let $W$ be the random elements, $\Theta$ be a convex set which contains the causal parameter $\vartheta$ of dimension $d_{\vartheta}$ ($\theta$ is the true causal parameter we are interested in) and $T$ be a convex set which contains nuisance parameter $\varrho$ ($\rho$ is the true nuisance parameter we are interested). Moreover, we define the Gateaux derivative map $D_{r,j}[\varrho-\rho]:=\partial_{r}\left\{\mathbb{E}[\psi_{j}(W,\theta,\rho+r(\varrho-\rho))]\right\}$. We say that a score function $\psi$ satisfies the  (Neyman) orthogonal condition if for all $r\in[0,1)$, $\varrho\in\mathcal{T}\subset T$ and $j=1,\cdots,d_{\vartheta}$, we have
	\begin{equation}
	\begin{aligned}\label{eqt:Neyman orthogonal score}
	\partial_{\varrho}\mathbb{E}[\psi_{j}(W,\theta,\varrho)]\mid_{\varrho=\rho}[\varrho-\rho]:=D_{0,j}[\varrho-\rho]=0.
	\end{aligned}
	\end{equation}
\end{definition}
To start with our IwC formulation, we let $W=(Y,D,\mathbf{X},\mathbf{U},\mathbf{Z})$. The quantities $\Theta$, $T$, $\vartheta$, $\theta$, $\varrho$ and $\rho$ stated in Definition \ref{def:Neyman orthogonal score} that are needed to check whether a score function of $\theta^{i}$ satisfies the orthogonal condition are defined as follows:
\begin{equation*}
{\small
\begin{aligned}
\Theta=\Theta_{i}&:=\{\vartheta=\mathbb{E}\left[\mathcal{g}(d^{i},\mathbf{U},\mathbf{Z})\right]\mid \textit{$\mathcal{g}$ is $\mathbb{P}$-integrable}\},\\
T=T_{i}&:=\{\varrho=\left(\mathcal{g}(d^{i},\mathbf{U},\mathbf{Z}),a_{i}(\mathbf{X},\mathbf{Z})\right)\mid \textit{$\mathcal{g}$ is $\mathbb{P}$-integrable}\},\\
\theta=\theta^{i}&:=\mathbb{E}\left[g(d^{i},\mathbf{U},\mathbf{Z})\right]\in\Theta_{i},\\
\rho=\rho^{i}&:=\left(g(d^{i},\mathbf{U},\mathbf{Z}),\mathbb{E}\left[\mathbf{1}_{\{D=d^{i}\}}|\mathbf{X},\mathbf{Z}\right]\right)\in T_{i}.
\end{aligned}}
\end{equation*}
Similarly, to check for $\theta^{i\mid j}$, we have
\begin{equation*}
{\small
\begin{aligned}
&\begin{aligned}
\Theta&=\Theta_{i\mid j}\\
:&=\{\vartheta=\mathbb{E}\left[\mathcal{g}(d^{i},\mathbf{U},\mathbf{Z})\mid D=d^{j}\right]\mid \text{$\mathcal{g}$ is $\mathbb{P}$-integrable}\},
\end{aligned}\\
&T=T_{i\mid j}:=\left\{\varrho=(\mathcal{g}(d^{i},\mathbf{U},\mathbf{Z}),m_{j},\right.\\
&\qquad\qquad\qquad\left.a_{j}(\mathbf{X},\mathbf{Z}),a_{i}(\mathbf{X},\mathbf{Z}))\mid \text{$\mathcal{g}$ is $\mathbb{P}$-integrable}\right\},\\
&\theta=\theta^{i\mid j}:=\mathbb{E}\left[g(d^{i},\mathbf{U},\mathbf{Z})\mid D=d^{j}\right]\in\Theta_{i\mid j},\\
&\rho=\rho^{i\mid j}:=\left(g(d^{i},\mathbf{U},\mathbf{Z}),\mathbb{E}\left[\mathbf{1}_{\{D=d^{j}\}}\right],\right.\\
&\qquad\qquad\qquad\left.\mathbb{E}\left[\mathbf{1}_{\{D=d^{j}\}}|\mathbf{X},\mathbf{Z}\right],\mathbb{E}\left[\mathbf{1}_{\{D=d^{i}\}}|\mathbf{X},\mathbf{Z}\right]\right)\in T_{i\mid j}.
\end{aligned}}
\end{equation*}
Here, $\mathcal{g}(d^{i},\mathbf{U},\mathbf{Z})$, $a_{i}(\mathbf{X},\mathbf{Z})$ and $m_{j}$ are the arbitrary nuisance parameters, while $g(d^{i},\mathbf{U},\mathbf{Z})$, $\mathbb{E}\left[\mathbf{1}_{\{D=d^{i}\}}\mid \mathbf{X},\mathbf{Z}\right]$ and $\mathbb{E}\left[\mathbf{1}_{\{D=d^{j}\}}\right]$ are the corresponding true nuisance parameters we are interested in. Our aim is to construct the score functions $\psi$ such that the moments of the Gateaux derivative of $\psi$ w.r.t. $\mathcal{g}(d^{i},\mathbf{U},\mathbf{Z})$, $a_{i}(\mathbf{X},\mathbf{Z})$ and $m_{j}$ evaluating at $g(d^{i},\mathbf{U},\mathbf{Z})$, $\mathbb{E}\left[\mathbf{1}_{\{D=d^{i}\}}\mid \mathbf{X},\mathbf{Z}\right]$ and $\mathbb{E}\left[\mathbf{1}_{\{D=d^{j}\}}\right]$ are $0$, implying the Definition \ref{def:Neyman orthogonal score} holds. 

Before stating the score functions, we introduce some notations to simplify our expression. We define the estimate of $g(d^{i},\mathbf{x},\mathbf{z})$ as $\hat{g}(d^{i},\mathbf{x},\mathbf{z})$. Furthermore, we define $P_{i}(\mathbf{x},\mathbf{z})=\mathbb{E}[\mathbf{1}_{\{D=d^{i}\}}\mid \mathbf{X}=\mathbf{x},\mathbf{Z}=\mathbf{z}]$ and the corresponding estimate as $\hat{P}_{i}(\mathbf{x},\mathbf{z})$ for any $i=1,\cdots,n$. To find $\hat{P}_{i}(\mathbf{x},\mathbf{z})$, we can use any classification methods to obtain it. For example, when we use Logistic regression to estimate $\hat{P}_{i}(\mathbf{x},\mathbf{z})$, it becomes $1/\big[1+\exp\left(-\mathbf{w}_{x}^{T}\mathbf{x}-\mathbf{w}_{z}^{T}\mathbf{z}-\mathbf{w}\right)\big]$.
%
\begin{theorem}\label{thm:desired unbiased estimator}
The score function $\psi^{i}(W,\vartheta,\varrho)$ which can be used to recover an estimate of $\theta^{i}$ and satisfies the Definition \ref{def:Neyman orthogonal score} is
\begin{align}
{\small 
\vartheta-\mathcal{g}(d^{i},\mathbf{U},\mathbf{Z})-\frac{\mathbf{1}_{\{D=d^{i}\}}}{a_{i}(\mathbf{X},\mathbf{Z})}(Y-\mathcal{g}(d^{i},\mathbf{U},\mathbf{Z})), \label{eqt:orthogonal score function expectation}}
\end{align}
while the score function $\psi^{i\mid j}(W,\vartheta,\varrho)$ which can be used to recover an estimate of $\theta^{i\mid j}$ and satisfies the Definition \ref{def:Neyman orthogonal score} is
\begin{align}
&\frac{1}{m_{j}}\left\{\vartheta\mathbf{1}_{\{D=d^{j}\}}-\mathcal{g}(d^{i},\mathbf{U},\mathbf{Z})\mathbf{1}_{\{D=d^{j}\}}\right.\nonumber\\
&\quad\left.-\mathbf{1}_{\{D=d^{i}\}}\frac{a_{j}(\mathbf{X},\mathbf{Z})}{a_{i}(\mathbf{X},\mathbf{Z})}(Y-\mathcal{g}(d^{i},\mathbf{U},\mathbf{Z}))\right\}. \label{eqt:orthogonal score function conditional expectation}
\end{align}
\end{theorem}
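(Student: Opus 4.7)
The plan is to verify the two requirements of Theorem \ref{thm:desired unbiased estimator} separately for each score function: (i) that $\mathbb{E}[\psi(W,\theta,\rho)]=0$ at the true parameters, so that $\theta^i$ and $\theta^{i\mid j}$ are identified through the score, and (ii) that the Gateaux derivatives of $\mathbb{E}[\psi(W,\vartheta,\varrho)]$ with respect to each nuisance component vanish at $\varrho=\rho$, satisfying Definition \ref{def:Neyman orthogonal score}. Throughout I will use the models \eqref{eqt:causal model 1}--\eqref{eqt:causal model 2}, so $D=m(\mathbf{X},\mathbf{Z},\nu)$ does not depend on $\mathbf{U}$ once $(\mathbf{X},\mathbf{Z})$ is fixed, and I will rely on $\mathbb{E}[\mathbf{1}_{\{D=d^i\}}\xi\mid\mathbf{X},\mathbf{U},\mathbf{Z}]=0$, i.e.\ the conditional unconfoundedness implicit in \eqref{eqt:causal model 1}.

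For (i) with $\psi^{i}$, I would substitute $\vartheta=\theta^{i}=\mathbb{E}[g(d^{i},\mathbf{U},\mathbf{Z})]$, $\mathcal{g}=g$ and $a_{i}=P_{i}(\mathbf{X},\mathbf{Z})$; the first two terms cancel immediately. The residual piece $\mathbb{E}\bigl[\mathbf{1}_{\{D=d^{i}\}}P_{i}(\mathbf{X},\mathbf{Z})^{-1}(Y-g(d^{i},\mathbf{U},\mathbf{Z}))\bigr]$ is handled by tower-conditioning on $(\mathbf{X},\mathbf{U},\mathbf{Z})$: on $\{D=d^{i}\}$ the difference $Y-g(d^{i},\mathbf{U},\mathbf{Z})$ equals $\xi$ by \eqref{eqt:causal model 1}, and the conditional unconfoundedness zeros the expectation. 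For $\psi^{i\mid j}$ I would reuse the identity $\theta^{i\mid j}\,\mathbb{E}[\mathbf{1}_{\{D=d^{j}\}}]=\mathbb{E}[g(d^{i},\mathbf{U},\mathbf{Z})\mathbf{1}_{\{D=d^{j}\}}]$ already established in the IoC section to cancel the first two bracketed terms, and apply the same residual argument to the cross term $\mathbf{1}_{\{D=d^{i}\}}(P_{j}/P_{i})(Y-g(d^{i},\mathbf{U},\mathbf{Z}))$.

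For (ii) I would exploit that each score is affine in the outcome-regression nuisance $\mathcal{g}(d^{i},\mathbf{U},\mathbf{Z})$ and smooth in the propensity-type nuisances $(a_{i},a_{j},m_{j})$, so Gateaux derivatives reduce to elementary calculus. Differentiating in the direction $\Delta_{g}:=\mathcal{g}-g$ produces an integrand of the form $\Delta_{g}(\mathbf{U},\mathbf{Z})\bigl(a_{i}(\mathbf{X},\mathbf{Z})^{-1}\mathbf{1}_{\{D=d^{i}\}}-1\bigr)$ in the $\psi^{i}$ case, whose expectation vanishes by conditioning on $(\mathbf{X},\mathbf{U},\mathbf{Z})$ and using $\mathbb{E}[\mathbf{1}_{\{D=d^{i}\}}\mid\mathbf{X},\mathbf{U},\mathbf{Z}]=P_{i}(\mathbf{X},\mathbf{Z})$; an analogous paired expression $-\mathbf{1}_{\{D=d^{j}\}}+(P_{j}/P_{i})\mathbf{1}_{\{D=d^{i}\}}$ appears in the $\psi^{i\mid j}$ case and collapses to zero by the same conditioning. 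Differentiating with respect to $a_{i}$ or $a_{j}$ leaves a factor $\mathbf{1}_{\{D=d^{i}\}}(Y-g(d^{i},\mathbf{U},\mathbf{Z}))=\mathbf{1}_{\{D=d^{i}\}}\xi$, whose expectation vanishes by the same noise argument as in (i). Finally, differentiating with respect to $m_{j}$ yields a multiplicative correction $-\Delta_{m}/m_{j}^{2}$ times $\mathbb{E}[\psi^{i\mid j}(W,\theta^{i\mid j},\rho^{i\mid j})]$, which is already zero by part (i).

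The main obstacle I anticipate is the book-keeping for $\psi^{i\mid j}$: the ratio $a_{j}/a_{i}$ creates two distinct propensity nuisances whose Gateaux derivatives must be handled in separate directions, and the fact that the first-order term in $m_{j}$ only vanishes because the mean of the bracketed expression is already zero means (i) and (ii) are intertwined and must be executed in the right order. A secondary subtlety worth stating explicitly is that the whole argument relies on the slightly stronger moment condition $\mathbb{E}[\xi\mid D,\mathbf{X},\mathbf{U},\mathbf{Z}]=0$ rather than the weaker one written in \eqref{eqt:causal model 1}; I would spell this out as a standing assumption so that every tower-conditioning step cleanly kills the $\mathbf{1}_{\{D=d^{i}\}}\xi$ terms that appear throughout both parts of the proof.
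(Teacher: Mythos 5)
Your proposal is correct, and the verification half of it coincides step-for-step with what the paper actually does: the appendix proof of Theorem \ref{thm:desired unbiased estimator} checks the moment condition by reducing $\mathbf{1}_{\{D=d^{i}\}}(Y-g(d^{i},\mathbf{U},\mathbf{Z}))$ to $\mathbf{1}_{\{D=d^{i}\}}\xi$ and tower-conditioning, and then computes exactly the Gateaux derivatives you list ($\partial_{\mathcal{g}}$, $\partial_{a_{i}}$, $\partial_{a_{j}}$, $\partial_{m_{j}}$), with the $\partial_{m_{j}}$ term vanishing because it is proportional to the score itself, whose mean is already zero. The one structural difference is that the paper prepends a constructive layer you omit: it invokes a Riesz-representation result (if $\mathcal{g}\mapsto\mathbb{E}[m(X,\mathcal{g})]$ is linear there is an $\alpha(X)$ with $\mathbb{E}[m(X,\mathcal{g})]=\mathbb{E}[\alpha(X)\mathcal{g}(X)]$, and $\vartheta-m(X,\mathcal{g})-\alpha(X)[Y-\mathcal{g}(X)]$ is then orthogonal) and derives $\alpha=\mathbf{1}_{\{D=d^{i}\}}/P_{i}$ and $\alpha=\mathbf{1}_{\{D=d^{i}\}}P_{j}/P_{i}$ by a density-ratio computation. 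That derivation explains where the scores come from but is not logically needed to prove the theorem as stated, so your pure-verification route is complete. Two of your side remarks are also well taken and in fact expose assumptions the paper uses tacitly: the factorization $\mathbb{E}[\mathbf{1}_{\{D=d^{i}\}}\xi\mid\mathbf{X},\mathbf{U},\mathbf{Z}]=\mathbb{E}[\mathbf{1}_{\{D=d^{i}\}}\mid\mathbf{X},\mathbf{U},\mathbf{Z}]\,\mathbb{E}[\xi\mid\mathbf{X},\mathbf{U},\mathbf{Z}]$ that the paper performs requires exactly the strengthened condition $\mathbb{E}[\xi\mid D,\mathbf{X},\mathbf{U},\mathbf{Z}]=0$ you propose to state; and the identity $\mathbb{E}[\mathbf{1}_{\{D=d^{i}\}}\mid\mathbf{U},\mathbf{X},\mathbf{Z}]=P_{i}(\mathbf{X},\mathbf{Z})$, which both you and the paper need for the $\mathcal{g}$-direction derivative, is justified in the paper by an explicit conditional-independence argument ($\mathbf{U}$ independent of the treatment mechanism given $(\mathbf{X},\mathbf{Z})$) that you should likewise record as a standing assumption rather than read off informally from \eqref{eqt:causal model 2}.
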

We defer the detailed derivations in the supplementary.
Heuristically, we can recover the estimates of $\theta^{i}$ and $\theta^{i\mid j}$ (denoted as $\hat{\theta}_{w}^{i}$ and $\hat{\theta}_{w}^{i\mid j}$) from $\mathbb{E}\left[\psi^{i}(W,\vartheta,\varrho)\mid_{\vartheta=\theta,\varrho=\rho}\right]=0$ and $\mathbb{E}\left[\psi^{i\mid j}(W,\vartheta,\varrho)\mid_{\vartheta=\theta,\varrho=\rho}\right]=0$ respectively, which are:
%
%
\begin{align}
\hat{\theta}_{w}^{i}&=\frac{1}{N}\left\{   
\overset{N}{\underset{m=1}{\sum}}\hat{g}(d^{i},\mathbf{u}_{m},\mathbf{z}_{m}) \right.\nonumber\\
&\qquad\left.+\overset{N_{i}}{\underset{m=1}{\sum}}\frac{(y_{m}^{i}-\hat{g}(d^{i},\mathbf{u}_{m}^{i},\mathbf{z}_{m}^{i}))}{\hat{P}_{i}(\mathbf{x}_{m}^{i},\mathbf{z}_{m}^{i})}\right\},\label{eqt:desired unbiased estimator expectation}\\
\hat{\theta}_{w}^{i\mid j}&=\frac{1}{N_j}\left\{\overset{N_{j}}{\underset{m=1}{\sum}} \hat{g}(d^{i},\mathbf{u}_{m}^{j},\mathbf{z}_{m}^{j})\right.\nonumber\\
&\;\left.+\overset{N_{i}}{\underset{m=1}{\sum}}\frac{\hat{P}_{j}(\mathbf{x}_{m}^{i},\mathbf{z}_{m}^{i})}{\hat{P}_{i}(\mathbf{x}_{m}^{i},\mathbf{z}_{m}^{i})}\left[y_{m}^{i}-\hat{g}(d^{i},\mathbf{u}_{m}^{i},\mathbf{z}_{m}^{i})\right]\right\}.\label{eqt:desired unbiased estimator conditional expectation}
\end{align}

We call $\hat{\theta}_{w}^{i}$ and $\hat{\theta}_{w}^{i\mid j}$ in \eqref{eqt:desired unbiased estimator expectation} and \eqref{eqt:desired unbiased estimator conditional expectation} the IwC estimates. They are regularization unbiased when the residuals between the observed $Y$ and the estimate of $g(d^{i},\cdot,\cdot)$ are used as the regularization term. Besides, they are the consistent estimates (see the Remark \ref{remark:consistency}). Theoritically, we can study the consistency using \textbf{\textit{error decomposition}} (see in the supplementary). We also study the consistency with numerical results in the Section \ref{sec:empirical studies}.

\begin{remark}\label{remark:consistency}
$\hat{\theta}^{i}_{w}$ and $\hat{\theta}^{i\mid j}_{w}$ are the consistent estimates of $\theta^{i}$ and $\theta^{i\mid j}$ if $\hat{P}_{i}$ converges to $P_{i}$ and $\hat{g}$ converges to $g$ in probability (at rate $N^{-\frac{1}{4}}$) when $N$ and $N_{j}$ tend to infinity.
\end{remark}

Whenever the estimates of $\theta^i$ and $\theta^{i|j}$, i.e., $\hat{\theta}^i$ and $\hat{\theta}^{i|j}$ are available, we can estimate the Average Treatment Effect (ATE) and the Average Treatment Effect on the Treated (ATTE) as
\begin{equation}
{\small
\begin{aligned}
\hat{\textrm{ATE}}(i,k)&:=\hat{\theta}^{i,k}=\hat{\theta}^{i}-\hat{\theta}^{k}\\
\hat{\textrm{ATTE}}(i,k|j)&:=\hat{\theta}^{i,k\mid j}=\hat{\theta}^{i\mid j}-\hat{\theta}^{k\mid j}. \label{eqt:desired unbiased estimator ATE and ATTE}
\end{aligned}}
\end{equation}
For IoC formulation, the estimates of ATE and ATTE are denoted as $\hat{\theta}_{o}^{i,k}$ and $\hat{\theta}_{o}^{i,k\mid j}$ which can be computed using \eqref{eqt:IoC estimator for IwC formulation}. For IwC formulation, they are $\hat{\theta}_{w}^{i,k}$ and $\hat{\theta}_{w}^{i,k\mid j}$ computed by $\eqref{eqt:desired unbiased estimator expectation}$ and $\eqref{eqt:desired unbiased estimator conditional expectation}$ respectively.
\section{Experiments}\label{sec:empirical studies}
We now set out experiments to i) estimate the counterfactuals and the treatment effects under different settings and ii) assess the consistency and robustness properties under IwC formulation. 

Our comparisons are made across two aspects: 1) different data properties per \eqref{eqt:simulated data model 1} \& \eqref{eqt:simulated data model 3}, and 2) different choices of the map $\hat{g}$ per \eqref{eqt:causal model 1}. For 1), we generate simulated datasets that possess three main properties: a) different levels of causal effect the intervention $D$
causes on the outcome $Y$, b) different degrees of nonlinearity of this causal impact, and c) different tail heaviness in the distribution of the feature set $(\mathbf{X,U,Z})$. For 2), the various maps under testing include three most commonly used neural network nonlinear models: the Multi-layer Perception Network (MLP), the Convolutional Neural Network (CNN), and the Gated Recurrent Unit (GRU); we also contrast them with widely recognized classic models such as the ordinary least square (OLS) and OLS with LASSO and RIDGE regularization, and the decision-tree based models such as the Random Forest (RF) and one with boosting features, the xgboost (XGB).  

For every set of simulated data and a given choice of $g$, we report estimations of ATE \& ATTE per \eqref{eqt:desired unbiased estimator ATE and ATTE}, using both our IwC estimations
\eqref{eqt:desired unbiased estimator expectation} \eqref{eqt:desired unbiased estimator conditional expectation} and the IoC estimations \eqref{eqt:IoC estimator for IwC formulation}. All results are out-of-sample and we use $70\%$ of data as the training set and the remaining $30\%$ as the testing set. We use grid search to find the optimal hyperparameters of the linear models and the tree-based models. For all neural network-based models, we use the Bayesian optimization to find the optimal hyperparameters. The number of hidden layers ranges from $2$ to $7$ and the number of units for each layer from $50$ to $500$. The batch size is in integer multiples of $32$ and optimized within $[32,3200]$. We search the learning rate between $0.0001$ and $0.1$. The experiments are run on two Ubuntu HP Z4 Workstations each with Intel Core i9 10-Core CPU at 3.3GHz, 128G DIMM-2166 ECC RAM, and two sets of NVIDIA Quadro RTX 5000 GPU. The total computation time of Table \ref{table:light tail and heavy tail with alpha is 0.05 ATE} and Table \ref{table:ATE-alpha_beta real} is 177 hours, including all different sets of $\alpha$ and $\beta$, with each set containing $8$ models; Figure \ref{fig:consistency graph simulated} and Figure \ref{fig:histogram plot real} cost 163 hours in total.\\\\
\textbf{The Data Generating Process.}\label{sec:Models for simulated data} As the ground truth (the factuals and counterfactuals) is unavailable, we construct a data generating process (DGP) for our credit-related dataset similar to many causal learning works:
\begin{subequations}
{\small
\begin{align}
&Y=f(D)q(\mathbf{U},\mathbf{Z})+\xi, \label{eqt:simulated data model 1}\\
&q(\mathbf{U},\mathbf{Z})=\left\{\exp\left(\left|\mathbf{a}_{0}^{T}\mathbf{Z}\right|\right)\right.\nonumber\\
&\qquad\qquad\left.+e_{1}\log\left(e_{2}+k(\mathbf{Z})^{2}+|k(\mathbf{U})|^{\tau}\right)\right\}^{r},\label{eqt:simulated data model 1-q function} \\
&\begin{aligned}
D=\sigma\left(\lambda\left(\mathbf{a}_{1}^{T}\mathbf{X}+\left|\mathbf{a}_{2}^{T}\mathbf{Z}\right|^{\gamma}+\frac{b_{1}\; X_{9}}{1+|X_{3}|}+\frac{b_{2}\; X_{10}}{1+|X_{6}|}\right)+\nu\right),\label{eqt:simulated data model 2}
\end{aligned}\\
&f(D)=\alpha+(1-\alpha)\times\left[\beta D^{m}+(1-\beta)\exp(D^{n})\right],\label{eqt:simulated data model 3}
\end{align}}
\end{subequations}
where $k(\mathbf{Z})$ and $k(\mathbf{U})$ in \eqref{eqt:simulated data model 1-q function} are defined as
\begin{equation}
{\small
\begin{aligned}\label{eqt:simulated data model 1-q function details}
k(\mathbf{Z})&=\log\left(\left|(\mathbf{c}_{1}^{z})^{T}\mathbf{Z}+\sum{_{r=2}^4} (\mathbf{c}_{r}^{z})^{T}\bar{\mathbf{Z}}_{r}\right|\right)\\
&+(\mathbf{c}_{1}^{z})^{T}\log|\mathbf{Z}|+\sum{_{r=2}^4} (\mathbf{c}_{r}^{z})^{T}\log|\bar{\mathbf{Z}}_{r}|,\\
k(\mathbf{U})&=\log\left(\left|(\mathbf{c}_{1}^{u})^{T}\mathbf{U}+\sum{_{r=2}^4} (\mathbf{c}_{r}^{u})^{T}\bar{\mathbf{U}}_{r}\right|\right)\\
&+(\mathbf{c}_{1}^{u})^{T}\log|\mathbf{U}|+\sum{_{r=2}^4}(\mathbf{c}_{r}^{u})^{T}\log|\bar{\mathbf{U}}_{r}|.
\end{aligned}}
\end{equation}
The function $\sigma$ maps the features to the intervention variable $D$ such that $D$ takes five treatment levels $\left\{ d^{1}, d^{2}, d^{3}, d^{4}, d^{5}\right\}$. The confounding features set is a $20$-dimensional random vector $\mathbf{Z}=[Z_{1},\cdots,Z_{20}]^{T}$. Simultaneously, the outcome-specific feature set is a $10$-dimensional random vector $\mathbf{U}=[U_{1},\cdots,U_{10}]^{T}$ and the intervention-specific feature set is a $10$-dimensional random vector $\mathbf{X}=[X_{1},\cdots,X_{10}]^{T}$. All $(\mathbf{Z,U,X})$ are correlated random vectors with the correlation matrix parameterized by $C_{ij}=a+(1-a) \exp(-b |i-j|), a\in [0,1], b\in \mathbb{R}^{+}$. The parameter values of $a,b$ used to generate the correlation matrix, $(\lambda,\gamma, b_{1}, b_{2})$ in \eqref{eqt:simulated data model 2}, $(\mathbf{a}_{0},\tau, r, e_{1}, e_{2})$ in \eqref{eqt:simulated data model 1-q function} , $(\mathbf{a}_{1}, \mathbf{a}_{2})$ in \eqref{eqt:simulated data model 2}, and $(\left\{\mathbf{c}_i^{z}\right\}_{i=1}^4, \left\{\mathbf{c}_i^{u}\right\}_{i=1}^4)$ in \eqref{eqt:simulated data model 1-q function details} are deferred in the supplementary due to space constraints.
The quantity $\bar{\mathbf{Z}}_{r}$ (or $\bar{\mathbf{U}}_{r}$) in \eqref{eqt:simulated data model 1-q function details} is a column vector such that each entry is the product of $r$ elements taken from $\mathbf{Z}$ (or $\mathbf{U}$) without repetition. For example, $\bar{\mathbf{Z}}_{2}=[Z_{1}Z_{2},Z_{1}Z_{3},\cdots,Z_{19}Z_{20}]^{T}$, $\bar{\mathbf{Z}}_{3}=[Z_{1}Z_{2}Z_{3},Z_{1}Z_{2}Z_{4},\cdots,Z_{18}Z_{19}Z_{20}]^{T}$, etc.

Our DGP is reasonable for the following reasons:
1) In many causal works, the researchers never check if their DGP can appropriately fit the real-world data given in their papers (\cite{hainmueller2019much,hill2011bayesian,lim2018forecasting}). Our DGP fits the highly nonlinear and correlated real-world credit data very well. For example, when we fit the real-world dataset (the one used in our semi-synthetic experiment) to \eqref{eqt:simulated data model 2}, the relative mean square error (MSE) we obtained is $\textbf{7.9\%}$ which is very small. 2) Our DGP is generic enough. The functions in our DGP, such as the power/exponential/logarithm/inverse, are all out of lengthy testings of each individual's feature separately. 3) It allows us to control the degrees of causality and nonlinearity. First, the parameter $\alpha \in[0,1]$ in \eqref{eqt:simulated data model 3} controls the amount of the policy impact. The bigger $\alpha$ is, the smaller the impact. In the case $\alpha=1$, $Y$ is no longer a function of the intervention. Second, the parameters $\beta \in[0,1]$ and $n,m\in \mathbb{R}^{+}$ control the degree of nonlinearity for a fixed $\alpha$. For instance, when $m=1, n=2$, the smaller $\beta$ is, the larger the contribution from the term $\exp(D^{n})$, hence the more degree of nonlinearity.\\\\
\textbf{Causality vs. Nonlinearity.} \label{sec:Causality vs. Nonlinearity}
Table \ref{table:light tail and heavy tail with alpha is 0.05 ATE} compares the ATE estimations comprehensively across different choices of $\hat{g}$ and different data properties per \eqref{eqt:simulated data model 1} \& \eqref{eqt:simulated data model 3}.  Each number in column a) ``IoC" and column b) ``IwC" is a weighted average of the ATE estimation in relative errors w.r.t the true ATE, computed from $40,000$ observations and $100$ independent experiments for each individual setting:
\begin{equation}
{\small
\begin{aligned}
\label{eqt:weighted average of relative error general}
\frac{1}{M}\overset{M}{\underset{m=1}{\sum}}\;\underset{\underset{i\neq k}{1\leq i,k\leq n}}{\sum}
\left[\frac{\left|\theta^{i,k;m}\right|}{  \sum{_{\underset{i\neq k}{1\leq i,k\leq n}}} \left|\theta^{i,k;m}\right| }\left|\frac{\hat{\theta}^{i,k;m}}{\theta^{i,k;m}}-1\right|\right],
\end{aligned}}
\end{equation}
where $M$ is the number of experiments conducted, $\theta^{i,k;m}$ is the true value of $\theta^{i,k}$ in the $m^{\textrm{th}}$ experiment, and  $\hat{\theta}^{i,k;m}$ is the estimate of $\theta^{i,k}$ in the $m^{\textrm{th}}$ experiment based on IwC (or IoC). 
The ATTE cases are deferred in the supplementary.

Table \ref{table:light tail and heavy tail with alpha is 0.05 ATE} reports the results when the $\alpha$ and $\beta$ in \eqref{eqt:simulated data model 3} are fixed at different values. When we fix $\alpha=5\%$, the amount of impact of the intervention $D$ on the outcome $Y$ is large. We call it the ``strong causality" case. When the causal effects are strong, whether the data are light-tail distributed or heavy-tail distributed, mostly linear ($\beta=95\%$) or mostly nonlinear ($\beta=5\%$), and whether the choice of $\hat{g}$ is a linear model (OLS, LASSO, RIDGE), a tree-based model (RF, XGB) or a neural-network-based model (GRU, CNN, MLP), the IwC estimates always give superior ATE estimations. Similar observations can be made in the ``strong nonlinearity" case in Table \ref{table:light tail and heavy tail with alpha is 0.05 ATE}. In this case, we fix $\beta=5\%$, meaning that the impact of the intervention $D$ on the outcome $Y$ is very nonlinear, irrespective of whether the amount of the impact is larger or smaller. 

Even if one uses misspecified linear models such as the OLS, LASSO, and RIDGE on heavy-tail distributed highly-nonlinear data, there is at least a $30\%$ reduction in the estimation errors.  When the causal effects and nonlinearity are both strong in the data and one did use the right nonlinear models, e.g., RF, XGB, GRU, CNN, MLP, the error can be reduced by at least $82.7\%$ for the light-tail data and $75.8\%$ for the heavy-tail data.  The significant superiority of IwC vs. IoC ATE estimation across all our settings suggests the importance of considering ``action-response" relationship.\\\\
\textbf{Consistency of the Estimators.} \label{sec:Consistency of the Estimators} We demonstrate the consistency of the estimators through numerical experiments. We repeat the simulated experiment for $100$ times, then compute two quantities. The first quantity requires us to compute the values of $\theta^{i}$ and $\hat{\theta}_{w}^{i}$ based on $100$ experiments and find the corresponding relative errors for each $i$. We then find the mean of all the relative errors. The second quantity is the average of the standard deviation of the difference between $\theta^{i}$ and $\hat{\theta}_{w}^{i}$ of $100$ experiments. Indeed, the first quantity is computed as \eqref{eqt:average of mean relative error and average of standard deviation relative error general-a} and the second quantity is computed as \eqref{eqt:average of mean relative error and average of standard deviation relative error general-b} which are stated as follows:
\begin{subequations}
\begin{equation}\label{eqt:average of mean relative error and average of standard deviation relative error general-a}
\begin{aligned}
\scalebox{0.88}{
$\frac{1}{K}\overset{K}{\underset{i=1}{\sum}}\left|\frac{\overset{M}{\underset{m=1}{\sum}}{\hat{\theta}}_{w}^{i;m}}{\overset{M}{\underset{m=1}{\sum}}\theta^{i;m}}-1\right|$}
\end{aligned}
\end{equation}
\begin{equation}\label{eqt:average of mean relative error and average of standard deviation relative error general-b}
\begin{aligned}
\scalebox{0.85}{
$\frac{1}{K}\overset{K}{\underset{i=1}{\sum}}\sqrt{\frac{1}{M-1}\overset{M}{\underset{m=1}{\sum}}\left([\hat{\theta}_{w}^{i;m}-\theta^{i;m}]-\frac{1}{M}\overset{M}{\underset{s=1}{\sum}}[\hat{\theta}_{w}^{i;s}-\theta^{i;s}]\right)^{2}}$}.
\end{aligned}
\end{equation}
\end{subequations}
%
Here $K$ and $M$ in \eqref{eqt:average of mean relative error and average of standard deviation relative error general-a} and \eqref{eqt:average of mean relative error and average of standard deviation relative error general-b} are the number of estimators and the number of experiments respectively.

To show that the estimators are consistent, we compare the mean of all the relative errors versus the number of observational data and summarize the results in Figure \ref{fig:consistency graph simulated}.
We notice that when the number of observational data increases, the computed values in each plot becomes smaller except using linear regressors in computing the mean relative error between $\theta^{i}$ and $\hat{\theta}_{w}^{i}$. It implies that using linear regressors would cause biased estimations but not the case of using nonlinear regressors. This matches with our expectations since the generated dataset is nonlinear. The consistency analysis of $\hat{\theta}_{w}^{i\mid j}$ is deferred in the supplementary.
\begin{table}
\resizebox{\columnwidth}{!}{%
		\begin{tabular}{ccccccccccccccccccc}
			\\
			& & & & & Light tail & & & &\\
			\cmidrule(r){2-10}
			& \multicolumn{2}{c}{} & \multicolumn{2}{c}{$\underset{\Leftarrow}{\text{Weak causalities}}$} & \multicolumn{1}{c}{} & \multicolumn{2}{c}{$\underset{\Rightarrow}{\text{Weak nonlinearities}}$} & \multicolumn{2}{c}{} \\
			& \multicolumn{3}{c}{$(\alpha,\beta)=(95\%,5\%)$}& \multicolumn{3}{c}{$(\alpha,\beta)=(5\%,5\%)$} & \multicolumn{3}{c}{$(\alpha,\beta)=(5\%,95\%)$}\\
			\cmidrule(r){2-10}
			 \multirow{3}{*}{} & \multicolumn{2}{c}{Weighted avg.} & \multicolumn{1}{c}{Mean Err.} & \multicolumn{2}{c}{Weighted avg.} & \multicolumn{1}{c}{Mean Err.} & \multicolumn{2}{c}{Weighted avg.} & \multicolumn{1}{c}{Mean Err.}\\
			& \multicolumn{2}{c}{ATE} & \multicolumn{1}{c}{reduction} & \multicolumn{2}{c}{ATE} & \multicolumn{1}{c}{reduction} & \multicolumn{2}{c}{ATE} & \multicolumn{1}{c}{reduction}\\
			\midrule
			\multirow{1}{*}{Regressor} & a) \textbf{IoC} & b) \textbf{IwC} & $|$IwC/IoC-1$|$ & a) \textbf{IoC} & b) \textbf{IwC} & $|$IwC/IoC-1$|$ & a) \textbf{IoC} & b) \textbf{IwC} & $|$IwC/IoC-1$|$\\
			\midrule
			OLS&\percentage{0.45378843579525501850113755608618}&	\percentage{0.24501211891999399861141739620507}&	\percentage{0.46007412354919202668313005233358}&	\percentage{0.44870265033558798251078769681044}&	\percentage{0.2146161518872619911046939478183}&	\percentage{0.52169626872774199632942782045575}&	\percentage{0.4247893516927649937287014836329}&	\percentage{0.24970683511008601174196996907995}&	\percentage{0.41216314835808198768773991105263}\\
			LASSO&\percentage{0.45376466703373902200624456781952}&	\percentage{0.17231576689006100622236772323959}&	\percentage{0.62025300908400404775733250062331}&	\percentage{0.44870171103541800494696190071409}&	\percentage{0.20998250751831801230729013241216}&	\percentage{0.53202204860381296924742855480872}&	\percentage{0.42476115372055700536435551839531}&	\percentage{0.18552649473520099743240052703186}&	\percentage{0.563221605577293948918793375924}\\
			RIDGE&\percentage{0.45354394762373201333005567903456}&	\percentage{0.24136839491792599909913974443043}&	\percentage{0.46781696419379897866619444357639}&	\percentage{0.44859614857811602739445788756711}&	\percentage{0.21329934592058699682759481675021}&	\percentage{0.5245181069951949659468937170459}&	\percentage{0.42454277718186300694114265752432}&	\percentage{0.24609041864921399334065199582255}&	\percentage{0.42034011205472499028701349743642}\\
			RF&\percentage{0.66889589667602999956130815917277}&	\percentage{0.10486574718099299363682774810513}&	\percentage{0.84322560849587102627111789843184}&	\percentage{0.67474144687772397155356429720996}&	\percentage{0.089945035483665294884225716032233}&	\percentage{0.86669703499040395833219463384012}&	\percentage{0.67396183573126899890581853469484}&	\percentage{0.10710277363893000446282144366705}&	\percentage{0.84108480931606399710886989851133}\\
			XGB&\percentage{0.8811861302905830140019816099084}&	\percentage{0.12049915620030099416837288117677}&	\percentage{0.86325345797196595398759200179484}&	\percentage{0.87545170054012699800694008445134}&	\percentage{0.10103336611988500415293401601957}&	\percentage{0.88459287239084605225514224002836}&	\percentage{0.88126829301533704530413615430007}&	\percentage{0.1266239970618620036546531082422}&	\percentage{0.85631617741674603827561895741383}\\
			GRU&\percentage{0.507650443555306973841823037219}&	\percentage{0.1702215066094180107025835013701}&	\percentage{0.66468756450347898923070033561089}&	\percentage{0.20534463196282198849118572070438}&	\percentage{0.026447635281974298665508271710678}&	\percentage{0.87120366853922504013496563857188}&	\percentage{0.49520486049449202736738584462728}&	\percentage{0.15878251009696300877216401659098}&	\percentage{0.67935995228639400878023479890544}\\
			CNN&\percentage{0.45040158570583899155792551027844}&	\percentage{0.15074126767600701248817074429098}&	\percentage{0.66531807955388100328519840331865}&	\percentage{0.17681796113132200454209908002667}&	\percentage{0.030573892043141700364250468169303}&	\percentage{0.82708831247955294507789858471369}&	\percentage{0.43917636965661299708330034263781}&	\percentage{0.14589040482609100246591538052598}&	\percentage{0.66780907419913804190514383662958}\\
			MLP&\percentage{0.47083177479931398456969304788799}&	\percentage{0.15043319100870300442096549886628}&	\percentage{0.68049481989013205218697066811728}&	\percentage{0.16101123734757000205419785743288}&	\percentage{0.023521137092784898808828231153711}&	\percentage{0.85391617703048494636419718517573}&	\percentage{0.45281468218747900067455702810548}&	\percentage{0.14340945527709800244586801909463}&	\percentage{0.68329327444881104991480924581992}\\
			\bottomrule
	\end{tabular}
	\label{table:light tail and heavy tail with alpha is 0.05 ATE-sub1}
    }
\resizebox{\columnwidth}{!}{%
		\begin{tabular}{ccccccccccccccccccc}
			\\
			& & & & & Heavy tail & & & &\\
			\cmidrule(r){2-10}
			 & \multicolumn{2}{c}{} & \multicolumn{2}{c}{$\underset{\Leftarrow}{\text{Weak causalities}}$} & \multicolumn{1}{c}{} & \multicolumn{2}{c}{$\underset{\Rightarrow}{\text{Weak nonlinearities}}$} & \multicolumn{2}{c}{}\\
			& \multicolumn{3}{c}{$(\alpha,\beta)=(95\%,5\%)$}& \multicolumn{3}{c}{$(\alpha,\beta)=(5\%,5\%)$}& \multicolumn{3}{c}{$(\alpha,\beta)=(5\%,95\%)$}\\
			\cmidrule(r){2-10}
			 \multirow{3}{*}{} & \multicolumn{2}{c}{Weighted avg.} & \multicolumn{1}{c}{Mean Err.} & \multicolumn{2}{c}{Weighted avg.} & \multicolumn{1}{c}{Mean Err.} & \multicolumn{2}{c}{Weighted avg.} & \multicolumn{1}{c}{Mean Err.}\\
			& \multicolumn{2}{c}{ATE} & \multicolumn{1}{c}{reduction} & \multicolumn{2}{c}{ATE} & \multicolumn{1}{c}{reduction} & \multicolumn{2}{c}{ATE} & \multicolumn{1}{c}{reduction}\\
			\midrule
			\multirow{1}{*}{Regressor} & a) \textbf{IoC} & b) \textbf{IwC} & $|$IwC/IoC-1$|$ & a) \textbf{IoC} & b) \textbf{IwC} & $|$IwC/IoC-1$|$ & a) \textbf{IoC} & b) \textbf{IwC} & $|$IwC/IoC-1$|$ \\
			\midrule
			OLS&	\percentage{0.87579824898634295315957842831267}&	\percentage{0.59945875110420698828050944939605}&	\percentage{0.31552871703268797753949570505938}&	\percentage{0.820860379507617032857069716556}&	\percentage{0.52635638819534702470548381825211}&	\percentage{0.3587747668963189862090246151638}&	\percentage{0.88654680032690402757822312196367}&	\percentage{0.61864461525902503247209551773267}&	\percentage{0.30218617332902497629731897177408}\\
			LASSO&	\percentage{0.87575611391888996681842627367587}&	\percentage{0.49118978236447502494499417480256}&	\percentage{0.43912491781933699064310872017813}&	\percentage{0.79846025193485103965684857030283}&	\percentage{0.47246010337913602050718964164844}&	\percentage{0.40828600768258899167406639207911}&	\percentage{0.88650891918578500483505422380404}&	\percentage{0.5199547136850599660107263844111}&	\percentage{0.41348056129811600056811471404217}\\
			RIDGE&	\percentage{0.87244882039910298665574828191893}&	\percentage{0.59511810918714702633991464608698}&	\percentage{0.31787619482950202343118917269749}&	\percentage{0.81798904124894900036224498762749}&	\percentage{0.52311824972008302836457005469128}&	\percentage{0.36048256964254798973001925332937}&	\percentage{0.88313155925387698541584313716157}&	\percentage{0.61424643603569994976254520224757}&	\percentage{0.30446780029619502361271088375361}\\
			RF&
\percentage{0.67166438579525000740488849260146}&	\percentage{0.13841384193198899565935278133111}&	\percentage{0.79392410129337498414514584510471}&	\percentage{0.67985040176652200560880601187819}&	\percentage{0.10251997209528600218941818411622}&	\percentage{0.8492021600209409548654093669029}&	\percentage{0.68069781311770305443076267692959}&	\percentage{0.15287456712810298942173403702327}&	\percentage{0.77541492835431002905721697970876}\\
			XGB&	\percentage{0.86865675935377495431310990170459}&	\percentage{0.20430692547923298696410654429201}&	\percentage{0.76480131734515599539747654489474}&	\percentage{0.87263698576165205089694154594326}&	\percentage{0.14484507913252700483930368591245}&	\percentage{0.83401450832833501358720695861848}&	\percentage{0.86900521678775299871944071128382}&	\percentage{0.20383578616928199611635363908135}&	\percentage{0.765437787677784031714622869913}\\
			GRU&	\percentage{0.55083601882226196355674119331525}&	\percentage{0.20724814220336498782337741886295}&	\percentage{0.62375709808069301764987812930485}&	\percentage{0.21697864365817298715199967773515}&	\percentage{0.052481876420457998499990992513631}&	\percentage{0.75812423040519294481498491222737}&	\percentage{0.51492474750230299296305247480632}&	\percentage{0.22993091833637999132911033939308}&	\percentage{0.55346694939078899810880329823704}\\
			CNN&	\percentage{0.48263201735741501074983261787565}&	\percentage{0.17764145473603198954748449978069}&	\percentage{0.63193188941611599940273436004645}&	\percentage{0.20845909899752299021535861811572}&	\percentage{0.037983427455799301242223009467125}&	\percentage{0.81778954414337801104295522236498}&	\percentage{0.48746738927339799962723532189557}&	\percentage{0.17868702390907600929814691426145}&	\percentage{0.63343799433348502159191184546216}\\
			MLP&	\percentage{0.48909724121108200112217900823453}&	\percentage{0.19805846786084499444591244809999}&	\percentage{0.59505298502518300107766435758094}&	\percentage{0.18542023797748699576182218606846}&	\percentage{0.030109376254280000029250885518195}&	\percentage{0.83761548047448997067476739175618}&	\percentage{0.4436594283723930276508440329053}&	\percentage{0.19812418539310699960864781132841}&	\percentage{0.55343181566107002566923256381415} \\
			\bottomrule
	\end{tabular}
	\label{table:light tail and heavy tail with alpha is 0.05 ATE-sub2}
     }
     \caption{Comparison of the ATE estimations per \eqref{eqt:weighted average of relative error general} with different \textbf{causalities} and \textbf{nonlinearities} in simulated data with different tail heaviness (light tail \& heavy tail). 
    	$\text{Number of observational data: }N=40000$. $\text{Number of simulated experiments: }M=100$.
     }
     \label{table:light tail and heavy tail with alpha is 0.05 ATE}
\end{table}
\begin{figure}
	\begin{subfigure}{.48\textwidth}
		\centering
		\includegraphics[width=50mm,scale=0.5]{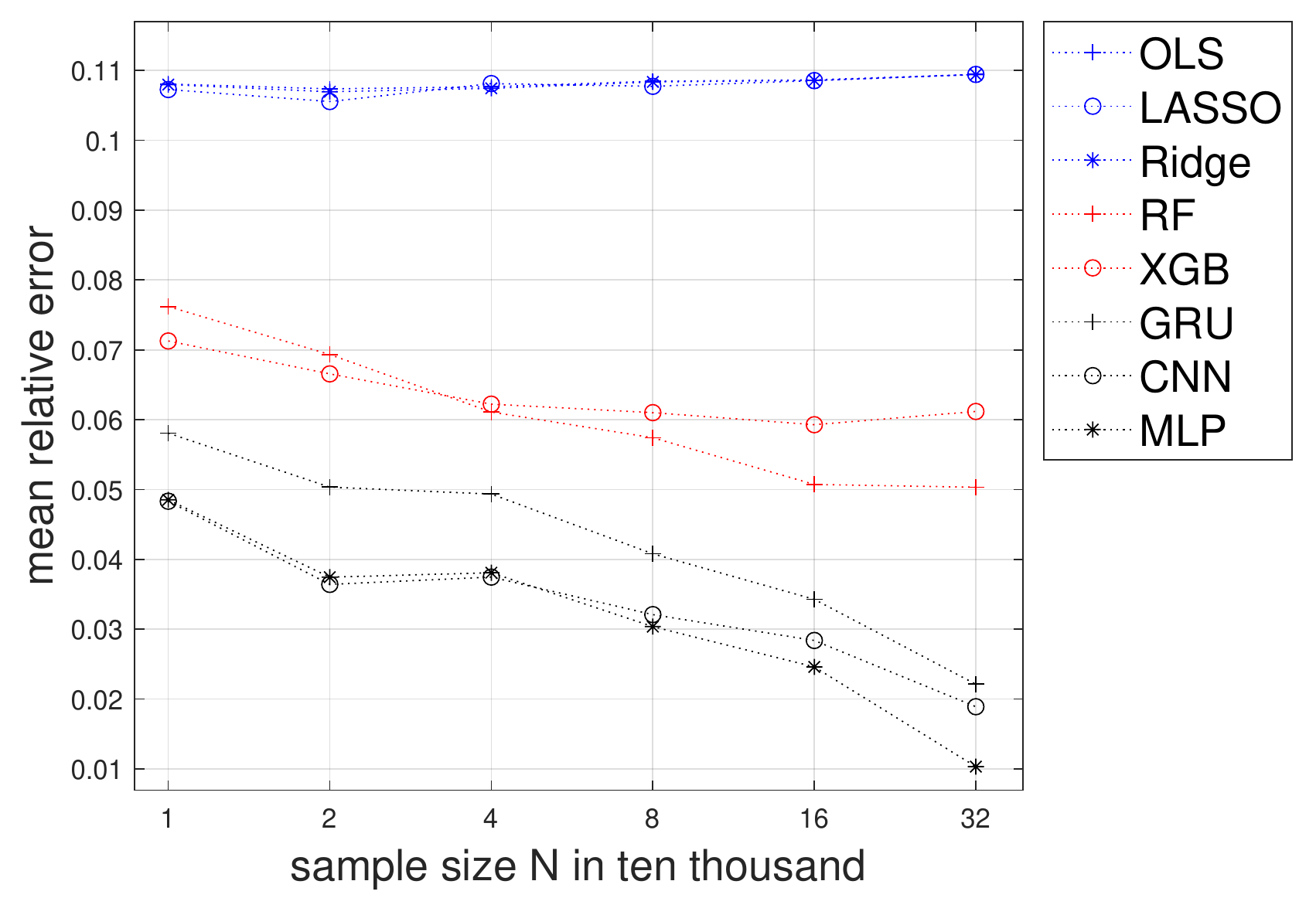}
		\caption{The mean relative error computed by \eqref{eqt:average of mean relative error and average of standard deviation relative error general-a} vs. number of observations.}\label{fig:consistency expectation graph simulated}
	\end{subfigure}
	\quad
	\begin{subfigure}{.48\textwidth}
		\centering
		\includegraphics[width=50mm,scale=0.5]{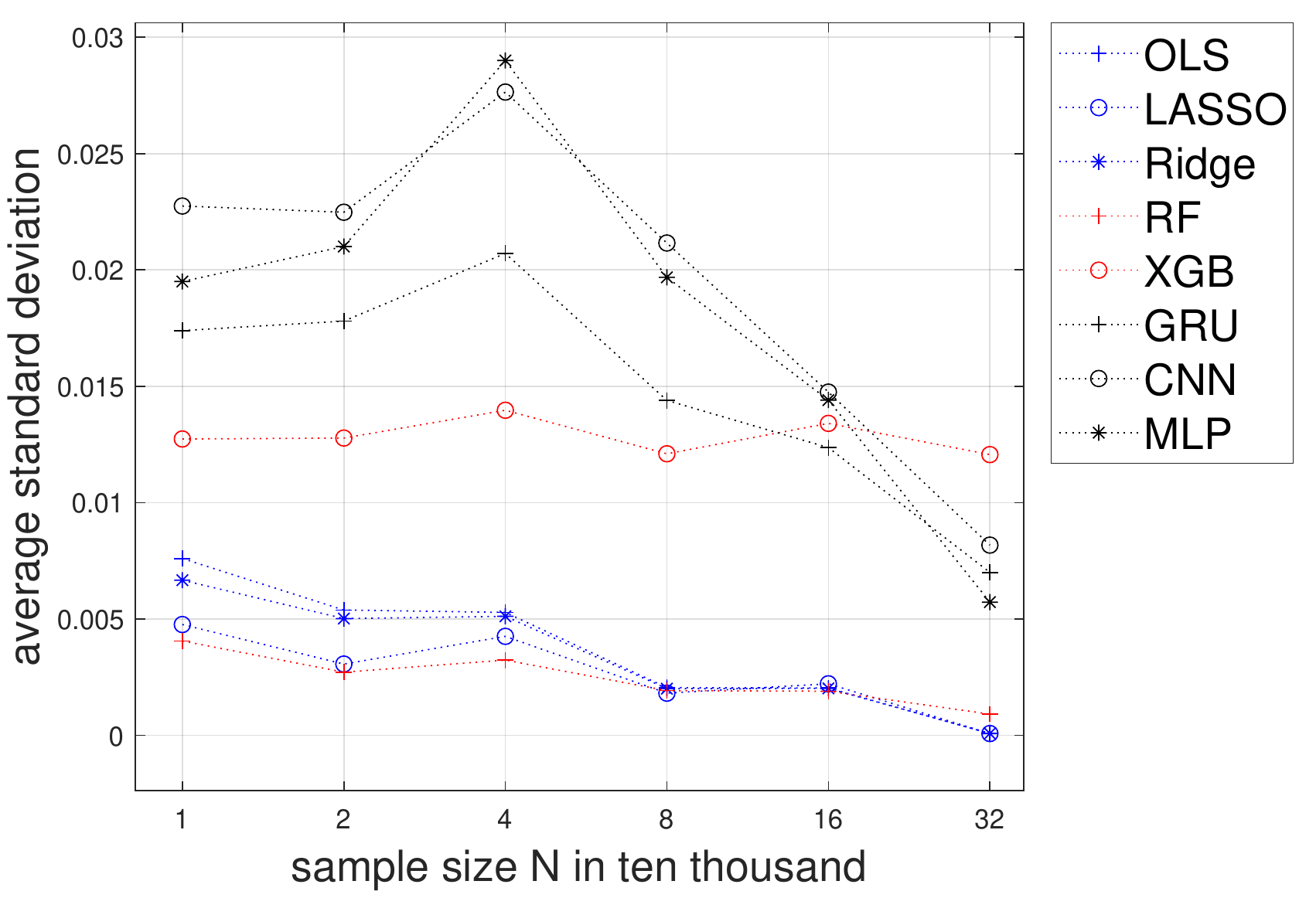}
		\caption{The average standard deviation computed by \eqref{eqt:average of mean relative error and average of standard deviation relative error general-b} vs. number of observations.}\label{fig:consistency expectation_std graph simulated}
	\end{subfigure}
	\caption{Consistency analysis of $\hat{\theta}_{w}^{i}$ in \eqref{eqt:desired unbiased estimator expectation} with $100$ experiments vs. number of observations $N$. $\alpha=0.05$ and $\beta=0.05$ in \eqref{eqt:simulated data model 3}; $N\in\{10000, 20000, 40000, 80000, 160000, 320000\}$.}
	\label{fig:consistency graph simulated}
\end{figure}\\\\
\textbf{Testing on Real-world Data.}\label{sec:Empirical experiment}
We now apply our method to a unique real-world anonymous dataset kindly provided by one of the largest global technology firms that operates in both the e-commerce business and the lending business. The dataset contains observational records of $400,000$ concurrent customers as of writing this paper. The feature dimension of the raw data is $1159$, including 1) the borrowers' shopping, purchasing and payment histories; 2) credit lines, interest charges and financing application decisions set by the lender's decision algorithm;  3) outstanding amounts and delinquency records. After autoencoding, the dimension of the feature set is reduced to $40$, which will be our $(\mathbf{Z,U,X})$. Descriptive statistics of a few representative and important features are stated in Table \ref{table:statistics of variables}.

In the field of credit risk analysis, a default outcome is defined w.r.t. a specific credit event.
The event in this study is the “three-month delinquent” event. It is the borrowers' payment for any of their outstanding loans within the next three months, no matter the borrowers are late on the payment or not.
The intervention variable is the credit line set by the lender's lending algorithm, at the time the customer apply for shopping credit to finance their online purchases.

Different from simulation studies, the true $\theta^{i}$ and $\theta^{i\mid j}$ are unavailable since the counterfactuals can never be observed in the real application. As such, we adopt the semi-synthetic approach to generate the ground truth of counterfactuals, as commonly used in the literature (e.g., \cite{hill2011bayesian,johansson2016learning,louizos2017causal} and references therein). The details of the semi-synthetic setup can be found in the supplementary.

The results reported in Figure \ref{fig:histogram plot real} and Table \ref{table:ATE-alpha_beta real} are strikingly encouraging, given these are from the real-world data. Not only do we see a single-digit percentage estimation error in all settings whenever the IwC estimates are used, we see both significant and robust error reductions compared to the IoC estimates.

\begin{table}	
	\resizebox{\columnwidth}{!}{%
		\centering
		\begin{tabular}{cccccccccccccccc}
			\toprule 
			$\mathbf{\textit{N}}$ & $\textbf{mean}$ & $\textbf{std}$ & $\textbf{min}$ & $\textbf{max}$ & $\mathbf{5\%}$ & $\mathbf{25\%}$ & $\mathbf{50\%}$ & $\mathbf{75\%}$ & $\mathbf{95\%}$ \\
			\midrule 
			{Credit line} &\num[round-precision=1]{4750.7639424999997572740539908409}&	\num[round-precision=1]{2383.1878141451397823402658104897}& \num[round-precision=0]{100.0} & \num[round-precision=0]{50000.0}&	\num[round-precision=0]{600.0} & \num[round-precision=0]{3000.0}&	\num[round-precision=0]{4800.0} & \num[round-precision=0]{6000.0}&	\num[round-precision=0]{9000.0}\\
			{Max payment} & \num[round-precision=1]{506.54839185009899438227876089513}&	\num[round-precision=1]{2694.5432427953501246520318090916}& \num[round-precision=0]{0.0} & \num[round-precision=0]{997500.0}&	\num[round-precision=0]{0.0} & \num[round-precision=0]{0.0}&	\num[round-precision=1]{117.6} & \num[round-precision=1]{319.48}&	\num[round-precision=0]{2499.0}\\
			{Total payments} & \num[round-precision=1]{1134.1232325749899700895184651017}&	\num[round-precision=1]{14760.94171247819940617773681879}&	\num[round-precision=0]{0.0} & \num[round-precision=0]{6759414.0}&	\num[round-precision=0]{0.0} & \num[round-precision=0]{0.0}&	\num[round-precision=1]{198.93} & \num[round-precision=1]{796.84}&	\num[round-precision=1]{4397.8024999999997817212715744972}\\
			{Total order price} & \num[round-precision=1]{3227.4340653750300589308608323336}&	\num[round-precision=1]{40544.43284645889798412099480629}&	\num[round-precision=0]{0.0} & \num[round-precision=0]{24459650.0} & \num[round-precision=1]{55.5} & \num[round-precision=1]{347.38} &	\num[round-precision=1]{1115.7699999999999818101059645414} & \num[round-precision=1]{3268.56} & \num[round-precision=1]{11524.706999999900290276855230331}\\
			{$\#$ of days ordering} & \num[round-precision=1]{2.9694449999999998901500930514885} & \num[round-precision=1]{4.4781264526542496895444855908863} & \num[round-precision=0]{0.0} & \num[round-precision=0]{87.0} &	\num[round-precision=0]{0.0} & \num[round-precision=0]{0.0} &	\num[round-precision=0]{2.0} & \num[round-precision=0]{4.0} &	\num[round-precision=0]{11.0} \\ 
			\bottomrule
	\end{tabular}}
\caption{The mean, standard deviation, minimum, maximum and $5\%$, $25\%$, $50\%$, $75\%$, $95\%$ quantiles of a few selected variables (records within 3 months) in the data.}
\label{table:statistics of variables}
\end{table}
\begin{figure}
	\begin{subfigure}{.49\textwidth}
		\centering
		\includegraphics[width=70mm,scale=0.7]{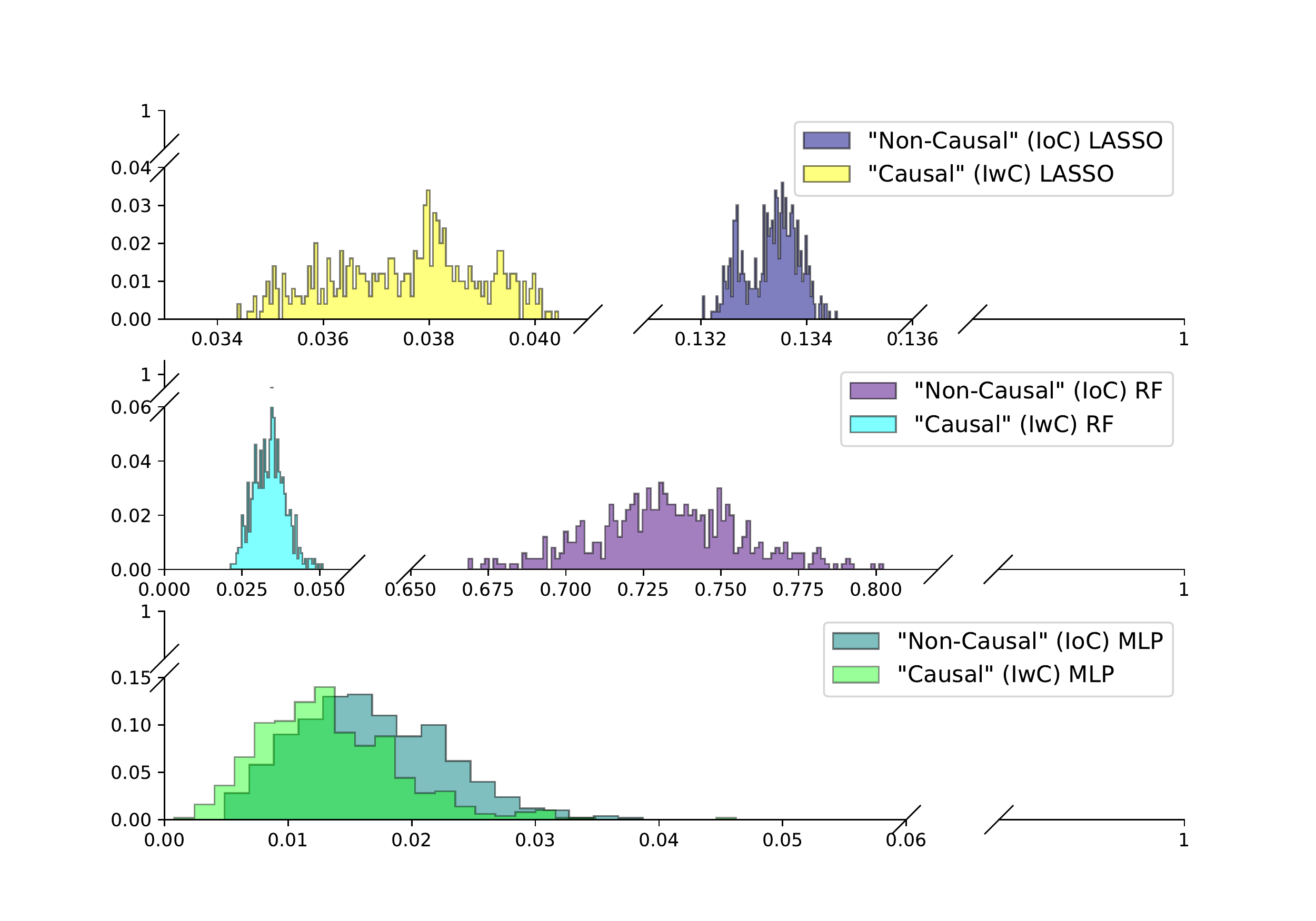}
		\caption{Weighted average of relative err. of estimated ATE using IoC and IwC for different models; $N=160000$, $M=500$, and $\alpha=0.05$ and $\beta=0.05$ in \eqref{eqt:simulated data model 3}.}\label{fig:ATE_3_model_histogram}
	\end{subfigure}
	\quad
	\begin{subfigure}{.49\textwidth}
		\centering
		\includegraphics[width=70mm,scale=0.7]{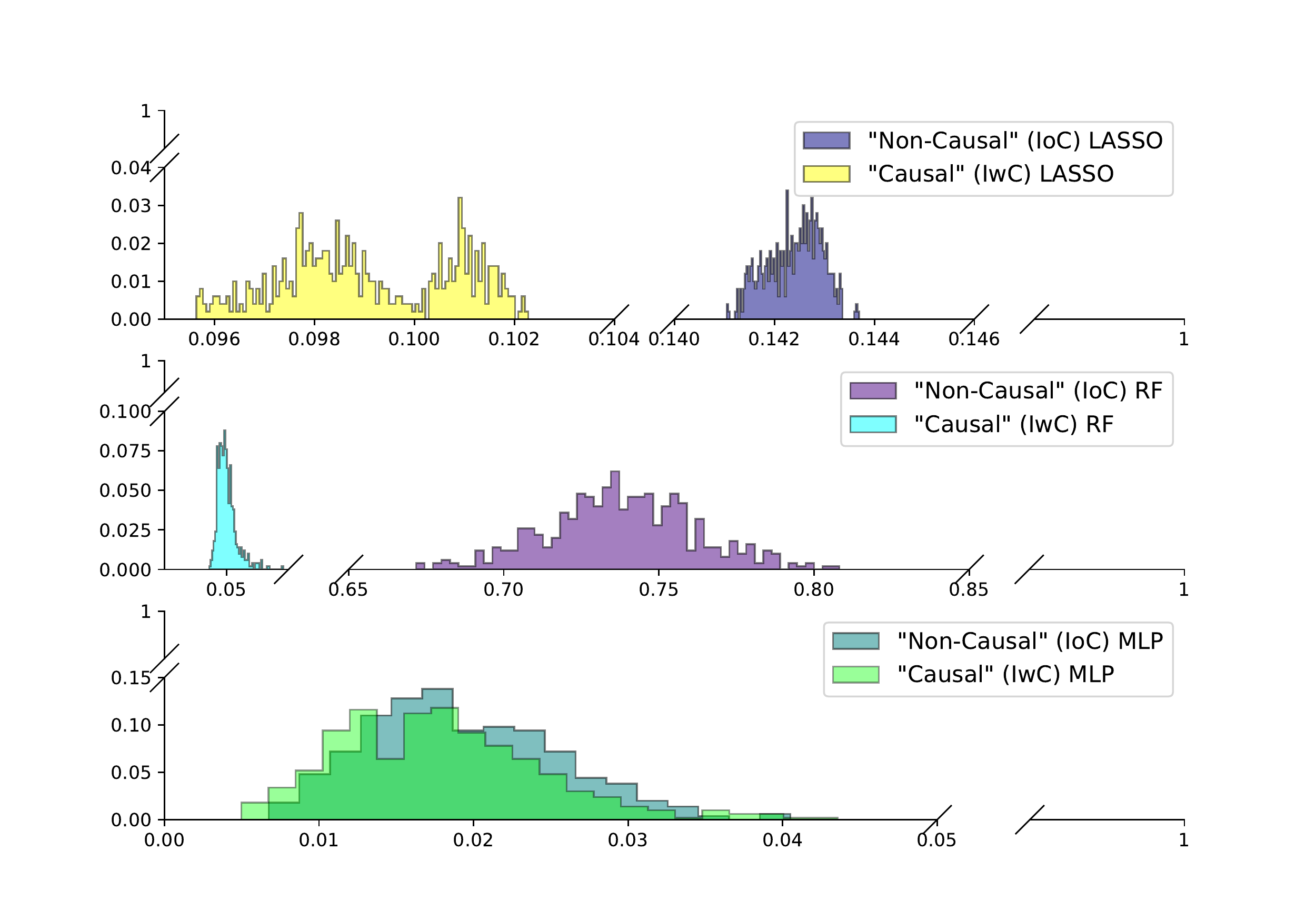}
		\caption{Weighted average of relative err. of estimated ATTE using IoC and IwC for different models; $N=160000$, $M=500$, and $\alpha=0.05$ and $\beta=0.05$ in \eqref{eqt:simulated data model 3}.}\label{fig:ATTE_3_model_histogram}
	\end{subfigure}
	\caption{Frequency histogram of weighted average of relative error of estimated ATE and ATTE for three models: LASSO, random forest (RF) and multilayer perception (MLP) for $500$ experiments.}
	\label{fig:histogram plot real}
\end{figure}
\begin{table}
\resizebox{1.05\linewidth}{!}{%
    \begin{subtable}{\textwidth}
        \centering
\begin{tabular}{cccccccccccccccccc}
			& \multicolumn{6}{c}{$\alpha=5\%$}\\
			& \multicolumn{3}{c}{$\beta=5\%$}& \multicolumn{3}{c}{$\beta=50\%$}\\
			\cmidrule(r){2-7}
			\multirow{3}{*}{} & \multicolumn{2}{c}{Weighted avg.} & \multicolumn{1}{c}{Mean Err.} & \multicolumn{2}{c}{Weighted avg.} & \multicolumn{1}{c}{Mean Err.}\\
			& \multicolumn{2}{c}{ATE} & \multicolumn{1}{c}{reduction} & \multicolumn{2}{c}{ATE} & \multicolumn{1}{c}{reduction}\\
			\midrule
			\multirow{1}{*}{Regressor} & a) \textbf{IoC} & b) \textbf{IwC} & $|$IwC/IoC-1$|$ & a) \textbf{IoC} & b) \textbf{IwC} & $|$IwC/IoC-1$|$ \\
			\midrule
			OLS&\percentage{0.13620436745169600678551091732515}&	\percentage{0.0481091021517250985728431089683}&	\percentage{0.64678737509069594846522477382678}&	\percentage{0.13839795500234999825472925749636}&	\percentage{0.062670263382629098902754094524425}&	\percentage{0.54717348690907396946414564808947}\\
			LASSO&\percentage{0.13620433696970599135589452544082}&	\percentage{0.047295645366292597688850918302705}&	\percentage{0.65275962264834597181817343880539}&	\percentage{0.13839788464693600356092417769105}&	\percentage{0.062148694382107697031791815334145}&	\percentage{0.55094187645531100816498337735538}\\
			RIDGE&\percentage{0.13625649277802701075934521668387}&	\percentage{0.04777657028565240188155982536955}&	\percentage{0.64936298218474997057114705967251}&	\percentage{0.13845748866090099027381654650526}&	\percentage{0.062273637322660196913304275767587}&	\percentage{0.55023279762660204372792804861092}\\
			RF&\percentage{0.73511112016960600268333791973419}&	\percentage{0.029506367768078300972689476111555}&	\percentage{0.95986135026596897112938222562661}&	\percentage{0.73960342712801296904956416256027}&	\percentage{0.036671399265492100882912751558251}&	\percentage{0.95041748331549402717399743778515}\\
			XGB&\percentage{0.88246385208120003973419898102293}&	\percentage{0.058625680937386202840055915430639}&	\percentage{0.93356591230436902772993335020146}&	\percentage{0.8831842470075389472938809376501}&	\percentage{0.071207378397275894466034174001834}&	\percentage{0.91937426574517600030844732827973}\\
			GRU&\percentage{0.040660479526187502663514550249602}&	\percentage{0.024079644116821400517958196019208}&	\percentage{0.40778750281799097665569320270151}&	\percentage{0.073209899482368695489142851329234}&	\percentage{0.032448435101254398393866296146371}&	\percentage{0.5567753086579629862029605646967}\\
			CNN&\percentage{0.032310557834357801765268192184521}&	\percentage{0.021995559053785900677224773858143}&	\percentage{0.31924545634440398833575613934954}&	\percentage{0.045520963964540299417915747426377}&	\percentage{0.033818937719009602160813443560983}&	\percentage{0.25706894640118399530948067877034}\\
			MLP&\percentage{0.031881653343824201130018991534598}&	\percentage{0.020409235825114401680080433720832}&	\percentage{0.35984386992063299715027824277058}&	\percentage{0.036809975519078601269296058262626}&	\percentage{0.03371998381337740213758991103532}&	\percentage{0.083944410777986797067562463325885}\\
			\bottomrule
\end{tabular}
	\label{table:ATE-alpha_beta real-sub1}
    \end{subtable}
    }
\resizebox{1.05\linewidth}{!}{%
    \begin{subtable}[h]{\textwidth}
    \centering
\begin{tabular}{cccccccccccccccccc}
			& \multicolumn{6}{c}{$\beta=5\%$}\\
			& \multicolumn{3}{c}{$\alpha=5\%$}& \multicolumn{3}{c}{$\alpha=50\%$}\\
			\cmidrule(r){2-7}
			\multirow{3}{*}{} & \multicolumn{2}{c}{Weighted avg.} & \multicolumn{1}{c}{Mean Err.} & \multicolumn{2}{c}{Weighted avg.} & \multicolumn{1}{c}{Mean Err.}\\
			& \multicolumn{2}{c}{ATE} & \multicolumn{1}{c}{reduction} & \multicolumn{2}{c}{ATE} & \multicolumn{1}{c}{reduction}\\
			\midrule
			\multirow{1}{*}{Regressor} & a) \textbf{IoC} & b) \textbf{IwC} & $|$IwC/IoC-1$|$ & a) \textbf{IoC} & b) \textbf{IwC} & $|$IwC/IoC-1$|$ \\
			\midrule
			OLS&	\percentage{0.13620436745169600678551091732515}&	\percentage{0.0481091021517250985728431089683}&	\percentage{0.64678737509069594846522477382678}&	\percentage{0.13974563420029201266103768830362}&	\percentage{0.063989097907898598305287407583819}&	\percentage{0.5421030626532099550018983791233}\\
			LASSO&	\percentage{0.13620433696970599135589452544082}&	\percentage{0.047295645366292597688850918302705}&	\percentage{0.65275962264834597181817343880539}&	\percentage{0.13974557163758599687142236689397}&	\percentage{0.061385383136943298676424518589556}&	\percentage{0.5607346807658489806769352981064}\\
			RIDGE&	\percentage{0.13625649277802701075934521668387}&	\percentage{0.04777657028565240188155982536955}&	\percentage{0.64936298218474997057114705967251}&	\percentage{0.13979891524026200055885738038342}&	\percentage{0.063564481952010198906322102629929}&	\percentage{0.54531491290353495404730210793787}\\
			RF&
\percentage{0.73511112016960600268333791973419}&	\percentage{0.029506367768078300972689476111555}&	\percentage{0.95986135026596897112938222562661}&	\percentage{0.73512538865047094560623008874245}&	\percentage{0.039700388159261697229585053037226}&	\percentage{0.94599507951678396544537008594489}\\
			XGB&	\percentage{0.88246385208120003973419898102293}&	\percentage{0.058625680937386202840055915430639}&	\percentage{0.93356591230436902772993335020146}&	\percentage{0.88429452898837301866308280295925}&	\percentage{0.075342931143127706006268340388488}&	\percentage{0.91479882700470904755007950370782}\\
			GRU&	\percentage{0.040660479526187502663514550249602}&	\percentage{0.024079644116821400517958196019208}&	\percentage{0.40778750281799097665569320270151}&	\percentage{0.063793216871499094922093320292333}&	\percentage{0.029842160300511198778306010126471}&	\percentage{0.53220480540081005482733189637656}\\
			CNN&	\percentage{0.032310557834357801765268192184521}&	\percentage{0.021995559053785900677224773858143}&	\percentage{0.31924545634440398833575613934954}&	\percentage{0.047935133163496401742520447442075}&	\percentage{0.031026474344767098617614564659561}&	\percentage{0.3527404161172860153072861066903}\\
			MLP&	\percentage{0.031881653343824201130018991534598}&	\percentage{0.020409235825114401680080433720832}&	\percentage{0.35984386992063299715027824277058}&	\percentage{0.037150202455738398632512087260693}&	\percentage{0.027669049428001701168566839328378}&	\percentage{0.25521134209249102253735941303603}\\
			\bottomrule
\end{tabular}
	\label{table:ATE-alpha_beta real-sub2}
     \end{subtable}
     }
	\caption{Comparison of the ATE estimations per \eqref{eqt:weighted average of relative error general} under different \textbf{nonlinearities} and \textbf{causalities} in semi-synthetic data. The upper table is presented when $\alpha$ is fixed at $5\%$ while the lower table is presented when $\beta$ is fixed at $5\%$. $\text{Number of observational data: }N=80000$. $\text{Number of experiments: }M=100$.}
	\label{table:ATE-alpha_beta real}
\end{table}

\section{Conclusion}\label{sec:conclusion}
This paper presents the first retail credit risk study that estimates the action-response causality from observational data of both a) the e-commerce lender's credit decision records and b) their borrowers' purchase, borrowing, and payment records. Our study shows that the confounding effects between the lender's credit decisions and the borrowers' credit risks, if overlooked, would result in significant biases in risk assessment. Using our IwC formulations, the biases can be reduced to a few percentages. The larger the dataset is, the higher the error reduction. The nature of the current study is about state estimation. Our future study will be towards the decision making problem built on the current framework.

\section{Acknowledgments}\label{sec:Acknowledgments}
Qi WU acknowledges the support from the JD Digits - CityU Joint Laboratory in Financial Technology and Engineering, the Laboratory for AI Powered Financial Technologies, and the GRF support from the Hong Kong Research Grants Council under GRF 14206117 and 11219420. 
\bibliography{aaai21}

\begin{thebibliography}{35}
\providecommand{\natexlab}[1]{#1}
\providecommand{\url}[1]{\texttt{#1}}
\providecommand{\urlprefix}{URL }
\expandafter\ifx\csname urlstyle\endcsname\relax
  \providecommand{\doi}[1]{doi:\discretionary{}{}{}#1}\else
  \providecommand{\doi}{doi:\discretionary{}{}{}\begingroup
  \urlstyle{rm}\Url}\fi

\bibitem[{Alaa and Schaar(2018)}]{alaa2018limits}
Alaa, A.; and Schaar, M. 2018.
\newblock Limits of estimating heterogeneous treatment effects: Guidelines for
  practical algorithm design.
\newblock In \emph{International Conference on Machine Learning}, 129--138.

\bibitem[{Bennett and Kallus(2019)}]{bennett2019policy}
Bennett, A.; and Kallus, N. 2019.
\newblock Policy evaluation with latent confounders via optimal balance.
\newblock In \emph{Advances in Neural Information Processing Systems},
  4827--4837.

\bibitem[{Berg et~al.(2020)Berg, Burg, Gombovi{\'c}, and Puri}]{berg2020rise}
Berg, T.; Burg, V.; Gombovi{\'c}, A.; and Puri, M. 2020.
\newblock On the rise of fintechs: Credit scoring using digital footprints.
\newblock \emph{The Review of Financial Studies} 33(7): 2845--2897.

\bibitem[{Chernozhukov et~al.(2018)Chernozhukov, Chetverikov, Demirer, Duflo,
  Hansen, Newey, and Robins}]{chernozhukov2018double}
Chernozhukov, V.; Chetverikov, D.; Demirer, M.; Duflo, E.; Hansen, C.; Newey,
  W.~K.; and Robins, J. 2018.
\newblock Double/Debiased Machine Learning for Treatment and Structural
  Parameters.
\newblock \emph{The Econometrics Journal} 21(1): C1--C68.

\bibitem[{Chernozhukov, Newey, and Robins(2018)}]{chernozhukov2018double2}
Chernozhukov, V.; Newey, W.~K.; and Robins, J. 2018.
\newblock Double/de-biased machine learning using regularized Riesz
  representers.
\newblock Technical report, Centre for Microdata Methods and Practice,
  Institute for Fiscal Studies.

\bibitem[{Djebbari and Smith(2008)}]{djebbari2008heterogeneous}
Djebbari, H.; and Smith, J. 2008.
\newblock Heterogeneous impacts in PROGRESA.
\newblock \emph{Journal of Econometrics} 145(1-2): 64--80.

\bibitem[{Du and Zhang(2015)}]{du2015home}
Du, Z.; and Zhang, L. 2015.
\newblock Home-purchase restriction, property tax and housing price in China: A
  counterfactual analysis.
\newblock \emph{Journal of Econometrics} 188(2): 558--568.

\bibitem[{Dud{\'\i}k, Langford, and Li(2011)}]{dudik2011doubly}
Dud{\'\i}k, M.; Langford, J.; and Li, L. 2011.
\newblock Doubly robust policy evaluation and learning.
\newblock In \emph{Proceedings of the 28th International Conference on
  International Conference on Machine Learning}, 1097--1104.

\bibitem[{Farrell(2015)}]{farrell2015robust}
Farrell, M.~H. 2015.
\newblock Robust inference on average treatment effects with possibly more
  covariates than observations.
\newblock \emph{Journal of Econometrics} 189(1): 1--23.

\bibitem[{Hainmueller, Mummolo, and Xu(2019)}]{hainmueller2019much}
Hainmueller, J.; Mummolo, J.; and Xu, Y. 2019.
\newblock How much should we trust estimates from multiplicative interaction
  models? Simple tools to improve empirical practice.
\newblock \emph{Political Analysis} 27(2): 163--192.

\bibitem[{Heckman, Ichimura, and Todd(1998)}]{heckman1998matching}
Heckman, J.~J.; Ichimura, H.; and Todd, P. 1998.
\newblock Matching as an econometric evaluation estimator.
\newblock \emph{The review of economic studies} 65(2): 261--294.

\bibitem[{Hill(2011)}]{hill2011bayesian}
Hill, J.~L. 2011.
\newblock Bayesian nonparametric modeling for causal inference.
\newblock \emph{Journal of Computational and Graphical Statistics} 20(1):
  217--240.

\bibitem[{Hirano, Imbens, and Ridder(2003)}]{hirano2003efficient}
Hirano, K.; Imbens, G.~W.; and Ridder, G. 2003.
\newblock Efficient estimation of average treatment effects using the estimated
  propensity score.
\newblock \emph{Econometrica} 71(4): 1161--1189.

\bibitem[{Johansson, Shalit, and Sontag(2016)}]{johansson2016learning}
Johansson, F.; Shalit, U.; and Sontag, D. 2016.
\newblock Learning representations for counterfactual inference.
\newblock In \emph{International conference on machine learning}, 3020--3029.

\bibitem[{Kallus(2018)}]{kallus2018balanced}
Kallus, N. 2018.
\newblock Balanced policy evaluation and learning.
\newblock In \emph{Advances in Neural Information Processing Systems},
  8895--8906.

\bibitem[{Khandani, Kim, and Lo(2010)}]{khandani2010consumer}
Khandani, A.~E.; Kim, A.~J.; and Lo, A.~W. 2010.
\newblock Consumer credit-risk models via machine-learning algorithms.
\newblock \emph{Journal of Banking \& Finance} 34(11): 2767--2787.

\bibitem[{Lattimore, Lattimore, and Reid(2016)}]{lattimore2016causal}
Lattimore, F.; Lattimore, T.; and Reid, M.~D. 2016.
\newblock Causal bandits: Learning good interventions via causal inference.
\newblock In \emph{Advances in Neural Information Processing Systems},
  1181--1189.

\bibitem[{Li and Bell(2017)}]{li2017estimation}
Li, K.~T.; and Bell, D.~R. 2017.
\newblock Estimation of average treatment effects with panel data: Asymptotic
  theory and implementation.
\newblock \emph{Journal of Econometrics} 197(1): 65--75.

\bibitem[{Li and Fu(2017)}]{li2017matching}
Li, S.; and Fu, Y. 2017.
\newblock Matching on balanced nonlinear representations for treatment effects
  estimation.
\newblock In \emph{Advances in Neural Information Processing Systems},
  929--939.

\bibitem[{Lim(2018)}]{lim2018forecasting}
Lim, B. 2018.
\newblock Forecasting treatment responses over time using recurrent marginal
  structural networks.
\newblock In \emph{Advances in Neural Information Processing Systems},
  7483--7493.

\bibitem[{Louizos et~al.(2017)Louizos, Shalit, Mooij, Sontag, Zemel, and
  Welling}]{louizos2017causal}
Louizos, C.; Shalit, U.; Mooij, J.~M.; Sontag, D.; Zemel, R.; and Welling, M.
  2017.
\newblock Causal effect inference with deep latent-variable models.
\newblock In \emph{Advances in Neural Information Processing Systems},
  6446--6456.

\bibitem[{Mackey, Syrgkanis, and Zadik(2018)}]{mackey2018orthogonal}
Mackey, L.; Syrgkanis, V.; and Zadik, I. 2018.
\newblock Orthogonal machine learning: Power and limitations.
\newblock In \emph{International Conference on Machine Learning}, 3375--3383.
  PMLR.

\bibitem[{Malekipirbazari and Aksakalli(2015)}]{malekipirbazari2015risk}
Malekipirbazari, M.; and Aksakalli, V. 2015.
\newblock Risk assessment in social lending via random forests.
\newblock \emph{Expert Systems with Applications} 42(10): 4621--4631.

\bibitem[{Neyman(1979)}]{neyman1979c}
Neyman, J. 1979.
\newblock C ($\alpha$) tests and their use.
\newblock \emph{Sankhy{\=a}: The Indian Journal of Statistics, Series A} 1--21.

\bibitem[{Oprescu, Syrgkanis, and Wu(2019)}]{oprescu2019orthogonal}
Oprescu, M.; Syrgkanis, V.; and Wu, Z.~S. 2019.
\newblock Orthogonal random forest for causal inference.
\newblock In \emph{International Conference on Machine Learning}, 4932--4941.

\bibitem[{Shi, Blei, and Veitch(2019)}]{shi2019adapting}
Shi, C.; Blei, D.; and Veitch, V. 2019.
\newblock Adapting neural networks for the estimation of treatment effects.
\newblock In \emph{Advances in Neural Information Processing Systems},
  2503--2513.

\bibitem[{Sirignano, Sadhwani, and Giesecke(2016)}]{sirignano2016deep}
Sirignano, J.; Sadhwani, A.; and Giesecke, K. 2016.
\newblock Deep learning for mortgage risk.
\newblock \emph{arXiv preprint arXiv:1607.02470} .

\bibitem[{Swaminathan and
  Joachims(2015{\natexlab{a}})}]{swaminathan2015counterfactual}
Swaminathan, A.; and Joachims, T. 2015{\natexlab{a}}.
\newblock Counterfactual risk minimization: Learning from logged bandit
  feedback.
\newblock In \emph{International Conference on Machine Learning}, 814--823.

\bibitem[{Swaminathan and Joachims(2015{\natexlab{b}})}]{swaminathan2015self}
Swaminathan, A.; and Joachims, T. 2015{\natexlab{b}}.
\newblock The self-normalized estimator for counterfactual learning.
\newblock In \emph{advances in neural information processing systems},
  3231--3239.

\bibitem[{Syrgkanis et~al.(2019)Syrgkanis, Lei, Oprescu, Hei, Battocchi, and
  Lewis}]{syrgkanis2019machine}
Syrgkanis, V.; Lei, V.; Oprescu, M.; Hei, M.; Battocchi, K.; and Lewis, G.
  2019.
\newblock Machine learning estimation of heterogeneous treatment effects with
  instruments.
\newblock In \emph{Advances in Neural Information Processing Systems},
  15167--15176.

\bibitem[{Toulis and Parkes(2016)}]{toulis2016long}
Toulis, P.; and Parkes, D.~C. 2016.
\newblock Long-term causal effects via behavioral game theory.
\newblock In \emph{Advances in Neural Information Processing Systems},
  2604--2612.

\bibitem[{Wang et~al.(2019)Wang, Zhang, Sun, and Qi}]{wang2019doubly}
Wang, X.; Zhang, R.; Sun, Y.; and Qi, J. 2019.
\newblock Doubly robust joint learning for recommendation on data missing not
  at random.
\newblock In \emph{International Conference on Machine Learning}, 6638--6647.

\bibitem[{Yao et~al.(2018)Yao, Li, Li, Huai, Gao, and
  Zhang}]{yao2018representation}
Yao, L.; Li, S.; Li, Y.; Huai, M.; Gao, J.; and Zhang, A. 2018.
\newblock Representation learning for treatment effect estimation from
  observational data.
\newblock In \emph{Advances in Neural Information Processing Systems},
  2633--2643.

\bibitem[{Yoon, Jordon, and van~der Schaar(2018)}]{yoon2018ganite}
Yoon, J.; Jordon, J.; and van~der Schaar, M. 2018.
\newblock GANITE: Estimation of individualized treatment effects using
  generative adversarial nets.
\newblock In \emph{International Conference on Learning Representations}.

\bibitem[{Yuan et~al.(2019)Yuan, Hsia, Yang, Zhu, Chang, Dong, and
  Lin}]{yuan2019improving}
Yuan, B.; Hsia, J.-Y.; Yang, M.-Y.; Zhu, H.; Chang, C.-Y.; Dong, Z.; and Lin,
  C.-J. 2019.
\newblock Improving Ad Click Prediction by Considering Non-displayed Events.
\newblock In \emph{Proceedings of the 28th ACM International Conference on
  Information and Knowledge Management}, 329--338.

\end{thebibliography}
\onecolumn
\section{Appendix}\label{Appendix}
The section ``\textbf{Notations}" summarizes the notations which appear in the main paper and Appendix. In the section ``\textbf{Examples}", we give a tangible example which demonstrates the difference in computing IoC estimation and IwC estimation. In the section ``\textbf{Details of models for simulated data}", we give the detailed descriptions about the models which are used for simulated data generating process and semi-synthetic data generating process. In the section ``\textbf{Tables and Figures}", we summarize all the tables and figures which are not included in the main paper. In the section ``\textbf{Proofs of Theorem}", we provide detailed proofs of Theorems in the main paper. In the section ``\textbf{Score functions of IoC and Doubly Robust (DR) estimators}", we state out the score functions which can be used to recover the IoC and doubly robust (DR) estimators, and check if the corresponding score functions satisfy the moment condition and the orthogonal condition. In the section ``\textbf{Consistency}", we undergo error decomposition between the `true' quantity and the corresponding IwC estimator, which allows us to study the consistency of the IwC estimators.
\subsection{Notations}\label{sec:Appendix-Notations}
In this section, we provide detailed descriptions about the notations which appear in the main paper and the Appendix.

\renewcommand*{\arraystretch}{1.37}
\begin{longtable}{@{}l @{\hspace{4mm}} l }
	$Y,y$:&outcome (response) variable, observational outcome (response)\\
	$D,d$:&intervention variable, an observation intervention of $D$\\
	$\mathbf{Z},\mathbf{z}$:&a multidimensional confounder affecting $Y$ and $D$, an observation of $\mathbf{Z}$\\
	$\mathbf{X},\mathbf{x}$:&a multidimensional random variable affecting $D$ only, an observation of $\mathbf{X}$\\
	$\mathbf{U},\mathbf{u}$:&a multidimensional random variable affecting $Y$ only, an observation of $\mathbf{U}$\\
	$\zeta,\;\xi,\;\nu$:&noises/disturbances\\
	$\mathbb{E},\mathbb{V}$:&Expectation of random variables, Variance of random variables\\
	$(\Omega,\mathbb{P},\mathcal{F})$:&Filtered probability space such that $\Omega$ contains outcomes of $D$, $\mathbf{U}$, $\mathbf{X}$ and $\mathbf{Z}$\\
	$P_{i}(\mathbf{x},\mathbf{z}),\hat{P}_{i}(\mathbf{x},\mathbf{z})$:&$\mathbb{E}[\mathbf{1}_{\{D=d^{i}\}}\mid \mathbf{X}=\mathbf{x},\mathbf{Z}=\mathbf{z}]$, estimate of $P_{i}(\mathbf{x},\mathbf{z})$\\
	$\mathbf{P}(\mathbf{x},\mathbf{z})$:&$(P_{1}(\mathbf{x},\mathbf{z}),\cdots,P_{n}(\mathbf{x},\mathbf{z}))$\\
	$\mathcal{g},\;g$:&$\mathbb{P}$-integrable function which describe the relationship between $Y$ and $Y$'s predictor\\
	&variables such as $D$, $\mathbf{U}$ and $\mathbf{Z}$\\
	$\mathcal{m},\;m$:&$\mathbb{P}$-integrable function which describe the relationship between $D$ and $D$'s predictor\\
	&variables such as $\mathbf{X}$ and $\mathbf{Z}$\\
	$\theta^{i},\;\theta^{i\mid j}$:&$\mathbb{E}\left[g(d^{i},\mathbf{U},\mathbf{Z})\right]$ for each $i$, $\mathbb{E}\left[g(d^{i},\mathbf{U},\mathbf{Z})\mid D=d^{j}\right]$ for each $i$, $j$ and $i\neq j$\\
	$\theta^{i,k}$:&average treatment effect (ATE) $=\theta^{i}-\theta^{k}$\\
	$\theta^{i,k\mid j}$:&average treatment effect on the treated (ATTE) $=\theta^{i\mid j}-\theta^{k\mid j}$\\
	$\hat{\theta}_{o}^{i},\;\hat{\theta}_{o}^{i\mid j}\;\hat{\theta}_{o}^{i,k},\;\hat{\theta}_{o}^{i,k\mid j}$:&Impact inference without Confounding (IoC) estimates of $\theta^{i},\;\theta^{i\mid j}\;\theta^{i,k},\;\theta^{i,k\mid j}$\\
	$\hat{\theta}_{w}^{i},\;\hat{\theta}_{w}^{i\mid j}\;\hat{\theta}_{w}^{i,k},\;\hat{\theta}_{w}^{i,k\mid j}$:&Impact inference with Confounding (IwC) estimates of $\theta^{i},\;\theta^{i\mid j}\;\theta^{i,k},\;\theta^{i,k\mid j}$\\
	$\psi$:&score function\\
	$\varrho,\;\rho$:&nuisance parameters,``true" nuisance parameters\\
	&(The ``true" nuisance parameters are $g(d^{i},\cdot,\cdot)$, $\mathbb{E}\left[\mathbf{1}_{\{D=d^{i}\}}\mid\mathbf{X},\mathbf{Z}\right]$ and $\mathbb{E}\left[\mathbf{1}_{\{D=d^{i}\}}\right]$.)
\end{longtable}

\clearpage
\subsection{Examples}\label{sec:Appendix-Examples}
\textbf{A Tangible Example}\quad
We now work through an example to illustrate the difference between the estimations from causal impact inference and the estimations from non-causal impact inference. In this example, the outcome $Y$ is the delinquency probability of an customer indicating how likely he will be late for the next payment. The intervention $D$ is the credit limit set by the e-commerce lender. The common factor $\mathbf{Z}$ that confounds both $Y$ and $D$ is the monthly income. The intervention-specific variable $\mathbf{X}$ is his age, and the outcome-specific variable $\mathbf{U}$ is his monthly expenditure. Generally, when the individual's income $\mathbf{Z}$ is high and the expenditure $\mathbf{U}$ is low, he is less likely to be late for payment. Meanwhile, the lender tends to increase the credit lines $D$ as customers' higher income $\mathbf{Z}$ grow and decreases their credit lines as they age.
Suppose there are two types of credit lines $D=\$1000$ and $D=\$2000$. Let's assume the relationship between $Y$ and $(D,\mathbf{U},\mathbf{Z})$ as well as the relationship between $D$ and $(\mathbf{X},\mathbf{Z})$ are
\begin{equation}
\begin{gathered}\label{eqt:ATE and ATTE example}
Y(\mathbf{U},\mathbf{Z};D)=g(D,\mathbf{U},\mathbf{Z})+\xi,\quad\text{ where }\\
g(D,\mathbf{U},\mathbf{Z})=\begin{cases}
\frac{\mathbf{-Z}+\mathbf{U}+10000}{100000},&\; D=1000\\
\frac{\mathbf{-Z}+\mathbf{U}+60000}{100000},&\; D=2000
\end{cases}\quad\text{ and }\quad
D=\begin{cases}
1000,&\; \frac{\mathbf{Z}-10\mathbf{X}}{5000}<1\\
2000,&\; \frac{\mathbf{Z}-10\mathbf{X}}{5000}\geq 1
\end{cases}\text{\quad respectively.}
\end{gathered}
\end{equation}
Here, $\xi$ is the normal distributed noise such that it follows $N(0,0.001)$. Suppose the training set contains records of $10$ customers whose $(\mathbf{z}_{m},\mathbf{x}_{m},\mathbf{u}_{m},d_{m},y_{m})_{m=1}^{10}$ are:
\begin{equation*}
\begin{aligned}
&(1000,21,500,1000,0.095),\;(2000,22,1000,1000,0.090),\\
&(3000,23,1500,1000,0.087),\;(4000,24,2000,1000,0.800),\\
&(5000,25,2500,1000,0.075),\;(6000,26,3000,2000,0.569),\;(7000,27,3500,2000,0.566),\\
&(8000,28,4000,2000,0.562),\;(9000,29,4500,2000,0.553),\;(10000,30,5000,2000,0.551).
\end{aligned}
\end{equation*}
We use these ten observations to train the model $Y(\mathbf{U},\mathbf{Z};D)$ in \eqref{eqt:ATE and ATTE example} using linear regression and linear logistic regression and we can obtain the trained model $\hat{Y}(\mathbf{U},\mathbf{Z};D)$ and $\hat{P}_{2000}(\mathbf{X},\mathbf{Z})=\hat{\mathbb{E}}[\mathbf{1}_{\{D=2000\}}\mid \mathbf{X},\mathbf{Z}]$ such that
\begin{align}\label{eqt:ATE and ATTE example trained function}
\hat{Y}(\mathbf{U},\mathbf{Z};D)&=\begin{cases}
\num{-4.046756} \times 10^{-6} \mathbf{Z}\num{-2.023378}\times 10^{-6}\mathbf{U}+\num{0.100655}, & D=1000\\
\num{-4.218229}\times 10^{-6}\mathbf{Z}\num{-2.10911}\times 10^{-6} \mathbf{U}+\num{0.10217}, & D=2000
\end{cases},
\\
\hat{P}_{2000}(\mathbf{X},\mathbf{Z})&=1/\big[1+\exp\left(0.0031\mathbf{Z}-0.6648\mathbf{X}-0.0332\right)\big]\label{eqt:estimated probability example trained function}.
\end{align}
Suppose another four individuals whose $(\mathbf{z}_{m},\mathbf{x}_{m},\mathbf{u}_{m},d_{m},y_{m})_{m=1}^{4}$ are
\begin{equation*}
\begin{aligned}
&(3000,20,500,1000,0.075),\;(4000,22,1000,1000,0.071),\\
&(8000,24,1500,2000,0.533),\;(8000,26,3000,2000,0.541).
\end{aligned}
\end{equation*}
From \eqref{eqt:ATE and ATTE example}, the true potential outcomes $(Y(\mathbf{u},\mathbf{z};1000),Y(\mathbf{u},\mathbf{z};2000))$ of the four individuals are
\begin{equation*}
\begin{aligned}
(0.075,0.575),\;(0.071,0.571),\;(0.037,0.533),\;(0.040,0.541).
\end{aligned}
\end{equation*}
From \eqref{eqt:ATE and ATTE example trained function}, the estimations of the potential outcomes $(\hat{Y}(\mathbf{u},\mathbf{z};1000),\hat{Y}(\mathbf{u},\mathbf{z};2000))$ of the four individuals are
\begin{equation*}
\begin{aligned}
(\num[round-precision=4]{0.08750311},\num[round-precision=4]{0.08846092}),\;(\num[round-precision=4]{0.08244466},\num[round-precision=4]{0.08318813}),\;(\num[round-precision=4]{0.06524595},\num[round-precision=4]{0.06526066}),\;(\num[round-precision=4]{0.05816412},\num[round-precision=4]{0.05787875}).
\end{aligned}
\end{equation*}
From \eqref{eqt:estimated probability example trained function}, the estimated probabilities are
\begin{equation*}
\begin{aligned}
&(0.9818,0.0182),\;(0.9005,0.0995),\;(0.0001,0.9999),\;(0.0000,1.0000).
\end{aligned}
\end{equation*}
We then compute the IoC estimations and IwC estimations of $\theta^{1}$, $\theta^{2}$ and $\theta^{2\mid 1}$. Using \eqref{eqt:IoC estimator for IwC formulation} in the main paper, the IoC estimations $\hat{\theta}_{o}^{1}$, $\hat{\theta}_{o}^{2}$ and $\hat{\theta}_{o}^{2\mid 1}$ of $\theta^{1}$, $\theta^{2}$ and $\theta^{2\mid 1}$ are computed such that
\begin{equation*}
\begin{aligned}
\hat{\theta}^{1}_{o}&=\frac{(0.0875+0.0824+0.0652+0.0582)}{4}=0.0733\\
\hat{\theta}^{2}_{o}&=\frac{(0.0885+0.0832+0.0653+0.0579)}{4}=0.0737\\
\hat{\theta}_{o}^{1\mid 2}&=\frac{(0.0652+0.0582)}{2}=0.0617.
\end{aligned}
\end{equation*}
On the other hand, we compute the IwC estimations $\hat{\theta}_{w}^{1}$, $\hat{\theta}_{w}^{2}$ and $\hat{\theta}_{w}^{2\mid 1}$ using \eqref{eqt:desired unbiased estimator expectation} and \eqref{eqt:desired unbiased estimator conditional expectation} in the main paper, which give
\begin{equation*}
\begin{aligned}
\hat{\theta}^{1}_{w}&=\frac{[(0.0875+0.0824+0.0652+0.0582)+\frac{(0.075-0.0875)}{0.9818}+\frac{(0.071-0.0824)}{0.9005}]}{4}=0.0670\\
\hat{\theta}^{2}_{w}&=\frac{[(0.0885+0.0832+0.0653+0.0579)+\frac{(0.533-0.0653)}{0.9999}+\frac{(0.541-0.0579)}{1.0000}]}{4}=0.3114\\
\hat{\theta}_{w}^{1\mid 2}&=\frac{[(0.0652+0.0582)+0.0182\cdot\frac{(0.075-0.0875)}{0.9818}+0.0995\cdot\frac{(0.071-0.0824)}{0.9005}]}{2}=0.0610.
\end{aligned}
\end{equation*}
Meanwhile, we compute the true ATE and true ATTE, which are
\begin{equation*}
\begin{aligned}
\text{true ATE:}&=\theta^{1}-\theta^{2}\\
&=\frac{[(0.075-0.575)+(0.071-0.571)+(0.037-0.533)+(0.040-0.541)]}{4}=-0.4993\\
\text{true ATTE:}&=\theta^{1\mid 2}-\theta^{2\mid 2}=\frac{(0.037+0.040)}{2}-\frac{(0.533+0.541)}{2}=-0.4985.
\end{aligned}
\end{equation*}
Next, we compute the IoC ATE and the IoC ATTE, which are
\begin{equation*}
\begin{aligned}
\text{IoC ATE:}&=\hat{\theta}^{1}_{o}-\hat{\theta}^{2}_{o}=0.0733-0.0737=-0.0004\\
\text{IoC ATTE:}&=\hat{\theta}_{o}^{1\mid 2}-\theta^{2\mid 2}=0.0617-\frac{0.533+0.541}{2}=-0.4753.
\end{aligned}
\end{equation*}
Last but not least, we compute the IwC ATE and the IwC ATTE, which are
\begin{equation*}
\begin{aligned}
\text{IwC ATE:}&=\hat{\theta}^{1}_{w}-\hat{\theta}^{2}_{w}=0.0670-0.3114=-0.2444\\
\text{IwC ATTE:}&=\hat{\theta}_{w}^{1\mid 2}-\theta^{2\mid 2}=0.0610-\frac{0.533+0.541}{2}=-0.4760.
\end{aligned}
\end{equation*}
From the above calculations, we notice that the IwC estimations give better estimations than the IoC estimations.
\subsection{Details of models for simulated data}\label{sec:Appendix-Details of Models for simulated data} 
In the main paper, we give out the explicit form of $q(\mathbf{U},\mathbf{Z})$ and the $D$ construction, which are
\begin{equation*}
\begin{aligned}
q(\mathbf{U},\mathbf{Z})&=\left\{e_{1}\log\left(e_{2}+k(\mathbf{Z})^{2}+|k(\mathbf{U})|^{\tau}\right)+\exp\left(\left|\mathbf{a}_{0}^{T}\mathbf{Z}\right|\right)\right\}^{r},\\
D&=\sigma\left(n(\mathbf{X},\mathbf{Z},\nu)\right)=\sigma\left(\lambda\left(\mathbf{a}_{1}^{T}\mathbf{X}+\left|\mathbf{a}_{2}^{T}\mathbf{Z}\right|^{\gamma}+\frac{b_{1}\; X_{9}}{1+|X_{3}|}+\frac{b_{2}\; X_{10}}{1+|X_{6}|}\right)+\nu\right).
\end{aligned}
\end{equation*}
These forms are chosen for the simulated experiment and the empirical experiment. Differences come from the choices of $k(\mathbf{Z})$ and $k(\mathbf{U})$. In this section, we give out the numerical values of the coefficients which appear in the terms $k(\mathbf{Z})$ and $k(\mathbf{U})$. Besides, we determine the credit limit $D$ from the function $n(\mathbf{X},\mathbf{Z},\nu)$. To assign the credit limit of each individual in a population, we need to divide the whole population into 5 categories based on the values of $n(\mathbf{X},\mathbf{Z},\nu)$ of the whole population: 1) $<=20$ percentile of $n(\mathbf{X},\mathbf{Z},\nu)$ of the whole population, 2) in between 20 percentile and 40 percentile of $n(\mathbf{X},\mathbf{Z},\nu)$ of the whole population, 3) in between 40 percentile and 60 percentile of $n(\mathbf{X},\mathbf{Z},\nu)$ of the whole population, 4) in between 60 and 80 percentile of $n(\mathbf{X},\mathbf{Z},\nu)$ of the whole population and 5) $>=80$ percentile of $n(\mathbf{X},\mathbf{Z},\nu)$ of the whole population. Under each category, the consumers are assigned to have the same credit limit which equals to the median value of $n(\mathbf{X},\mathbf{Z},\nu)$ of that category.
\paragraph{Coefficients of $k(\mathbf{Z})$ and $k(\mathbf{U})$ which are used in the simulated experiment} In the simulated experiment, the quantities $k(\mathbf{Z})$ and $k(\mathbf{U})$ are given as
\begin{equation}
\begin{aligned}
k(\mathbf{Z})&=\log\left(\left|(\mathbf{c}_{1}^{z})^{T}\mathbf{Z}+\overset{4}{\underset{r=2}{\sum}}(\mathbf{c}_{r}^{z})^{T}\bar{\mathbf{Z}}_{r}\right|\right)+(\mathbf{c}_{1}^{z})^{T}\log|\mathbf{Z}|+\overset{4}{\underset{r=2}{\sum}}(\mathbf{c}_{r}^{z})^{T}\log|\bar{\mathbf{Z}}_{r}|,\\
k(\mathbf{U})&=\log\left(\left|(\mathbf{c}_{1}^{u})^{T}\mathbf{U}+\overset{4}{\underset{r=2}{\sum}}(\mathbf{c}_{r}^{u})^{T}\bar{\mathbf{U}}_{r}\right|\right)+(\mathbf{c}_{1}^{u})^{T}\log|\mathbf{U}|+\overset{4}{\underset{r=2}{\sum}}(\mathbf{c}_{r}^{u})^{T}\log|\bar{\mathbf{U}}_{r}|
\end{aligned}
\end{equation}
where $\bar{\mathbf{Z}}_{r}$ (or $\bar{\mathbf{U}}_{r}$) is a column vector such that each entry is the product of $r$ elements taken from $\mathbf{Z}$ (or $\mathbf{U}$) without replacement. For example, $\bar{\mathbf{Z}}_{2}=[Z_{1}Z_{2},Z_{1}Z_{3},\cdots,Z_{19}Z_{20}]^{T}$, $\bar{\mathbf{Z}}_{3}=[Z_{1}Z_{2}Z_{3},Z_{1}Z_{2}Z_{4},\cdots,Z_{18}Z_{19}Z_{20}]^{T}$ and $\bar{\mathbf{Z}}_{4}=[Z_{1}Z_{2}Z_{3}Z_{4},Z_{1}Z_{2}Z_{3}Z_{5},\cdots,Z_{17}Z_{18}Z_{19}Z_{20}]^{T}$.

We give detailed descriptions about the quantities $\mathbf{c}_{1}^{z}$, $\mathbf{c}_{2}^{z}$, $\mathbf{c}_{3}^{z}$, $\mathbf{c}_{4}^{u}$, $\mathbf{c}_{1}^{u}$, $\mathbf{c}_{2}^{u}$, $\mathbf{c}_{3}^{u}$ and $\mathbf{c}_{4}^{u}$. As stated in the main paper, $\mathbf{Z}$ is a $20\times 1$ column vector and $\mathbf{U}$ is a $10\times 1$ column vector. The dimensions of the quantities $\mathbf{c}_{1}^{z}$, $\mathbf{c}_{2}^{z}$, $\mathbf{c}_{3}^{z}$, $\mathbf{c}_{4}^{u}$, $\mathbf{c}_{1}^{u}$, $\mathbf{c}_{2}^{u}$, $\mathbf{c}_{3}^{u}$ and $\mathbf{c}_{4}^{u}$ are $20\times 1$, $190\times 1$, $1140\times 1$, $4845\times 1$, $10\times 1$, $45\times 1$, $120\times 1$ and $210\times 1$ respectively such that the values are generated from the normal distribution with mean $0$ and variance $1$.

Finally, we present the coefficients $\mathbf{a}_{0}$, $e_{1}$, $e_{2}$, $\tau$, $r$ presented in $q(\mathbf{U},\mathbf{Z})$ and $\lambda$, $\gamma$, $b_{1}$, $b_{2}$, $\mathbf{a}_{1}$, $\mathbf{a}_{2}$ presented in $n(\mathbf{X},\mathbf{Z},\nu)$, which are
\begin{equation*}
\begin{aligned}
e_{1}&=0.1,\; e_{2}=0,\; \tau=1.5,\; r=0.5,\; \lambda=1,\; \gamma=2,\;  b_{1}=1.775,\; b_{2}=-1.354,\\
\mathbf{a}_{0} &= \left[
{0.15}, {0.15}, {0.15}, \num{0}, {0.15}, \num{0}, \num{0},\num{0},\num{0},\num{0},\num{0},\num{0},\num{0},\num{0},\num{0},\num{0},\num{0},\num{0},\num{0},\num{0}\right]^{T},\\ 
\mathbf{a}_{1} &= \left[
\num{0.079702899999999993196908576464921}, \num{-1.7610222300000000217323758988641}, \num{-0.91346963000000003241041213186691}, \num{0.68418343999999997606664692284539}, \num{0.64784691999999999278969653460081}, \num{0.67517954000000002245940322609385}, \num{1.4604472699999999640851910953643}, \num{-0.77882898000000000404696720579523}, \num{-0.21162480000000000179838366420881}, \num{1.0102116699999998949976998119382}
\right]^{T},\\
\mathbf{a}_{2} &= \left[
\num{1.1537280100000000260251908912323}, \num{1.2269035200000000251918663707329}, \num{0.68461399999999994481214571351302}, \num{1.4707825199999999821187657289556}, \num{-0.42191908000000000189544380191364}, \num{-0.11914524999999999421707030933248},\num{-0.24666432999999998720852545375237}, \num{0.010599449999999999844080278421643}, \num{-0.70721460000000002654729769346886}, \num{-0.47233888000000001650136027819826}, \right.\\
&\qquad\qquad \left. \num{0.41497956000000002507732688172837}, \num{0.94239603000000005117442469781963}, \num{2.2511959500000000566899416298838}, \num{-0.10741729999999999334292510866362}, \num{0.41419468999999997649297256430145}, \num{-0.46030412999999997802902385046764}, \num{1.5260769899999999665851646568626}, \num{0.56253337000000003254029934396385}, \num{-0.82265637000000002565514023444848}, \num{-0.9987326100000000206335926122847}
\right]^{T}.
\end{aligned}
\end{equation*}
\paragraph{Coefficients of $k(\mathbf{Z})$ and $k(\mathbf{U})$ which are used in the empirical experiment} In the empirical experiment, the quantities $k(\mathbf{Z})$ and $k(\mathbf{U})$ are given as
\begin{equation}
\begin{aligned}
k(\mathbf{Z})&=(\mathbf{c}_{1}^{z})^{T}\mathbf{Z}\qquad\text{and}\qquad k(\mathbf{U})=(\mathbf{c}_{1}^{u})^{T}\mathbf{U}.
\end{aligned}
\end{equation}

We give detailed descriptions about the quantities $\mathbf{c}_{1}^{z}$ and $\mathbf{c}_{1}^{u}$. $\mathbf{Z}$ is a $20\times 1$ column vector and $\mathbf{U}$ is a $10\times 1$ column vector such that
\begin{equation*}
\begin{aligned}
\mathbf{c}_{1}^{z} &= \left[
\num{-0.09656428735342949}, \num{0.9734025822882408}, \num{-0.8582891237280186}, \num{-0.30860415187183593}, \num{-0.28146793638372014}, \num{-2.1733362770260594},\num{1.07992405679886}, \num{-0.48654644176178413}, \num{0.9186030042716846}, \num{-0.3898601365276411}, \right.\\
&\qquad\qquad \left. \num{0.5508196301315733}, \num{1.214349568204593}, \num{0.5911358538192414}, \num{-0.039596488856784254}, \num{-0.2689273104251347}, \num{-0.49321543083682934}, \num{0.20597029681886586}, \num{-0.4562984857686687}, \num{1.0593842547558803}, \num{-0.2403438672537869}
\right]^{T},\\
\mathbf{c}_{1}^{u} &= \left[
\num{0.40284599}, \num{-1.72689173}, \num{-1.06178813}, \num{-1.34073716}, \num{1.48927656}, \num{-1.10148294}, \num{-0.31908929}, \num{-1.93599287}, \num{0.23803084}, \num{-0.00819786}
\right]^{T}.
\end{aligned}
\end{equation*}

Finally, we present the coefficients $\mathbf{a}_{0}$, $e_{1}$, $e_{2}$, $\tau$, $r$ presented in $q(\mathbf{U},\mathbf{Z})$ and $\lambda$, $\gamma$, $b_{1}$, $b_{2}$, $\mathbf{a}_{1}$, $\mathbf{a}_{2}$ presented in $n(\mathbf{X},\mathbf{Z},\nu)$, which are
\begin{equation*}
\begin{aligned}
e_{1}&=1,\; e_{2}=1,\; \tau=1.5,\; r=0.5,\; \lambda=0.0507,\; \gamma=0.5896,\;  b_{1}=0,\; b_{2}=0,\\
\mathbf{a}_{0} &= \left[
\num{-0.09656428735342949}, \num{0.9734025822882408}, \num{-0.8582891237280186}, \num{-0.30860415187183593}, \num{-0.28146793638372014}, \num{-2.1733362770260594}, \num{1.07992405679886}, \num{-0.48654644176178413}, \num{0.9186030042716846}, \num{-0.3898601365276411},\right.\\
&\qquad\qquad \left.\num{0.5508196301315733}, \num{1.214349568204593}, \num{0.5911358538192414}, \num{-0.039596488856784254}, \num{-0.2689273104251347}, \num{-0.49321543083682934}, \num{0.20597029681886586}, \num{-0.4562984857686687}, \num{1.0593842547558803}, \num{-0.2403438672537869}
\right]^{T},\\ 
\mathbf{a}_{1} &= \left[
\num{0.079702899999999993196908576464921}, \num{-1.7610222300000000217323758988641}, \num{-0.91346963000000003241041213186691}, \num{0.68418343999999997606664692284539}, \num{0.64784691999999999278969653460081}, \num{0.67517954000000002245940322609385}, \num{1.4604472699999999640851910953643}, \num{-0.77882898000000000404696720579523}, \num{-0.21162480000000000179838366420881}, \num{1.0102116699999998949976998119382}
\right]^{T},\\
\mathbf{a}_{2} &= \left[
\num{1.1537280100000000260251908912323}, \num{1.2269035200000000251918663707329}, \num{0.68461399999999994481214571351302}, \num{1.4707825199999999821187657289556}, \num{-0.42191908000000000189544380191364}, \num{-0.11914524999999999421707030933248},\num{-0.24666432999999998720852545375237}, \num{0.010599449999999999844080278421643}, \num{-0.70721460000000002654729769346886}, \num{-0.47233888000000001650136027819826}, \right.\\
&\qquad\qquad \left. \num{0.41497956000000002507732688172837}, \num{0.94239603000000005117442469781963}, \num{2.2511959500000000566899416298838}, \num{-0.10741729999999999334292510866362}, \num{0.41419468999999997649297256430145}, \num{-0.46030412999999997802902385046764}, \num{1.5260769899999999665851646568626}, \num{0.56253337000000003254029934396385}, \num{-0.82265637000000002565514023444848}, \num{-0.9987326100000000206335926122847}
\right]^{T}.
\end{aligned}
\end{equation*}
\clearpage
\subsection{Tables and Figures}\label{sec:Appendix-Tables and Figures}
In Table \ref{table:statistics of variables-full}, we give a full summary of the empirical data which is used to generate semi-synthetic data to study the difference between the IoC estimations and the IwC estimations.
Table \ref{table:light tail and heavy tail with alpha is 0.05 ATTE} and Table \ref{table:light tail and heavy tail with beta is 0.05 ATTE} summarize the ATTE estimations comprehensively across different choices of regressors $\hat{g}$ and different data properties per \eqref{eqt:simulated data model 1} \& \eqref{eqt:simulated data model 3} in the main paper based on simulated data. Table \ref{table:light tail and heavy tail with alpha is 0.05 ATTE} reports results based on the simulated data which examines strong causality while Table \ref{table:light tail and heavy tail with beta is 0.05 ATTE} reports results based on the simulated data which examines strong nonlinearity. In these comparisons, the regressors include the ordinary least square (OLS), LASSO, RIDGE, random forest (RF), xgboost (XGB), gated recurrent unit (GRU), convolutional neural network (CNN) and multi-layer perception (MLP). Moreover, the data properties cover both the light-tail case and the heavy-tail case. 
%
%
For the light-tail case, we assume that $\mathbf{U}$, $\mathbf{X}$ and $\mathbf{Z}$ follow the multivariate normal distributions with mean $0$ and different covariance matrices $[C_{ij}]$ which are $0.8+0.2\exp(-0.2|i-j|)$, $0.2+0.8\exp(-2|i-j|)$ and $0.5+0.5\exp(-0.5|i-j|)$ respectively. For the heavy-tail case, we assume that $\mathbf{U}$, $\mathbf{X}$ and $\mathbf{Z}$ follow the multivariate student-$t$ distributions with mean $0$ and different covariance matrices $[C_{ij}]$ which are $0.8+0.2\exp(-0.2|i-j|)$, $0.5+0.5\exp(-0.5|i-j|)$ and $0.5+0.5\exp(-0.5|i-j|)$ respectively. The corresponding degree of freedoms in generating the student-$t$ distribution of $\mathbf{U}$, $\mathbf{X}$ and $\mathbf{Z}$ are $10$, $10$ and $5$ respectively.
In Table \ref{table:ATTE-alpha_beta real}, we summarize the weighted average of ATTE in relative errors w.r.t the true ATTE using the semi-synthetic dataset when $\alpha$ is fixed at $5\%$ and $\beta$ is fixed at $5\%$.
Figure \ref{fig:consistency graph conditional expectation simulated} is a consistency analysis of $\hat{\theta}_{w}^{i\mid j}$. Figure \ref{fig:ATE_3_model_histogram real} is a histogram plot of weighted average of relative error of estimated ATE while Figure \ref{fig:ATTE_3_model_histogram real} is a histogram plot of weighted average of relative error of estimated ATTE based on the semi-synthetic data. The experiments are conducted for $500$ times and eight different models are used to train the function $g$, including OLS, LASSO, RIDGE, XGB, RF, GRU, CNN and MLP.

Finally, we state out the metrics generating the tables and figures, which are \eqref{eqt:consistency_theta_ij_mean}, \eqref{eqt:consistency_theta_ij_std} and \eqref{eqt:weighted average of relative error ATTE}. Indeed, \eqref{eqt:consistency_theta_ij_mean} and \eqref{eqt:consistency_theta_ij_std} are the mean and standard deviation of estimated $\theta^{i|j}$ respectively. Besides, \eqref{eqt:weighted average of relative error ATTE} is the weighted average of estimated $\theta^{i,k|j}$ in relative errors with true $\theta^{i,k|j}$.
%
\begin{subequations}
{\small
	\begin{equation}\label{eqt:consistency_theta_ij_mean}
	\begin{aligned}
	\frac{1}{n(n-1)}\overset{}{\underset{\underset{i \neq j}{1\leq i,j\leq n}}{\sum}}\left|\frac{\overset{M}{\underset{m=1}{\sum}}{\hat{\theta}}_{w}^{i|j;m}}{\overset{M}{\underset{m=1}{\sum}}\theta^{i|j;m}}-1\right|
	\end{aligned}
	\end{equation}
	\begin{equation}\label{eqt:consistency_theta_ij_std}
	\begin{aligned}
	\frac{1}{n(n-1)}\overset{}{\underset{\underset{i \neq j}{1\leq i,j\leq n}}{\sum}}\sqrt{\frac{1}{M-1}\overset{M}{\underset{m=1}{\sum}}\left([\hat{\theta}_{w}^{i|j;m}-\theta^{i|j;m}]-\frac{1}{M}\overset{M}{\underset{s=1}{\sum}}[\hat{\theta}_{w}^{i|j;s}-\theta^{i|j;s}]\right)^{2}}
	\end{aligned}
	\end{equation}
\begin{equation}
\begin{aligned}
\label{eqt:weighted average of relative error ATTE}
\frac{1}{M}\overset{M}{\underset{m=1}{\sum}}\;\underset{\underset{i\neq k \neq j}{1\leq i,k,j\leq n}}{\sum}
\left[ 
\frac{\left|\theta^{i,k|j;m}\right|}{  \sum{_{\underset{i\neq k \neq j}{1\leq i,k,j\leq n}}} \left|\theta^{i,k|j;m}\right| }\left|\frac{\hat{\theta}^{i,k|j;m}}{\theta^{i,k|j;m}}-1\right|
\right]
\end{aligned}
\end{equation}
}
\end{subequations}
\begin{flushleft}
Here, $M$ is the number of experiments conducted, $n$ is the number of treatments, $\theta^{i|j;m}$ and $\theta^{i,k|j;m}$ are the true value in the $m^{\textrm{th}}$ experiment, and  $\hat{\theta}^{i|j;m}$ and $\hat{\theta}^{i,k|j;m}$ are their estimate based on IwC (or IoC). 
\end{flushleft}

\begin{table}
	\caption{\small Full statistics of the mean, standard deviation, skewness, excess kurtosis, minimum, maximum and $5\%$, $25\%$, $50\%$, $75\%$, $95\%$ quantiles of the credit loan data.}
	\label{table:statistics of variables-full}
	\resizebox{\linewidth}{!}{%
		\centering
		\begin{tabular}{cccccccccccccccc}
			\toprule
			$\mathbf{\textit{N}}$ & $\textbf{mean}$ & $\textbf{std}$ & $\textbf{skewness}$ & $\textbf{kurtosis}$ & $\textbf{min}$ & $\textbf{max}$ & $\mathbf{5\%}$ & $\mathbf{25\%}$ & $\mathbf{50\%}$ & $\mathbf{75\%}$ & $\mathbf{95\%}$\\
			\midrule
			\textbf{credit limit} & \num{4750.7639424999997572740539908409}&	\num{2383.1878141451397823402658104897}&	\num{0.62026806528459299538980076249572}&	\num{3.9016062833883600191597906814422}&	\num[round-precision=0]{100.0}&	\num[round-precision=0]{50000.0}&	\num[round-precision=0]{600.0}&	\num[round-precision=0]{3000.0}&	\num[round-precision=0]{4800.0}&	\num[round-precision=0]{6000.0}&	\num[round-precision=0]{9000.0}\\
			$x_{1}$&\num{0.43570695474969101113060787611175}&	\num{0.16260669562456100956104876331665}&	\num{0.48414282203062297682549797173124}&	\num{0.15818788694934099448019537703658}&	\num{1.2130701e-28} &	\num[round-precision=0]{1}&	\num{0.19493635649999999093218150392204}&	\num{0.32123119999999999452100496455387}&	\num{0.41935048999999902097357562524849}&	\num{0.53493368749999903322134287009249}&	\num{0.73419440999999896479977223862079}\\
			$x_{2}$&\num{0.35543009479415299534110772583517}&	\num{0.15629726006588501308058880567842}&	\num{1.095672504357960042042918757943}&	\num{1.4842769348800499695784083087347}&	\num[round-precision=0]{0.0}&	\num[round-precision=0]{1.0}&	\num{0.15456493199999998822136149101425}&	\num{0.25091799500000000477939465781674}&	\num{0.32525062500000001541877736599417}&	\num{0.42589326249999998053041849743749}&	\num{0.67029882199999901715159467130434}\\
			$x_{3}$&\num{0.47785278011265497344339792107348}&	\num{0.18151822651955901166687112890941}&	\num{0.10108142973965300592809057889099}&	\num{-0.523694845042341983187839105085}&	\num[round-precision=0]{0.0}&	\num[round-precision=0]{1.0}&	\num{0.19036235749999999566917097126861}&	\num{0.34386252499999997445101485027408}&	\num{0.47158515499999997810931517960853}&	\num{0.60895599499999997217258851378574}&	\num{0.78394242849999895561552420986118}\\
			$x_{4}$&\num{0.53611164701833502466143954734434}&	\num{0.18202545588009200194790082605323}&	\num{-0.3107180063338320086252508644975}&	\num{-0.4005292024821500151965381064656}&	\num[round-precision=0]{0.0}&	\num[round-precision=0]{1.0}&	\num{0.21281249950000000192851246083592}&	\num{0.4109937449999990222160306529986}&	\num{0.55211674999999904578373843833106}&	\num{0.67176779999999902681651064995094}&	\num{0.80900629299999904375795267696958}\\
			$x_{5}$&\num{0.50170605579501204029924110727734}&	\num{0.16705136057589400899736631345149}&	\num{-0.043520554933936403396721459557739}&	\num{-0.32693368062887800951088479450846}&	\num[round-precision=0]{0.0}&	\num[round-precision=0]{1.0}&	\num{0.22705369499999999982620124683308}&	\num{0.38321779499999897300455131698982}&	\num{0.50315642999999998785654042876558}&	\num{0.62153442249999901747514741146006}&	\num{0.77216439149999904767440739306039}\\
			$x_{6}$&\num{0.45140319917442800479179254580231}&	\num{0.14962560866752799837442466923676}&	\num{0.40420170738818800293401523049397}&	\num{0.16647395268423900716037167057948}&	\num[round-precision=0]{0.0}&	\num[round-precision=0]{1.0}&	\num{0.23052196949999900765782001599291}&	\num{0.34590243249999902408475804804766}&	\num{0.43732346499999902311728305903671}&	\num{0.54576334499999901161970683460822}&	\num{0.7190600114999989855846251884941}\\
			$x_{7}$&\num{0.50474042459551304862230836079107}&	\num{0.15720757227686699142310544630163}&	\num{-0.091050462700349402145327815105702}&	\num{-0.0060015150473331502212204213719815}&	\num[round-precision=0]{0.0}&	\num[round-precision=0]{1.0}&	\num{0.23735045350000000263790411736409}&	\num{0.40047061249999899024132332669978}&	\num{0.51039389999999995595203472475987}&	\num{0.61074233749999995435331356929964}&	\num{0.75636610500000001078291234080098}\\
			$x_{8}$&\num{0.50088014814718295752982157864608}&	\num{0.1747915991476670094506573605031}&	\num{0.17047273018469499827887148057926}&	\num{-0.37866385804333901887730462476611}&	\num[round-precision=0]{0.0}&	\num[round-precision=0]{1.0}&	\num{0.23267932499999999240714032566757}&	\num{0.37401206749999998946165646884765}&	\num{0.48881664499999999407009454444051}&	\num{0.62324797499999995409325492801145}&	\num{0.80375959299999999441155296153738}\\
			$x_{9}$&\num{0.53369370557267703514270351661253}&	\num{0.15213521764842799499639625082636}&	\num{-0.33375358434002100338489071873482}&	\num{0.056857297653664297387710035991404}&	\num{3.8693236e-32}&	\num[round-precision=0]{1.0}&	\num{0.26563953000000001258840143236739}&	\num{0.43687785000000001200959331981721}&	\num{0.54284749999999903913305843161652}&	\num{0.64103876250000002556816980359145}&	\num{0.76647010500000001265163973585004}\\
			$x_{10}$&\num{0.55033424627606297718784844619222}&	\num{0.16609046242251698788372493709176}&	\num{-0.31659403085895998497534264970454}&	\num{0.062514676108409800159293467913812}&	\num{5.7953964e-32}&	\num[round-precision=0]{1.0}&	\num{0.26970070050000000128065380522457}&	\num{0.43934854000000000961989599090884}&	\num{0.55969199499999999769528358228854}&	\num{0.66968852499999997895940850867191}&	\num{0.80561105249999998001442236272851}\\
			$z_{1}$&\num{0.52257147512797896826697297001374}&	\num{0.17721332796673200382642221484275}&	\num{-0.23599684689741801113527230882028}&	\num{-0.44564375901910902033620232032263}&	\num{8.0244887e-35}&	\num[round-precision=0]{1.0}&	\num{0.22108350199999998730859829265682}&	\num{0.39577681749999898963565669873788}&	\num{0.53269085000000004903597528027603}&	\num{0.65767335999999998463749761867803}&	\num{0.79177621999999903135147860666621}\\
			$z_{2}$&\num{0.49544448028780502246704031676927}&	\num{0.16422003275190399929073237217381}&	\num{0.063587401559871997713813129848859}&	\num{-0.13786734788979498889105457237747}&	\num{3.9549168E-27}&	\num[round-precision=0]{1.0}&	\num{0.22899428599999999134340100681584}&	\num{0.38137290749999902539357776731777}&	\num{0.49403034000000001224606194227817}&	\num{0.60740780999999999245630988298217}&	\num{0.76842611499999902147095554028056}\\
			$z_{3}$&\num{0.42081834238723397900727718479175}&	\num{0.1578412291587029947503850735302}&	\num{0.43799065384525098787449337578437}&	\num{0.21611596989006298663582583685638}&	\num[round-precision=0]{0.0}&	\num[round-precision=0]{1.0}&	\num{0.18271210749999999833903530088719}&	\num{0.30596782999999999619333834743884}&	\num{0.41124303499999997901781512155139}&	\num{0.52230288999999996413237113301875}&	\num{0.69273910499999902246059946264722}\\
			$z_{4}$&\num{0.60215352213873696296531079497072}&	\num{0.23369188456080799976000150763866}&	\num{-0.55752657291486895640986176658771}&	\num{-1.3358899773025700863371412197012}&	\num[round-precision=0]{0.0}&	\num[round-precision=0]{1.0}&	\num{0.23397520750000000422375023845234}&	\num{0.32901011000000002226784090453293}&	\num{0.72543975000000004982325663149823}&	\num{0.78791525250000005176076456336887}&	\num{0.85652687749999900646002970461268}\\
			$z_{5}$&\num{0.46049214853104297739960770741163}&	\num{0.15584976727897101311803851331206}&	\num{0.10911590830869899448885007586796}&	\num{-0.093198429139645697083516040493123}&	\num{4.538858e-24}&	\num[round-precision=0]{1.0}&	\num{0.20920313249999999971606712279026}&	\num{0.35361616499999998231018594196939}&	\num{0.45588944999999997387263306336536}&	\num{0.56436625499999903965431258257013}&	\num{0.72511571050000001026347717925091}\\
			$z_{6}$&\num{0.43969427677691397260417716097436}&	\num{0.16440243120383699060305104922008}&	\num{0.41339095770966799436862970651418}&	\num{-0.10576388887713900066600558602659}&	\num{1.8881841e-24}&	\num[round-precision=0]{1.0}&	\num{0.19660353499999999593583765999938}&	\num{0.31920618000000000602511818215135}&	\num{0.42442524500000000653443521514419}&	\num{0.54654076499999904026338981566369}&	\num{0.73579605949999904446201526297955}\\
			$z_{7}$&\num{0.49473457478292698352717593479611}&	\num{0.14657029476432401060570498430025}&	\num{-0.056402870274454702803623717954906}&	\num{0.14925560425746900450327814269258}&	\num[round-precision=0]{0.0}&	\num[round-precision=0]{1.0}&	\num{0.25318797799999998021647229506925}&	\num{0.39742486500000001603538635208679}&	\num{0.49673074000000000394194898944988}&	\num{0.59270089999999997498747461577295}&	\num{0.73218357999999994500939237696002}\\
			$z_{8}$&\num{0.5025217206501519928707466533524}&	\num{0.15242671620093101170212435135909}&	\num{-0.055247290212409497223156051859405}&	\num{0.0038270918330835998477645887305698}&	\num[round-precision=0]{0.0}&	\num[round-precision=0]{1.0}&	\num{0.24712508749999900770077942979697}&	\num{0.40024379249999997298914422572125}&	\num{0.50608722999999999903053549132892}&	\num{0.60625983250000003987878471889417}&	\num{0.74685117999999994786719526018715}\\
			$z_{9}$&\num{0.3957697698219569959476871190418}&	\num{0.21994008254957500558823824121646}&	\num{0.83323880940651096072002701475867}&	\num{-0.29539809236904701617021373749594}&	\num[round-precision=0]{0.0}&	\num[round-precision=0]{1.0}&	\num{0.1384606359999999980470164473445}&	\num{0.23005235250000000135628397401888}&	\num{0.32397428000000000336555672220129}&	\num{0.53529714249999904573940057161963}&	\num{0.83549927000000001608981392564601}\\
			$z_{10}$&\num{0.45004021925189602315597880988207}&	\num{0.16073503844106901139454635085713}&	\num{0.34399244647800897300982114757062}&	\num{-0.051249082788190403192629673867486}&	\num[round-precision=0]{0.0}&	\num[round-precision=0]{1.0}&	\num{0.20647964850000000147645096149063}&	\num{0.3348794600000000176720504896366}&	\num{0.43721425000000002647482233442133}&	\num{0.5553383800000000203311856239452}&	\num{0.73338417699999902676921692545875}\\
			$z_{11}$&\num{0.4738105494626789759848861649516}&	\num{0.15739544907715699584294100077386}&	\num{0.075221884343309106113473205823539}&	\num{-0.094171916218853901625607250025496}&	\num[round-precision=0]{0.0}&	\num[round-precision=0]{1.0}&	\num{0.21830151899999999942636463856616}&	\num{0.36536141249999998237285581126343}&	\num{0.47173505999999998383742649821215}&	\num{0.58057833500000000093166363512864}&	\num{0.73571329999999901527729662120692}\\
			$z_{12}$&\num{0.43270457922947802575919240553048}&	\num{0.19165442777016800102174443054537}&	\num{0.40674507874703702547947159473551}&	\num{-0.35803334086705701500008558468835}&	\num{2.38649119999999e-36}&	\num[round-precision=0]{1.0}&	\num{0.15171232500000000897877328043251}&	\num{0.28750504249999997430364828687743}&	\num{0.41281026500000000956092094384076}&	\num{0.56092629999999998879900431347778}&	\num{0.78162224499999899407498560321983}\\
			$z_{13}$&\num{0.47875460459198199147934360553336}&	\num{0.15289956292205700294495329671918}&	\num{0.12280680914790699509442362113987}&	\num{-0.14660971782884599234897393671417}&	\num{7.38224e-22} & \num[round-precision=0]{1.0}&	\num{0.2347282369999990070397188901552}&	\num{0.37164021250000001117541614803486}&	\num{0.47420434499999897237643153857789}&	\num{0.58212991250000001297593144045095}&	\num{0.7376164399999990406442407220311}\\
			$z_{14}$&\num{0.48108439087995197303015970646811}&	\num{0.16592951219894300218804517044191}&	\num{0.13591346183899399857608614183846}&	\num{-0.28203423585331899303696445713285}&	\num[round-precision=0]{0.0}&	\num[round-precision=0]{1.0}&	\num{0.2149128040000000128273427435488}&	\num{0.36334900999999997228684378569596}&	\num{0.47667106999999997452022171273711}&	\num{0.59386567999999995137727637484204}&	\num{0.76474757799999903973287018743576}\\
			$z_{15}$&\num{0.61751902549990800306289884247235}&	\num{0.16422133176865200132610311811732}&	\num{-0.8983303774306600208987561018148}&	\num{0.46727243838010001919514024848468}&	\num[round-precision=0]{0.0}&	\num[round-precision=0]{1.0}&	\num{0.29262764000000002218015993094014}&	\num{0.52489649000000004885890803052462}&	\num{0.65912774999999901304192917450564}&	\num{0.73403599249999995635107552516274}&	\num{0.82165986999999995887833392771427}\\
			$z_{16}$&\num{0.42804997678293099649948771912022}&	\num{0.1605016594930120110706894820396}&	\num{0.4999040438768649896594808978989}&	\num{0.17476580864575300133800794810668}&	\num[round-precision=0]{0.0}&	\num[round-precision=0]{1.0}&	\num{0.18955781799999998904482367834134}&	\num{0.31271312499999998069100115571928}&	\num{0.41366296999999901951028391522414}&	\num{0.52840188749999994488604215803207}&	\num{0.71575301699999904769811109872535}\\
			$z_{17}$&\num{0.3987101479013169802456673096458}&	\num{0.16243450897199399052261981069023}&	\num{0.95540487221567804709820848074742}&	\num{0.96968799694445495518380084831733}&	\num[round-precision=0]{0.0}&	\num[round-precision=0]{1.0}&	\num{0.19242433899999999979968379193451}&	\num{0.28079596000000001110308289753448}&	\num{0.36640267500000001099635937862331}&	\num{0.48795141250000001464925958316599}&	\num{0.71295349349999903587615790456766}\\
			$z_{18}$&\num{0.54023826390634499539089574682293}&	\num{0.16241724812996899407480100308021}&	\num{-0.23512252487459500693844915986119}&	\num{-0.14704782638330099464596401048766}&	\num[round-precision=0]{0.0}&	\num[round-precision=0]{1.0}&	\num{0.26187253449999997600983192569402}&	\num{0.43184286249999997986392941129452}&	\num{0.54654639999999998778434928681236}&	\num{0.6565980750000000032429170460091}&	\num{0.79539305999999998508798171314993}\\
			$z_{19}$&\num{0.39067114075234599113173317164183}&	\num{0.29975076620611501576618707076705}&	\num{0.6200070015180400018195427946921}&	\num{-1.4602845869620999508242675801739}&	\num[round-precision=0]{0.0}&	\num[round-precision=0]{1.0}&	\num{0.1096017955000000015930439190015}&	\num{0.15517379249999999069054013034474}&	\num{0.20397438000000001068201527232304}&	\num{0.77674286999999997416210817391402}&	\num{0.8476692399999989913084164072643}\\
			$z_{20}$&\num{0.47555568995074998239758201634686}&	\num{0.16270320618060399309179331339692}&	\num{0.25997636041649102578432461996272}&	\num{-0.14331864542812100538071717892308}&	\num[round-precision=0]{0.0}&	\num[round-precision=0]{1.0}&	\num{0.22471115199999999734181699295732}&	\num{0.35874118500000001796834681044857}&	\num{0.46612009500000001205322064379288}&	\num{0.58452594999999996083772657584632}&	\num{0.75620601699999900890958315358148}\\
			$u_{1}$&\num{0.5416302888928450354200094807311}&	\num{0.15347979082156901209366139937629}&	\num{-0.29616396125543797968049375413102}&	\num{-0.014261478634370600496183989491783}&	\num{1.7548352e-21}&	\num[round-precision=0]{1.0}&	\num{0.27228436000000000305476532957982}&	\num{0.44213928000000002338509830224211}&	\num{0.55110886999999997293286924104905}&	\num{0.65008674499999996587717987495125}&	\num{0.77683331149999901299452176317573}\\
			$u_{2}$&\num{0.53354396653203495226591712707886}&	\num{0.15479612595389100060394582669687}&	\num{-0.15730993842807200477906803826045}&	\num{-0.16546010827887600025398739944649}&	\num[round-precision=0]{0.0}&	\num[round-precision=0]{1.0}&	\num{0.27033662899999999496003511012532}&	\num{0.42887788999999998384993205036153}&	\num{0.53907640000000001112567815653165}&	\num{0.64296192249999994938747249761946}&	\num{0.77787400249999905099684838205576}\\
			$u_{3}$&\num{0.64702875165567397264254623223678}&	\num{0.176165853476506012453839389309}&	\num{-1.3130496310451000852026481879875}&	\num{1.4942259356184799390376838346128}&	\num[round-precision=0]{0.0}&	\num[round-precision=0]{1.0}&	\num{0.26484521750000000750446815800387}&	\num{0.57723338749999997254747086117277}&	\num{0.69676154999999995176551692566136}&	\num{0.76685917000000003440618456806988}&	\num{0.84474429650000004698995326180011}\\
			$u_{4}$&\num{0.3660526820366720168742347141233}&	\num{0.18139608542170199134702102128358}&	\num{1.3622537023041398906997301310184}&	\num{1.2091368563188000795349807958701}&	\num[round-precision=0]{0.0}&	\num[round-precision=0]{1.0}&	\num{0.1859362850000000066952310362467}&	\num{0.24000061749999901383745282146265}&	\num{0.30329105499999997608284729722072}&	\num{0.43162791249999998832720393693307}&	\num{0.7788593249999999912347448116634}\\
			$u_{5}$&\num{0.5100078515037459458980606541445}&	\num{0.166766379178065998756252952262}&	\num{-0.033484070243063303318198364877389}&	\num{-0.29206833475751697859479349972389}&	\num[round-precision=0]{0.0}&	\num[round-precision=0]{1.0}&	\num{0.23458681300000000513605868945888}&	\num{0.39361126000000001834422391766566}&	\num{0.5110950800000000349143647326855}&	\num{0.62740306249999999721467247582041}&	\num{0.78294500499999997167321907909354}\\
			$u_{6}$&\num{0.47930065777430602702580131335708}&	\num{0.1632342244330400038432316023318}&	\num{0.14453498514930399054811971382151}&	\num{-0.1041447003404910032209329528996}&	\num[round-precision=0]{0.0}&	\num[round-precision=0]{1.0}&	\num{0.22099607299999998732964456849004}&	\num{0.36634233249999997861223732797953}&	\num{0.47168179999999998441850834751676}&	\num{0.58933703500000000907732555788243}&	\num{0.75707182699999997499418213919853}\\
			$u_{7}$&\num{0.53373614055177898141124614994624}&	\num{0.1588095296565009972855619935217}&	\num{-0.14956586603835800297623848109652}&	\num{-0.046468873680104202938956348134525}&	\num[round-precision=0]{0.0}&	\num[round-precision=0]{1.0}&	\num{0.26342732949999997371648419175472}&	\num{0.42807118499999902150321418048406}&	\num{0.53898960500000003825959993264405}&	\num{0.644229902500000006781988304283}&	\num{0.78523108999999902124500295030884}\\
			$u_{8}$&\num{0.52755795749514300396754151734058}&	\num{0.15704480428721198648212009629788}&	\num{-0.2788050279363429861945178345195}&	\num{-0.021159269219764999248045711510713}&	\num[round-precision=0]{0.0}&	\num[round-precision=0]{1.0}&	\num{0.25164249699999902132319107295189}&	\num{0.42668361750000000087723606156942}&	\num{0.53567776499999997241729943198152}&	\num{0.63805237749999998975170001358492}&	\num{0.77083172999999904906331948950537}\\
			$u_{9}$&\num{0.47666843037424200257845541273127}&	\num{0.17216006908560899102411667627166}&	\num{-0.052469123850884299975483315847669}&	\num{-0.21978703637430199724356327806163}&	\num[round-precision=0]{0.0}&	\num[round-precision=0]{1.0}&	\num{0.18185577150000001300789165270544}&	\num{0.35899995750000002203705662395805}&	\num{0.48187050500000000452516246696177}&	\num{0.59719080000000002161897327823681}&	\num{0.74909239499999902278659646981396}\\
			$u_{10}$&\num{0.47691393844119500666778321829042}&	\num{0.15499893133164699410286857528263}&	\num{0.18330661948177701114914839308767}&	\num{-0.14905969851978501106160024392011}&	\num[round-precision=0]{0.0}&	\num[round-precision=0]{1.0}&	\num{0.23111855800000000171756653344346}&	\num{0.36665408499999901925647805001063}&	\num{0.47186651499999998637946418966749}&	\num{0.58073036249999998581472482328536}&	\num{0.74140070399999902228671544435201}\\
			\bottomrule
	\end{tabular}}
\end{table}
\begin{table}
	\vspace{-0.2cm}
	\caption{\small Comparison of the ATTE estimations per \eqref{eqt:weighted average of relative error ATTE} with strong \textbf{causalities} in the simulated data. Number of observations $N=40000$, Number of experiments $M=100$.}
	\label{table:light tail and heavy tail with alpha is 0.05 ATTE}
	\resizebox{\linewidth}{!}{%
		\begin{tabular}{cccccccccccccccccc}
			\toprule
			& \multicolumn{6}{c}{Light tail} &\multicolumn{6}{c}{Heavy tail}\\
			& \multicolumn{3}{c}{$\beta=5\%$}& \multicolumn{3}{c}{$\beta=95\%$}& \multicolumn{3}{c}{$\beta=5\%$}& \multicolumn{3}{c}{$\beta=95\%$}\\
			\cmidrule(r){2-13}
			\multirow{3}{*}{} & \multicolumn{2}{c}{Weighted avg.} & \multicolumn{1}{c}{Mean Err.} & \multicolumn{2}{c}{Weighted avg.} & \multicolumn{1}{c}{Mean Err.} & \multicolumn{2}{c}{Weighted avg.} & \multicolumn{1}{c}{Mean Err.} & \multicolumn{2}{c}{Weighted avg.} & \multicolumn{1}{c}{Mean Err.}\\
			& \multicolumn{2}{c}{ATTE} & \multicolumn{1}{c}{reduction} & \multicolumn{2}{c}{ATTE} & \multicolumn{1}{c}{reduction} & \multicolumn{2}{c}{ATTE} & \multicolumn{1}{c}{reduction} & \multicolumn{2}{c}{ATTE} & \multicolumn{1}{c}{reduction}\\
			\midrule
			\multirow{1}{*}{Regressor} & a) \textbf{IoC} & b) \textbf{IwC} & $|\text{IwC/IoC-1}|$ & a) \textbf{IoC} & b) \textbf{IwC} & $|\text{IwC/IoC-1}|$ & a) \textbf{IoC} & b) \textbf{IwC} & $|\text{IwC/IoC-1}|$ & a) \textbf{IoC} & b) \textbf{IwC} & $|\text{IwC/IoC-1}|$ \\
			\midrule
			OLS&\percentage{0.42674529192685201062218425249739}&	\percentage{0.28517766200313598989879437795025}&	\percentage{0.33173800063383301894859300773533}&	\percentage{0.40521731595072002507507136215281}&	\percentage{0.36990566188331802655042679361941}&	\percentage{0.087142510148052299001619758200832}&	\percentage{0.80920386674208599764313021296402}&	\percentage{0.72051789311527303372173491879948}&	\percentage{0.10959657667463799934903789790042}&	\percentage{0.87980580176382594537187742389506}&	\percentage{0.89685311105929998998220753492205}&	\percentage{-0.019376218321472099548330447760236}\\
			LASSO&\percentage{0.42674409987771200203354737823247}&	\percentage{0.28012748383329799661822789857979}&	\percentage{0.34357034130390601278648432526097}&	\percentage{0.40519013667116399179946029107668}&	\percentage{0.28786279034706802049115026420623}&	\percentage{0.28956121017159297537091333651915}&	\percentage{0.78534617001380002676569347386248}&	\percentage{0.64740568422277000593112461501732}&	\percentage{0.17564290889532899186953329717653}&	\percentage{0.87976566684164403397971909726039}&	\percentage{0.76934524400909698460537811115501}&	\percentage{0.12551117529848099252376414369792}\\
			RIDGE&\percentage{0.42661013232655597970932603857364}&	\percentage{0.28374824160602302169920108099177}&	\percentage{0.33487692835944299352135544722842}&	\percentage{0.40498285802560102242253492477175}&	\percentage{0.36293649293271601496968514766195}&	\percentage{0.10382257979478799747585782142778}&	\percentage{0.80614569101689903529717184937908}&	\percentage{0.71716563499051000363238017598633}&	\percentage{0.11037714028359499329745574414119}&	\percentage{0.87618722686951400202559625540744}&	\percentage{0.89037095375468200408874963613926}&	\percentage{-0.016188009195072201179543824878238}\\
			RF&\percentage{0.68345792776398195389475631600362}&	\percentage{0.11773571967467100218840414527222}&	\percentage{0.82773523447176999834340449524461}&	\percentage{0.68553522594745597285026406098041}&	\percentage{0.12946029509341000496291940180527}&	\percentage{0.81115442329825204748061651116586}&	\percentage{0.68794518025506701253135588558507}&	\percentage{0.12803817678082698994224131183728}&	\percentage{0.8138831691017079927874533495924}&	\percentage{0.69115299890936798821172715179273}&	\percentage{0.15294296268404600214196875640482}&	\percentage{0.77871330526614301881238588975975}\\
			XGB&\percentage{0.8759918682640980280851294992317}&	\percentage{0.12492188354303500330555465325233}&	\percentage{0.8573937863252240054734443219786}&	\percentage{0.87713875477585001583236135047628}&	\percentage{0.15585892461826500454691313279909}&	\percentage{0.82230984120853900520842216792516}&	\percentage{0.87135296222672597998837318300502}&	\percentage{0.18335608369034400721986344251491}&	\percentage{0.78957312175563998390970255059074}&	\percentage{0.86592616631998997611674440122442}&	\percentage{0.23746065435509500218813627725467}&	\percentage{0.72577263098047295475367945982725}\\
			GRU&\percentage{0.26619181769250999103348931384971}&	\percentage{0.029582030962104599591810938363778}&	\percentage{0.88886949561959704979585694673005}&	\percentage{0.54437728318369105195984047895763}&	\percentage{0.18733994097087799945988706440403}&	\percentage{0.65586377911426496645219685888151}&	\percentage{0.26398674468970700912606730526022}&	\percentage{0.065243362434137905370157284323795}&	\percentage{0.75285364228864704028154619663837}&	\percentage{0.50634208881740294661000234555104}&	\percentage{0.27008188576188102425490455971158}&	\percentage{0.46660194416648897997035305706959}\\
			CNN&\percentage{0.20641751107793798736267376625619}&	\percentage{0.037910250312056699428797656992174}&	\percentage{0.81634188827253695297514468620648}&	\percentage{0.46731069644453898659364199374977}&	\percentage{0.17907686276276399417461959728826}&	\percentage{0.61679271601261598245713457799866}&	\percentage{0.21344352957694098793517412104848}&	\percentage{0.046880572881321899703088718069921}&	\percentage{0.78036076814208299445851935161045}&	\percentage{0.47780552538747900204185725669959}&	\percentage{0.22172333628744700373403020421392}&	\percentage{0.53595485086188798806716704348219}\\
			MLP&\percentage{0.20197828037252299471226990590367}&	\percentage{0.032135918035045001051042135031821}&	\percentage{0.84089418933672199774065347810392}&	\percentage{0.47707919530695702681200032202469}&	\percentage{0.17507425151130501239649106537399}&	\percentage{0.63302895361290900666517700301483}&	\percentage{0.18709951751898298843990176010266}&	\percentage{0.040577208859634297799523494632012}&	\percentage{0.78312499466751794852115153844352}&	\percentage{0.43306140512329699410187799912819}&	\percentage{0.23960107021189300247066000792984}&	\percentage{0.44672725997442297440898073546123}\\			
			\bottomrule
	\end{tabular}}
\end{table}
\vspace{-0.1em}
\begin{table}
	\caption{\small Comparison of the ATTE estimations per \eqref{eqt:weighted average of relative error ATTE} with strong \textbf{nonlinearities} in the simulated data. Number of observations $N=40000$, Number of experiments $M=100$.}
	\label{table:light tail and heavy tail with beta is 0.05 ATTE}
	\resizebox{\linewidth}{!}{%
		\begin{tabular}{cccccccccccccccccc}
			\toprule
			& \multicolumn{6}{c}{Light tail} &\multicolumn{6}{c}{Heavy tail}\\
			& \multicolumn{3}{c}{$\alpha=5\%$}& \multicolumn{3}{c}{$\alpha=95\%$}& \multicolumn{3}{c}{$\alpha=5\%$}& \multicolumn{3}{c}{$\alpha=95\%$}\\
			\cmidrule(r){2-13}
			\multirow{3}{*}{} & \multicolumn{2}{c}{Weighted avg.} & \multicolumn{1}{c}{Mean Err.} & \multicolumn{2}{c}{Weighted avg.} & \multicolumn{1}{c}{Mean Err.} & \multicolumn{2}{c}{Weighted avg.} & \multicolumn{1}{c}{Mean Err.} & \multicolumn{2}{c}{Weighted avg.} & \multicolumn{1}{c}{Mean Err.}\\
			& \multicolumn{2}{c}{ATTE} & \multicolumn{1}{c}{reduction} & \multicolumn{2}{c}{ATTE} & \multicolumn{1}{c}{reduction} & \multicolumn{2}{c}{ATTE} & \multicolumn{1}{c}{reduction} & \multicolumn{2}{c}{ATTE} & \multicolumn{1}{c}{reduction}\\
			\midrule
			\multirow{1}{*}{Regressor} & a) \textbf{IoC} & b) \textbf{IwC} & $|\text{IwC/IoC-1}|$ & a) \textbf{IoC} & b) \textbf{IwC} & $|\text{IwC/IoC-1}|$ & a) \textbf{IoC} & b) \textbf{IwC} & $|\text{IwC/IoC-1}|$ & a) \textbf{IoC} & b) \textbf{IwC} & $|\text{IwC/IoC-1}|$ \\
			\midrule
			OLS&\percentage{0.42674529199999999828918362254626}&	\percentage{0.28517766199999999843228692952835}&	\percentage{0.33173800100000000457001192444295}&	\percentage{0.43142600100000000340472183779639}&	\percentage{0.36046136400000000632459773441951}&	\percentage{0.1644885490000000116683764872505}&	\percentage{0.80920386700000002111465846610372}&	\percentage{0.7205178930000000203648369279108}&	\percentage{0.10959657700000000057904969708034}&	\percentage{0.86775510099999997315478594828164}&	\percentage{0.86295715200000000510982545165461}&	\percentage{0.0055291500000000000369926311805102}\\
			LASSO&\percentage{0.42674410000000001508269065197965}&	\percentage{0.28012748399999998216713947840617}&	\percentage{0.34357034100000000131913679979334}&	\percentage{0.43140165699999999393554617199698}&	\percentage{0.26493598499999998496789999080647}&	\percentage{0.38587165699999997903191228942887}&	\percentage{0.78534616999999995456960277806502}&	\percentage{0.64740568399999998128180322964909}&	\percentage{0.17564290900000001371950020256918}&	\percentage{0.86771023499999999639697989550768}&	\percentage{0.72120191300000002776471319521079}&	\percentage{0.16884475500000001302325358665257}\\
			RIDGE&\percentage{0.42661013199999997524969330697786}&	\percentage{0.28374824199999998386445554388047}&	\percentage{0.33487692800000001813742755985004}&	\percentage{0.43118683499999999053642341095838}&	\percentage{0.35343982099999998736095108142763}&	\percentage{0.18030934100000001185648557111563}&	\percentage{0.80614569099999999757244495413033}&	\percentage{0.71716563499999996800227108906256}&	\percentage{0.11037713999999999858747656844571}&	\percentage{0.8641886060000000258085606219538}&	\percentage{0.85631285300000004045983814648935}&	\percentage{0.0091134660000000006219167403287429}\\
			RF&\percentage{0.68345792800000004785232476933743}&	\percentage{0.11773572000000000203900896167397}&	\percentage{0.82773523400000004190957270111539}&	\percentage{0.67645121100000005220920229476178}&	\percentage{0.13279306399999998844485560312023}&	\percentage{0.80369158699999998507479403997422}&	\percentage{0.68794518000000004498417638387764}&	\percentage{0.1280381770000000030140796525302}&	\percentage{0.81388316900000001741233290886157}&	\percentage{0.67799109199999996189234252597089}&	\percentage{0.14662494900000000441764314018656}&	\percentage{0.78373617200000000870829808263807}\\
			XGB&\percentage{0.87599186799999995134413666164619}&	\percentage{0.12492188399999999692813190677043}&	\percentage{0.85739378600000004659875685320003}&	\percentage{0.87653581300000005249728474154836}&	\percentage{0.15594535500000000771692043599614}&	\percentage{0.82208900900000003719725327755441}&	\percentage{0.87135296200000000865770743985195}&	\percentage{0.18335608400000000250962273184996}&	\percentage{0.78957312199999996060739704262232}&	\percentage{0.86394394900000004433593403518898}&	\percentage{0.24816846400000000527619192780548}&	\percentage{0.71274934599999995032959532181849}\\
			GRU&\percentage{0.26619181800000002446893176966114}&	\percentage{0.029582031000000001685901196424311}&	\percentage{0.88886949599999998117283439569292}&	\percentage{0.5748850540000000064466689764231}&	\percentage{0.20740460399999999241416048789688}&	\percentage{0.63922422000000000963382262852974}&	\percentage{0.26398674500000002263178089378926}&	\percentage{0.065243361999999999190080757216492}&	\percentage{0.75285364200000004508694928517798}&	\percentage{0.54997794600000005083018095319858}&	\percentage{0.24785633800000000936947230911755}&	\percentage{0.54933404200000002237658236481366}\\
			CNN&\percentage{0.20641751099999999796708038957149}&	\percentage{0.03791024999999999950395235259748}&	\percentage{0.81634188799999996000167357124155}&	\percentage{0.50095247399999998094699549255893}&	\percentage{0.18998984699999998948527490938432}&	\percentage{0.62074277200000005372260147851193}&	\percentage{0.21344352999999999243030401885335}&	\percentage{0.046880573000000001770981583604225}&	\percentage{0.78036076799999998243606569303665}&	\percentage{0.48066603800000001767145363373857}&	\percentage{0.22371931000000000455685267297667}&	\percentage{0.53456393400000001836502860896871}\\
			MLP&\percentage{0.20197828000000001003044758363103}&	\percentage{0.032135917999999999428162311687629}&	\percentage{0.84089418900000001411143557561445}&	\percentage{0.51581325200000005537503966479562}&	\percentage{0.18760613000000000982758763257152}&	\percentage{0.63629059700000001331687826677808}&	\percentage{0.18709951799999999244583648305706}&	\percentage{0.040577209000000002969699153254624}&	\percentage{0.78312499499999999041932596810511}&	\percentage{0.47911406899999997577310750784818}&	\percentage{0.24977098999999999828780516963889}&	\percentage{0.47868157999999999541174133810273}\\
			\bottomrule
	\end{tabular}}
\end{table}

\begin{table}
	\caption{\small Comparison of the ATTE estimations per \eqref{eqt:weighted average of relative error ATTE} under different \textbf{nonlinearities} and \textbf{causalities} in the semi-synthetic data. Number of observations $N=80000$, Number of experiments $M=100$.}
	\label{table:ATTE-alpha_beta real}
	\resizebox{\linewidth}{!}{%
		\begin{tabular}{cccccccccccccccccc}
			\toprule			
			& \multicolumn{6}{c}{$\alpha=5\%$} &\multicolumn{6}{c}{$\beta=5\%$}\\
			& \multicolumn{3}{c}{$\beta=5\%$}& \multicolumn{3}{c}{$\beta=50\%$}& \multicolumn{3}{c}{$\alpha=5\%$}& \multicolumn{3}{c}{$\alpha=50\%$}\\
			\cmidrule(r){2-13}
			\multirow{3}{*}{} & \multicolumn{2}{c}{Weighted avg.} & \multicolumn{1}{c}{Mean Error} & \multicolumn{2}{c}{Weighted avg.} & \multicolumn{1}{c}{Mean Error} & \multicolumn{2}{c}{Weighted avg.} & \multicolumn{1}{c}{Mean Error} & \multicolumn{2}{c}{Weighted avg.} & \multicolumn{1}{c}{Mean Error}\\
			& \multicolumn{2}{c}{ATTE} & \multicolumn{1}{c}{reduction} & \multicolumn{2}{c}{ATTE} & \multicolumn{1}{c}{reduction} & \multicolumn{2}{c}{ATTE} & \multicolumn{1}{c}{reduction} & \multicolumn{2}{c}{ATTE} & \multicolumn{1}{c}{reduction}\\
			\midrule
			\multirow{1}{*}{Regressor} & a) \textbf{IoC} & b) \textbf{IwC} & $|\text{IwC/IoC-1}|$ & a) \textbf{IoC} & b) \textbf{IwC} & $|\text{IwC/IoC-1}|$ & a) \textbf{IoC} & b) \textbf{IwC} & $|\text{IwC/IoC-1}|$ & a) \textbf{IoC} & b) \textbf{IwC} & $|\text{IwC/IoC-1}|$ \\
			\midrule
			OLS&\percentage{0.13620436745169600678551091732515}&	\percentage{0.0481091021517250985728431089683}&	\percentage{0.64678737509069594846522477382678}&	\percentage{0.13974563420029201266103768830362}&	\percentage{0.063989097907898598305287407583819}&	\percentage{0.5421030626532099550018983791233}&	\percentage{0.14307271938313098624817598647496}&	\percentage{0.11322826865307600296972623254987}&	\percentage{0.20859637573627801132580827925267}&	\percentage{0.14872100773236499038354452295607}&	\percentage{0.1420532638278750037219566593194}&	\percentage{0.044833907503429602792177632863968}\\
			LASSO&\percentage{0.13620433696970599135589452544082}&	\percentage{0.047295645366292597688850918302705}&	\percentage{0.65275962264834597181817343880539}&	\percentage{0.13974557163758599687142236689397}&	\percentage{0.061385383136943298676424518589556}&	\percentage{0.5607346807658489806769352981064}&	\percentage{0.14307270316422499023545356067189}&	\percentage{0.11206466934141999514107368440818}&	\percentage{0.21672920925533001312857095399522}&	\percentage{0.14872088083507201128341534968058}&	\percentage{0.13735870326836099875045249518735}&	\percentage{0.076399342869083702867172291917086}\\
			RIDGE&\percentage{0.13625649277802701075934521668387}&	\percentage{0.04777657028565240188155982536955}&	\percentage{0.64936298218474997057114705967251}&	\percentage{0.13979891524026200055885738038342}&	\percentage{0.063564481952010198906322102629929}&	\percentage{0.54531491290353495404730210793787}&	\percentage{0.14310061317859801288676635522279}&	\percentage{0.11256949749142200478679853858921}&	\percentage{0.21335419191440599395725996600959}&	\percentage{0.14882738875424400548652670295269}&	\percentage{0.14124388756982500514602918428864}&	\percentage{0.0509550106865139032508160710222}\\
			RF&\percentage{0.73511112016960600268333791973419}&	\percentage{0.029506367768078300972689476111555}&	\percentage{0.95986135026596897112938222562661}&	\percentage{0.73512538865047094560623008874245}&	\percentage{0.039700388159261697229585053037226}&	\percentage{0.94599507951678396544537008594489}&	\percentage{0.73978570936865695184536662054597}&	\percentage{0.054136766839205097134879451914458}&	\percentage{0.92682101566221597455097480633412}&	\percentage{0.73924689402427601336853513203096}&	\percentage{0.066075897766637706198800117363135}&	\percentage{0.91061728050430201797382778750034}\\
			XGB&\percentage{0.88246385208120003973419898102293}&	\percentage{0.058625680937386202840055915430639}&	\percentage{0.93356591230436902772993335020146}&	\percentage{0.88429452898837301866308280295925}&	\percentage{0.075342931143127706006268340388488}&	\percentage{0.91479882700470904755007950370782}&	\percentage{0.88072530008878502361113760343869}&	\percentage{0.093936758100667802628969127454184}&	\percentage{0.89334159233168597413765610326664}&	\percentage{0.88121154813217295043159538181499}&	\percentage{0.12082815223163000106776365782935}&	\percentage{0.86288405719632299550880816241261}\\
			GRU&\percentage{0.040660479526187502663514550249602}&	\percentage{0.024079644116821400517958196019208}&	\percentage{0.40778750281799097665569320270151}&	\percentage{0.063793216871499094922093320292333}&	\percentage{0.029842160300511198778306010126471}&	\percentage{0.53220480540081005482733189637656}&	\percentage{0.046313970056905996719454066123944}&	\percentage{0.03077032459353110094291139375855}&	\percentage{0.33561462004393799452017788098601}&	\percentage{0.071561866324126893745827260318038}&	\percentage{0.039189139713903399508954095153968}&	\percentage{0.45237398454082899945660756202415}\\
			CNN&\percentage{0.032310557834357801765268192184521}&	\percentage{0.021995559053785900677224773858143}&	\percentage{0.31924545634440398833575613934954}&	\percentage{0.047935133163496401742520447442075}&	\percentage{0.031026474344767098617614564659561}&	\percentage{0.3527404161172860153072861066903}&	\percentage{0.035820512932048798049322613223922}&	\percentage{0.030023586787429900474277744137908}&	\percentage{0.16183258334730099559095606309711}&	\percentage{0.052236857163566401129894956056887}&	\percentage{0.042724677307197901632296321849935}&	\percentage{0.18209709337189800004708217784355}\\
			MLP&\percentage{0.031881653343824201130018991534598}&	\percentage{0.020409235825114401680080433720832}&	\percentage{0.35984386992063299715027824277058}&	\percentage{0.037150202455738398632512087260693}&	\percentage{0.027669049428001701168566839328378}&	\percentage{0.25521134209249102253735941303603}&	\percentage{0.035328213488848701773203941911561}&	\percentage{0.027025215514531998606262419571067}&	\percentage{0.23502456406230198826534660838661}&	\percentage{0.041555560683160103208066971092194}&	\percentage{0.037051645474192797247159347762135}&	\percentage{0.10838297293850200631037949960955}\\
			\bottomrule
	\end{tabular}}
\end{table}

\begin{figure}[h]
	\begin{subfigure}{.5\textwidth}
		\centering
		\includegraphics[width=1\textwidth]{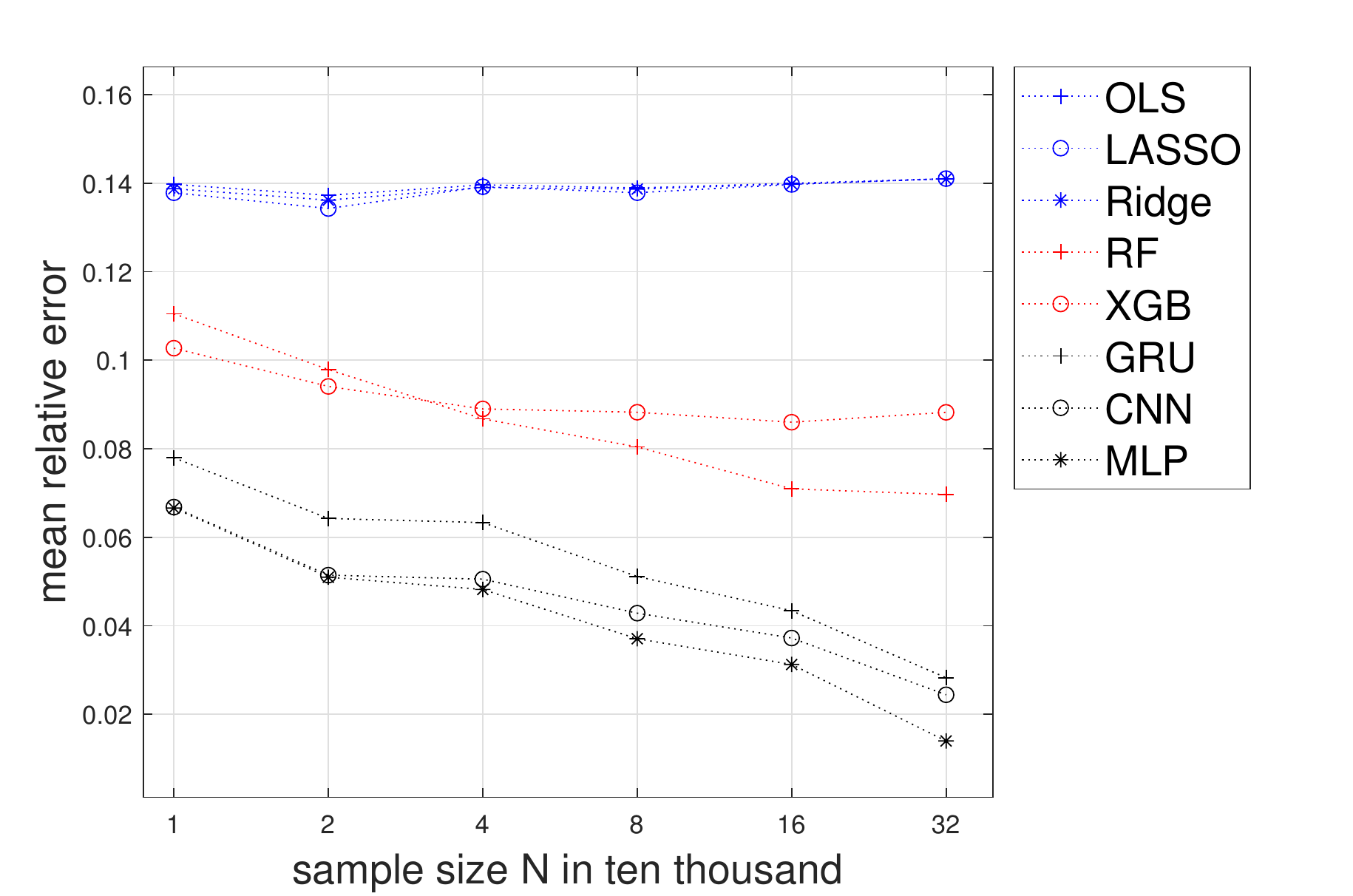}
		\caption{The mean relative error computed by \eqref{eqt:consistency_theta_ij_mean} vs. number of observations.}\label{fig:consistency conditional expectation graph simulated}
	\end{subfigure}
	\qquad
	\begin{subfigure}{.5\textwidth}
		\centering
		\includegraphics[width=1\textwidth]{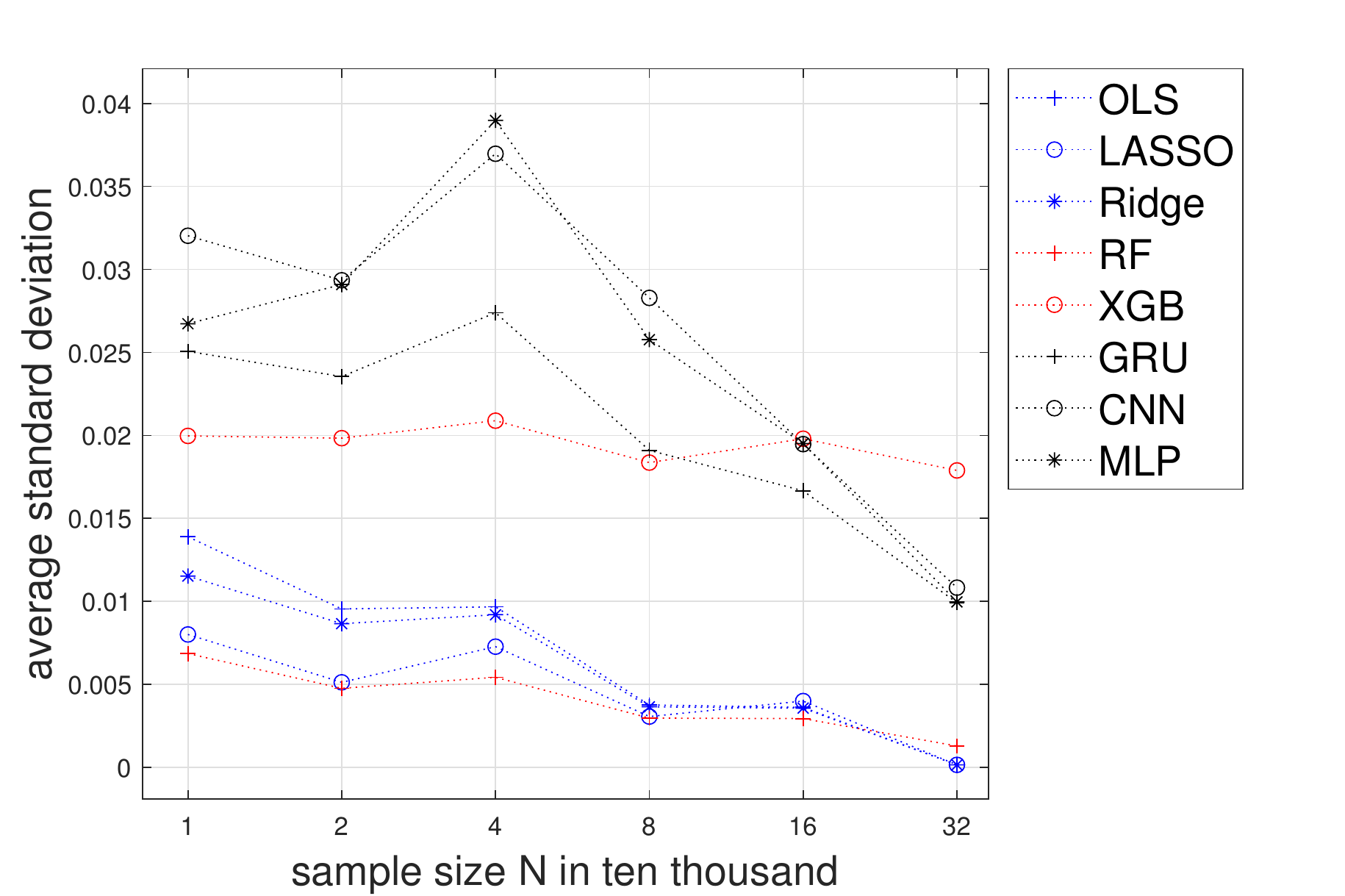}
		\caption{The average standard deviation computed by \eqref{eqt:consistency_theta_ij_std} vs. number of observations.}\label{fig:consistency conditional expectation_std graph simulated}
	\end{subfigure}
	\caption{Consistency analysis of $\hat{\theta}_{w}^{i\mid j}$ with $100$ experiments vs. number of observations $N$. $\alpha=0.05$ and $\beta=0.05$ in \eqref{eqt:simulated data model 3} of the main paper; $N\in\{10000, 20000, 40000, 80000, 160000, 320000\}$.}
	\label{fig:consistency graph conditional expectation simulated}	
\end{figure}

\begin{figure}
	\centering
	\includegraphics[width=1\textwidth]{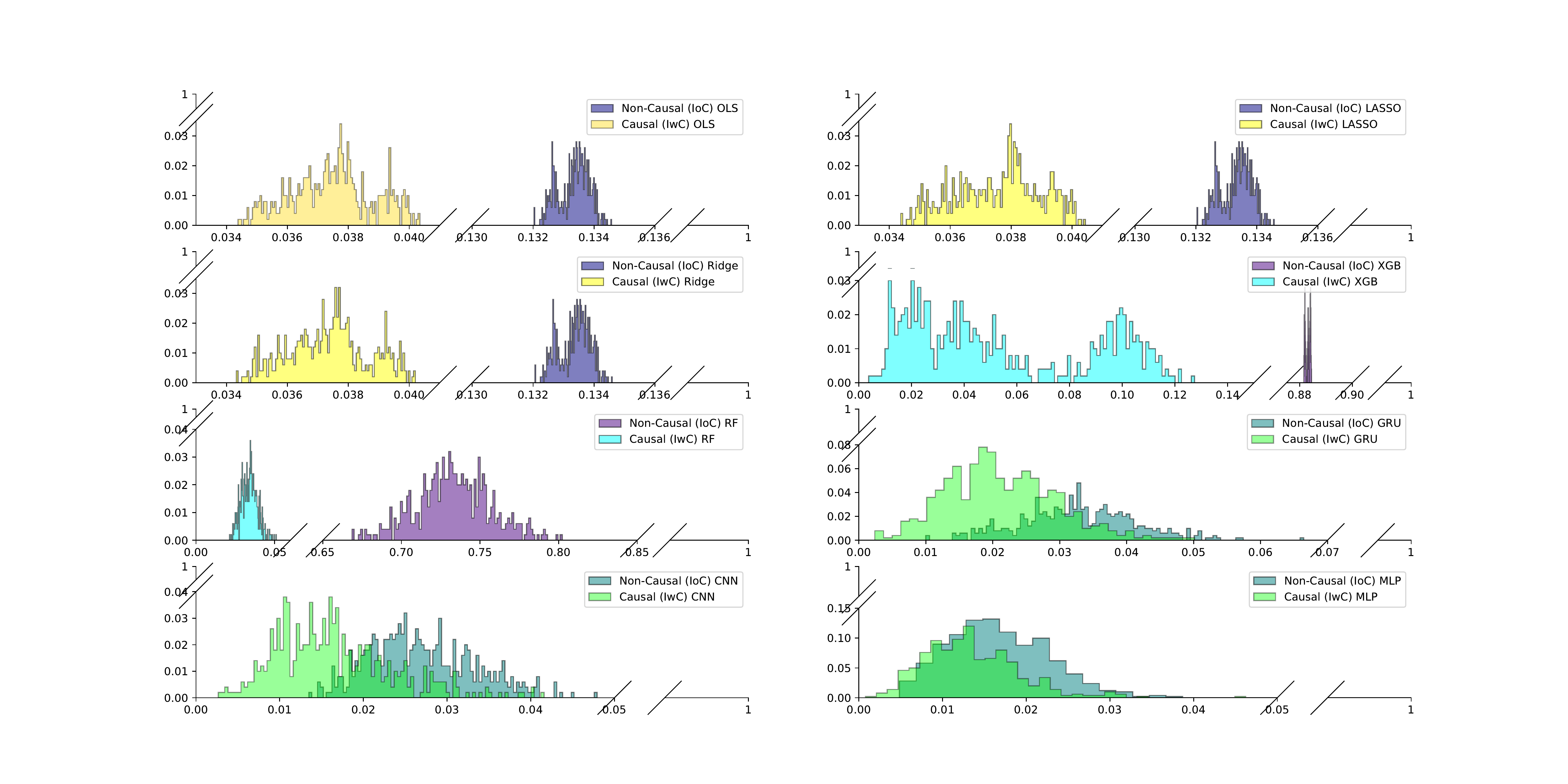}
	\caption{Frequency histogram of weighted average of relative error of estimated ATE for eight different models: ordinary least square(OLS), LASSO, RIDGE, random forest (RF), GRU, CNN, and multilayer perception (MLP) based on the semi-synthetic data with $N=160000$, $M=500$, and $\alpha=0.05$ and $\beta=0.05$ in \eqref{eqt:simulated data model 3} of the main paper.}\label{fig:ATE_3_model_histogram real}
\end{figure}

\begin{figure}
	\centering
	\includegraphics[width=1\textwidth]{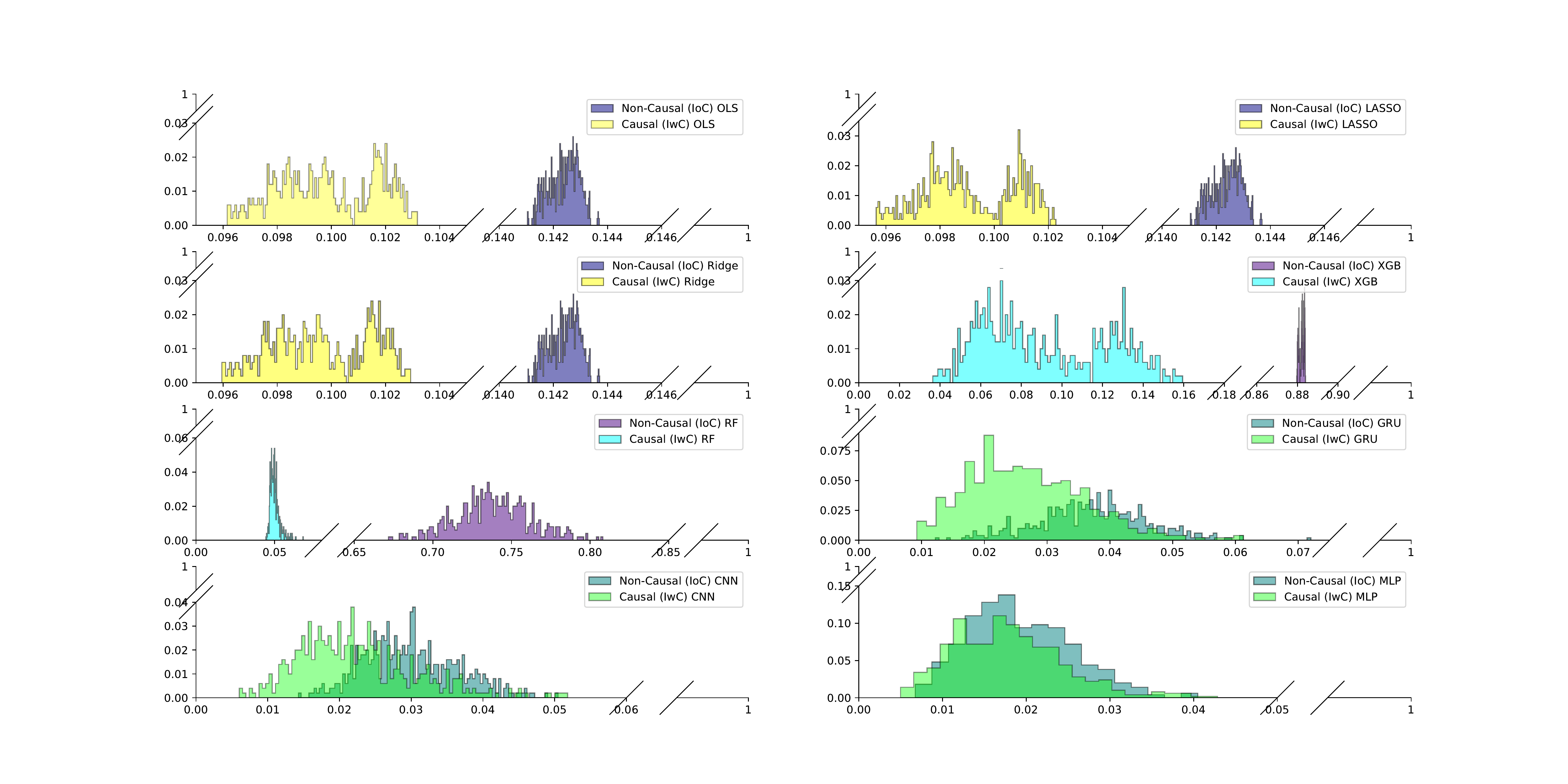}
	\caption{Frequency histogram of weighted average of relative error of estimated ATTE for eight different models: ordinary least square(OLS), LASSO, RIDGE, random forest (RF), GRU, CNN, and multilayer perception (MLP) based on the semi-synthetic data with $N=160000$, $M=500$, and $\alpha=0.05$ and $\beta=0.05$ in \eqref{eqt:simulated data model 3} of the main paper.}\label{fig:ATTE_3_model_histogram real}
\end{figure}

\clearpage
\subsection{Proofs of Theorem}\label{sec:Appendix-Proofs of Theorems and Corollary}
In this section, we provide detailed proofs of the Theorem stated in the main paper. Our goal is to construct score functions which satisfy the \textit{moment condition} and the \textit{orthogonal condition} concurrently such that we can recover the desired estimators. The two conditions are stated as follows:

\begin{definition}[Moment Condition]\label{def:moment condition}
	Let $W$ be the random elements, $\Theta$ be a convex set which contains the causal parameter $\vartheta$ of dimension $d_{\vartheta}$ ($\theta$ is the true causal parameter we are interested in) and $T$ be a convex set which contains nuisance parameter $\varrho$ ($\rho$ is the true nuisance parameter we are interested in). We say that a score function $\psi=[\psi_{1},\cdots,\psi_{d_{\vartheta}}]^{T}$ satisfies moment condition if
	\begin{equation}
	\begin{aligned}\label{eqt:moment condition}
	\mathbb{E}[\psi_{j}(W,\theta,\rho)]=0,\quad j=1,\cdots,d_{\vartheta}.
	\end{aligned}
	\end{equation}
\end{definition}

\begin{definition}[Orthogonal Condition]\label{def:Neyman orthogonal score restated in appendix}
	Using the same notations defined in \eqref{def:moment condition}. Moreover, we define the Gateaux derivative map $D_{r,j}[\varrho-\rho]:=\partial_{r}\left\{\mathbb{E}[\psi_{j}(W,\theta,\rho+r(\varrho-\rho))]\right\}$. We say that a score function $\psi$ satisfies (Neyman) orthogonal condition if for all $r\in[0,1)$, $\varrho\in\mathcal{T}\subset T$ and $j=1,\cdots,d_{\vartheta}$, we have
	\begin{equation}
	\begin{aligned}\label{eqt:Neyman orthogonal score restated in appendix}
	\partial_{\varrho}\mathbb{E}[\psi_{j}(W,\theta,\varrho)]\mid_{\varrho=\rho}[\varrho-\rho]:=D_{0,j}[\varrho-\rho]=0.
	\end{aligned}
	\end{equation}
\end{definition}

The orthogonal score functions which can be used to recover the estimators (hence, the estimates) of $\theta^{i}$, $\theta^{i\mid j}$, $\theta^{i,k}$ and $\theta^{i,k\mid j}$ are summarized in Theorem 3.2 in the main paper. Theorem 3.2 is restated here as the Theorem \ref{thm:Appendix-desired unbiased estimator}.
\begin{theorem}
	\label{thm:Appendix-desired unbiased estimator}
	Let $W=(Y,D,\mathbf{U},\mathbf{X},\mathbf{Z})$ and $\Theta_{i}$, $T_{i}$, $\Theta_{i,k}$, $T_{i,k}$, $\Theta_{i\mid j}$, $T_{i\mid j}$, $\Theta_{i,k\mid j}$ and $\Theta_{i,k\mid j}$ be convex set (see the definitions of these quantities below). 
	
	\textbf{(a) Score function of $\theta^{i}$:} The score function $\psi(W,\vartheta,\varrho)$ such that $\mathbb{E}[\psi(W,\theta^{i},\rho^{i})]=0$ and $\partial_{\varrho^{i}}\mathbb{E}[\psi(W,\theta^{i},\varrho^{i})]\mid_{\varrho^{i}=\rho^{i}}=0$ is
	\begin{equation}
	\begin{aligned}\label{eqt:orthogonal score function expectation restated in appendix}
	\vartheta-\mathcal{g}(d^{i},\mathbf{U},\mathbf{Z})-\frac{\mathbf{1}_{\{D=d^{i}\}}}{a_{i}(\mathbf{X},\mathbf{Z})}(Y-\mathcal{g}(d^{i},\mathbf{U},\mathbf{Z})),
	\end{aligned}
	\end{equation}
	where
	\begin{equation*}
	\begin{aligned}
	\Theta=\Theta_{i}&:=\{\vartheta^{i}=\mathbb{E}\left[\mathcal{g}(d^{i},\mathbf{U},\mathbf{Z})\right]\mid \textit{$\mathcal{g}$ is $\mathbb{P}$-integrable}\},\\
	T=T_{i}&:=\{\varrho^{i}=\left(\mathcal{g}(d^{i},\mathbf{U},\mathbf{Z}),a_{i}(\mathbf{X},\mathbf{Z})\right)\mid \textit{$\mathcal{g}$ is $\mathbb{P}$-integrable}\},\\
	\theta=\theta^{i}&:=\mathbb{E}\left[g(d^{i},\mathbf{U},\mathbf{Z})\right]\in\Theta_{i},\\
	\rho=\rho^{i}&:=\left(g(d^{i},\mathbf{U},\mathbf{Z}),\mathbb{E}\left[\mathbf{1}_{\{D=d^{i}\}}|\mathbf{X},\mathbf{Z}\right]\right)\in T_{i}.
	\end{aligned}
	\end{equation*}
	\textbf{(b) Score function of $\theta^{i\mid j}$:} Suppose $i\neq j$. The score function $\psi(W,\vartheta,\varrho)$ such that $\mathbb{E}[\psi(W,\theta^{i\mid j},\rho^{i\mid j})]=0$ and $\partial_{\varrho^{i\mid j}}\mathbb{E}[\psi(W,\theta^{i\mid j},\varrho^{i\mid j})]\mid_{\varrho^{i\mid j}=\rho^{i\mid j}}=0$ is
	\begin{equation}
	\begin{aligned}\label{eqt:orthogonal score function conditional expectation restated in appendix}
	\frac{1}{m_{j}}&\left\{\vartheta\mathbf{1}_{\{D=d^{j}\}}-\mathcal{g}(d^{i},\mathbf{U},\mathbf{Z})\mathbf{1}_{\{D=d^{j}\}}-\mathbf{1}_{\{D=d^{i}\}}\frac{a_{j}(\mathbf{X},\mathbf{Z})}{a_{i}(\mathbf{X},\mathbf{Z})}(Y-\mathcal{g}(d^{i},\mathbf{U},\mathbf{Z}))\right\},
	\end{aligned}
	\end{equation}
	where 
	\begin{equation*}
	\begin{aligned}
	&\Theta=\Theta_{i\mid j}:=\{\vartheta^{i\mid j}=\mathbb{E}\left[\mathcal{g}(d^{i},\mathbf{U},\mathbf{Z})\mid D=d^{j}\right]\mid \text{$\mathcal{g}$ is $\mathbb{P}$-integrable}\},\\
	&T=T_{i\mid j}:=\left\{\varrho^{i\mid j}=(g(d^{i},\mathbf{U},\mathbf{Z}),m_{j},a_{j}(\mathbf{X},\mathbf{Z}),a_{i}(\mathbf{X},\mathbf{Z}))\mid \text{$\mathcal{g}$ is $\mathbb{P}$-integrable}\right\},\\
	&\theta=\theta^{i\mid j}:=\mathbb{E}\left[g(d^{i},\mathbf{U},\mathbf{Z})\mid D=d^{j}\right]\in\Theta_{i\mid j},\\
	&\rho=\rho^{i\mid j}:=\left(g(d^{i},\mathbf{U},\mathbf{Z}),\mathbb{E}\left[\mathbf{1}_{\{D=d^{j}\}}\right],\mathbb{E}\left[\mathbf{1}_{\{D=d^{j}\}}|\mathbf{X},\mathbf{Z}\right],\mathbb{E}\left[\mathbf{1}_{\{D=d^{i}\}}|\mathbf{X},\mathbf{Z}\right]\right)\in T_{i\mid j}.
	\end{aligned}
	\end{equation*}
\end{theorem}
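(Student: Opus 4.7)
The plan is to verify, for each of the two score functions, the moment condition $\mathbb{E}[\psi(W,\theta,\rho)]=0$ and then the Neyman orthogonality in every nuisance direction. The single algebraic identity driving both verifications is a consequence of \eqref{eqt:causal model 1}: on the event $\{D=d^i\}$ the potential outcome coincides with the observed outcome, so
\begin{equation*}
\mathbf{1}_{\{D=d^i\}}\bigl(Y-g(d^i,\mathbf{U},\mathbf{Z})\bigr)=\mathbf{1}_{\{D=d^i\}}\xi.
\end{equation*}
Combined with $\mathbb{E}[\xi\mid\mathbf{U},\mathbf{X},\mathbf{Z}]=0$ and an unconfoundedness-type hypothesis $\xi\perp D\mid(\mathbf{U},\mathbf{X},\mathbf{Z})$ (which follows when the noises $\xi$ and $\nu$ in \eqref{eqt:causal model 1}--\eqref{eqt:causal model 2} are conditionally independent given $(\mathbf{X},\mathbf{Z})$), this gives the master identity
\begin{equation*}
\mathbb{E}\bigl[\mathbf{1}_{\{D=d^i\}}\xi\mid\mathbf{X},\mathbf{U},\mathbf{Z}\bigr]=P_i(\mathbf{X},\mathbf{Z})\,\mathbb{E}[\xi\mid\mathbf{X},\mathbf{U},\mathbf{Z}]=0.
\end{equation*}
Consequently, any summand of the form ``a function of $(\mathbf{X},\mathbf{U},\mathbf{Z})$ times $\mathbf{1}_{\{D=d^i\}}(Y-g(d^i,\mathbf{U},\mathbf{Z}))$'' vanishes in expectation.

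For part (a), substituting $\vartheta=\theta^i$, $\mathcal{g}=g$, $a_i=P_i$ into \eqref{eqt:orthogonal score function expectation restated in appendix} yields $\theta^i-\mathbb{E}[g(d^i,\mathbf{U},\mathbf{Z})]=0$ from the first two terms, while the third vanishes by the identity above, establishing the moment condition. For orthogonality I compute the two Gateaux derivatives at $\rho^i$. In the direction $\mathcal{g}-g$, the result is $\mathbb{E}\bigl[(\mathcal{g}-g)(-1+\mathbf{1}_{\{D=d^i\}}/P_i(\mathbf{X},\mathbf{Z}))\bigr]$; conditioning on $(\mathbf{X},\mathbf{U},\mathbf{Z})$ collapses the inner bracket to $-1+P_i/P_i=0$. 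In the direction $a_i-P_i$, the derivative of $1/a_i$ produces $\mathbb{E}\bigl[\mathbf{1}_{\{D=d^i\}}(a_i-P_i)(Y-g(d^i,\mathbf{U},\mathbf{Z}))/P_i(\mathbf{X},\mathbf{Z})^2\bigr]$, which vanishes by the master identity since $(a_i-P_i)/P_i^2$ is $(\mathbf{X},\mathbf{Z})$-measurable.

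Part (b) follows the same recipe for \eqref{eqt:orthogonal score function conditional expectation restated in appendix}, but with four nuisance directions $(\mathcal{g},m_j,a_j,a_i)$. The moment condition splits into $\mathbb{E}[\theta^{i\mid j}\mathbf{1}_{\{D=d^j\}}]=\theta^{i\mid j}m_j$ and $\mathbb{E}[g(d^i,\mathbf{U},\mathbf{Z})\mathbf{1}_{\{D=d^j\}}]=\theta^{i\mid j}m_j$ by the very definition of $\theta^{i\mid j}$, while the third term vanishes by the identity, so the full score has zero mean. For orthogonality, the $m_j$-derivative factors through $-\mathbb{E}[\cdots]/m_j^2$ and equals zero at the truth because the bracket already has zero mean. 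The $\mathcal{g}$-derivative leaves $m_j^{-1}\mathbb{E}\bigl[(\mathcal{g}-g)(-\mathbf{1}_{\{D=d^j\}}+\mathbf{1}_{\{D=d^i\}}P_j/P_i)\bigr]$; conditioning on $(\mathbf{X},\mathbf{U},\mathbf{Z})$ replaces the indicators by $P_j$ and $P_i$ respectively, producing $-P_j+P_i(P_j/P_i)=0$. Both the $a_i$- and $a_j$-derivatives preserve a factor of $\mathbf{1}_{\{D=d^i\}}(Y-g(d^i,\mathbf{U},\mathbf{Z}))$ multiplied by an $(\mathbf{X},\mathbf{Z})$-measurable coefficient, so they vanish by the same identity.

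The bulk of this plan is mechanical algebra; the subtlest point is the conditional-independence step $\mathbb{E}[\mathbf{1}_{\{D=d^i\}}\xi\mid\mathbf{X},\mathbf{U},\mathbf{Z}]=0$, because the hypotheses stated in \eqref{eqt:causal model 1}--\eqref{eqt:causal model 2} only assert $\mathbb{E}[\xi\mid\mathbf{U},\mathbf{X},\mathbf{Z}]=0$ and $\mathbb{E}[\nu\mid\mathbf{X},\mathbf{Z}]=0$, which are not by themselves sufficient. I would therefore state up front an auxiliary unconfoundedness-type assumption, e.g.\ $\mathbb{E}[\xi\mid D,\mathbf{X},\mathbf{U},\mathbf{Z}]=0$ (equivalently $\xi\perp\nu\mid(\mathbf{X},\mathbf{Z})$); after that the entire theorem reduces to the Gateaux-derivative computations sketched above.
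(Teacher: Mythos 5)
Your verification plan is correct and, in its computational core, coincides with the paper's own proof: the paper too establishes the moment condition by reducing the residual term to $\mathbb{E}\bigl[\tfrac{\mathbb{E}[\mathbf{1}_{\{D=d^{i}\}}\mid\mathbf{X},\mathbf{U},\mathbf{Z}]}{\mathbb{E}[\mathbf{1}_{\{D=d^{i}\}}\mid\mathbf{X},\mathbf{Z}]}\mathbb{E}[\xi\mid\mathbf{X},\mathbf{U},\mathbf{Z}]\bigr]=0$, and checks orthogonality by exactly the Gateaux derivatives you list, each vanishing because the coefficient of $\mathbf{1}_{\{D=d^{i}\}}(Y-g(d^{i},\mathbf{U},\mathbf{Z}))$ is $(\mathbf{X},\mathbf{U},\mathbf{Z})$-measurable. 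Two differences are worth noting. First, the paper does more than verify: it \emph{constructs} the scores via a Riesz-representer result of Chernozhukov et al., proving linearity of $\mathcal{g}\mapsto\mathbb{E}[m(X,\mathcal{g})]$ and identifying $\alpha(X)=\mathbf{1}_{\{D=d^{i}\}}/\mathbb{E}[\mathbf{1}_{\{D=d^{i}\}}\mid\mathbf{X},\mathbf{Z}]$ (resp.\ $\mathbf{1}_{\{D=d^{i}\}}P_{j}/P_{i}$ for part (b)), which in turn requires the density-ratio argument showing $\mathbb{E}[\mathbf{1}_{\{D=d^{i}\}}\mid\mathbf{U},\mathbf{X},\mathbf{Z}]=\mathbb{E}[\mathbf{1}_{\{D=d^{i}\}}\mid\mathbf{X},\mathbf{Z}]$ from the independence of $\mathbf{U}$ and $(\mathbf{X},\mathbf{Z})$. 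Since the theorem only asserts properties of the stated scores, your direct verification is sufficient and more elementary, but note that your ``master identity'' $\mathbb{E}[\mathbf{1}_{\{D=d^{i}\}}\xi\mid\mathbf{X},\mathbf{U},\mathbf{Z}]=P_{i}(\mathbf{X},\mathbf{Z})\mathbb{E}[\xi\mid\mathbf{X},\mathbf{U},\mathbf{Z}]$ silently uses this same fact about $\mathbf{U}$ (to replace $\mathbb{E}[\mathbf{1}_{\{D=d^{i}\}}\mid\mathbf{X},\mathbf{U},\mathbf{Z}]$ by $P_{i}(\mathbf{X},\mathbf{Z})$), so you should record it alongside your unconfoundedness assumption. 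Second, you are right that $\mathbb{E}[\xi\mid\mathbf{U},\mathbf{X},\mathbf{Z}]=0$ alone does not license the factorization $\mathbb{E}[\mathbf{1}_{\{D=d^{i}\}}\xi\mid\mathbf{X},\mathbf{U},\mathbf{Z}]=\mathbb{E}[\mathbf{1}_{\{D=d^{i}\}}\mid\mathbf{X},\mathbf{U},\mathbf{Z}]\,\mathbb{E}[\xi\mid\mathbf{X},\mathbf{U},\mathbf{Z}]$; the paper performs this step without comment, so making $\mathbb{E}[\xi\mid D,\mathbf{X},\mathbf{U},\mathbf{Z}]=0$ an explicit hypothesis, as you propose, is an improvement rather than a deviation.
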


The proof of Theorem \ref{thm:Appendix-desired unbiased estimator} makes use of the following result from \cite{chernozhukov2018double2}:
\begin{result}\label{result:riesz representer}
	Let $X$ be the (predictor) variables, $Y$ be the response variable and $\mathcal{g}$ is a functional. Let $\vartheta=\mathbb{E}\left[m(X,\mathcal{g})\right]$ and $\theta=\mathbb{E}\left[m(X,g)\right]$. If $\mathcal{g}\mapsto \mathbb{E}\left[m(X,\mathcal{g})\right]$ is linear in $\mathcal{g}$, then there exists $\alpha(X)$ such that $\mathbb{E}\left[m(X,\mathcal{g})\right]=\mathbb{E}\left[\alpha(X)\mathcal{g}(X)\right]$. Furthermore, denoting $W=(Y,X)$. The score function associated with $\theta$ which is defined as
	\begin{equation}
	\begin{aligned}\label{eqt:riesz representer}
	\psi(W,\vartheta,\varrho):=\vartheta-m(X,\mathcal{g})-\alpha(X)[Y-\mathcal{g}(X)],\;\text{where $\varrho:=(\mathcal{g},\alpha)$}
	\end{aligned}
	\end{equation}
	is an orthogonal score function.
\end{result}

\begin{proof}[Proof of Theorem \ref{thm:Appendix-desired unbiased estimator}]
	Let $X=(D,\mathbf{U},\mathbf{X},\mathbf{Z})$ such that $W=(Y,X)=(Y,D,\mathbf{U},\mathbf{X},\mathbf{Z})$.
	We only need to prove (a) and (b) in the main paper. The proof is divided into two parts.
	\underline{\textbf{\textit{Proof of (a):}}}
	
	From the \eqref{eqt:causal model 1} in the main paper, let $X=(D,\mathbf{U},\mathbf{X},\mathbf{Z})$. We also assume that $m(X,\mathcal{g}):=\mathcal{g}(d^{i},\mathbf{U},\mathbf{Z})$. The true causal parameter is $\theta^{i}:=\mathbb{E}\left[m(X,g)\right]=\mathbb{E}\left[g(d^{i},\mathbf{U},\mathbf{Z})\right]$. Therefore, we know that
	\begin{equation*}
	\begin{aligned}
	\text{$\Theta_{i}$ is a convex set such that it equals $\{\vartheta^{i}=\mathbb{E}\left[\mathcal{g}(d^{i},\mathbf{U},\mathbf{Z})\right]\mid \mathcal{g}\; is\;\mathbb{P}\text{-}integrable\}$}
	\end{aligned}
	\end{equation*} 
	Using Result \ref{result:riesz representer}, we show that $\mathcal{g}\mapsto \mathbb{E}\left[m(X,\mathcal{g})\right]:=\mathbb{E}\left[\mathcal{g}(d^{i},\mathbf{U},\mathbf{Z})\right]$ is linear. Indeed, for any $\mathcal{g}_{1}$ and $\mathcal{g}_{2}$ as well as $a\in\mathbb{R}$, we have
	\begin{equation*}
	\begin{aligned}
	\mathbb{E}\left[m(X,\mathcal{g}_{1}+\mathcal{g}_{2})\right]:&=\mathbb{E}\left[(\mathcal{g}_{1}+\mathcal{g}_{2})(d^{i},\mathbf{U},\mathbf{Z})\right]=\mathbb{E}\left[\mathcal{g}_{1}(d^{i},\mathbf{U},\mathbf{Z})\right]+\mathbb{E}\left[\mathcal{g}_{2}(d^{i},\mathbf{U},\mathbf{Z})\right]\\
	&=\mathbb{E}\left[m(X,\mathcal{g}_{1})\right]+\mathbb{E}\left[m(X,\mathcal{g}_{2})\right]\\
	\mathbb{E}\left[m(X,(a \mathcal{g})\right]:&=\mathbb{E}\left[(a \mathcal{g})(d^{i},\mathbf{U},\mathbf{Z})\right]=a\mathbb{E}\left[\mathcal{g}(d^{i},\mathbf{U},\mathbf{Z})\right]=a\mathbb{E}\left[m(X,\mathcal{g})\right].
	\end{aligned}
	\end{equation*}
	Consequently, $\mathcal{g}\mapsto \mathbb{E}\left[m(X,\mathcal{g})\right]:=\mathbb{E}\left[\mathcal{g}(d^{i},\mathbf{U},\mathbf{Z})\right]$ is linear in $\mathcal{g}$. Making use of Result \ref{result:riesz representer}, we would like to find $\alpha(X)$ such that $\mathbb{E}\left[m(X,\mathcal{g})\right]=\mathbb{E}\left[\alpha(X)\mathcal{g}(X)\right]$, i.e., $\mathbb{E}\left[\mathcal{g}(d^{i},\mathbf{U},\mathbf{Z})\right]=\mathbb{E}\left[\alpha(D,\mathbf{U},\mathbf{X},\mathbf{Z})\mathcal{g}(D,\mathbf{U},\mathbf{Z})\right]$. We guess $\alpha(X)=\alpha(D,\mathbf{U},\mathbf{X},\mathbf{Z})=\frac{\mathbf{1}_{\{D=d^{i}\}}}{\mathbb{E}\left[\mathbf{1}_{\{D=d^{i}\}}\mid \mathbf{X},\mathbf{Z}\right]}$. In fact, we have
	\begin{equation*}
	\begin{aligned}
	\mathbb{E}\left[\alpha(D,\mathbf{U},\mathbf{X},\mathbf{Z})\mathcal{g}(D,\mathbf{U},\mathbf{Z})\right]&=\mathbb{E}\left[\frac{\mathbf{1}_{\{D=d^{i}\}}}{\mathbb{E}\left[\mathbf{1}_{\{D=d^{i}\}}\mid \mathbf{X},\mathbf{Z}\right]}\mathcal{g}(D,\mathbf{U},\mathbf{Z})\right]\\
	&=\mathbb{E}\left[\frac{\mathcal{g}(d^{i},\mathbf{U},\mathbf{Z})\mathbb{E}\left[\mathbf{1}_{\{D=d^{i}\}}\mid \mathbf{U},\mathbf{X},\mathbf{Z}\right]}{\mathbb{E}\left[\mathbf{1}_{\{D=d^{i}\}}\mid \mathbf{X},\mathbf{Z}\right]}\right]\\
	&=\mathbb{E}\left[\mathcal{g}(d^{i},\mathbf{U},\mathbf{Z})\right]=\mathbb{E}\left[m(X,\mathcal{g})\right].
	\end{aligned}
	\end{equation*}
	The second last equality follows since we assume that $\mathbf{U}$ is independent of $(\mathbf{X},\mathbf{Z})$.
	
	The proof is as follows: $\mathbb{E}\left[\mathbf{1}_{\{D=d^{i}\}}\mid \mathbf{U}=\mathbf{u},\mathbf{X}=\mathbf{x},\mathbf{Z}=\mathbf{z}\right]=\frac{h_{D,\mathbf{U},\mathbf{X},\mathbf{Z}}(d^{i},\mathbf{u},\mathbf{x},\mathbf{z})}{h_{\mathbf{U},\mathbf{X},\mathbf{Z}}(\mathbf{u},\mathbf{x},\mathbf{z})}$ and $\mathbb{E}\left[\mathbf{1}_{\{D=d^{i}\}}\mid \mathbf{X}=\mathbf{x},\mathbf{Z}=\mathbf{z}\right]=\frac{h_{D,\mathbf{X},\mathbf{Z}}(d^{i},\mathbf{x},\mathbf{z})}{h_{\mathbf{X},\mathbf{Z}}(\mathbf{x},\mathbf{z})}$, where $h_{D,\mathbf{U},\mathbf{X},\mathbf{Z}}(\cdot,\cdot,\cdot,\cdot)$ is the joint density function of $D$, $\mathbf{U}$, $\mathbf{X}$ and $\mathbf{Z}$; $h_{\mathbf{U},\mathbf{X},\mathbf{Z}}(\cdot,\cdot,\cdot)$ is the joint density function of $\mathbf{U}$, $\mathbf{X}$ and $\mathbf{Z}$; $h_{D,\mathbf{X},\mathbf{Z}}(\cdot,\cdot,\cdot)$ is the joint density function of $D$, $\mathbf{X}$ and $\mathbf{Z}$; $h_{\mathbf{X},\mathbf{Z}}(\cdot,\cdot)$ is the joint density function of $\mathbf{X}$ and $\mathbf{Z}$. 
	
	Now, we write $\mathbb{E}\left[\mathbf{1}_{\{D=d^{i}\}}\mid \mathbf{U}=\mathbf{u},\mathbf{X}=\mathbf{x},\mathbf{Z}=\mathbf{z}\right]$ as
	\begin{equation*}
	\begin{aligned}
	\mathbb{E}\left[\mathbf{1}_{\{D=d^{i}\}}\mid \mathbf{U}=\mathbf{u},\mathbf{X}=\mathbf{x},\mathbf{Z}=\mathbf{z}\right]&=\frac{h_{D,\mathbf{U},\mathbf{X},\mathbf{Z}}(d^{i},\mathbf{u},\mathbf{x},\mathbf{z})}{h_{\mathbf{U},\mathbf{X},\mathbf{Z}}(\mathbf{u},\mathbf{x},\mathbf{z})}\\
	&=\frac{h_{\mathbf{U}\mid D,\mathbf{X},\mathbf{Z}}(\mathbf{u}\mid d^{i},\mathbf{x},\mathbf{z})}{h_{\mathbf{U}\mid \mathbf{X},\mathbf{Z}}(\mathbf{u}\mid\mathbf{x},\mathbf{z})}\times\frac{h_{D,\mathbf{X},\mathbf{Z}}(d^{i},\mathbf{x},\mathbf{z})}{h_{\mathbf{X},\mathbf{Z}}(\mathbf{x},\mathbf{z})},
	\end{aligned}
	\end{equation*}
	where $h_{\mathbf{U}\mid D,\mathbf{X},\mathbf{Z}}(\cdot\mid \cdot,\cdot,\cdot)$ is the conditional density function of $\mathbf{U}$ given $d$, $\mathbf{X}$ and $\mathbf{Z}$ and $h_{\mathbf{U}\mid \mathbf{X},\mathbf{Z}}(\cdot\mid \cdot,\cdot)$ is the conditional density function of $\mathbf{U}$ given $\mathbf{X}$ and $\mathbf{Z}$. Since $\mathbf{U}$ is independent of $(\mathbf{X},\mathbf{Z})$, we know that $\frac{h_{\mathbf{U}\mid D,\mathbf{X},\mathbf{Z}}(\mathbf{u}\mid d^{i},\mathbf{x},\mathbf{z})}{h_{\mathbf{U}\mid \mathbf{X},\mathbf{Z}}(\mathbf{u}\mid\mathbf{x},\mathbf{z})}=1$. Therefore, we have $\mathbb{E}\left[\mathbf{1}_{\{D=d^{i}\}}\mid \mathbf{U}=\mathbf{u},\mathbf{X}=\mathbf{x},\mathbf{Z}=\mathbf{z}\right]=\mathbb{E}\left[\mathbf{1}_{\{D=d^{i}\}}\mid \mathbf{X}=\mathbf{x},\mathbf{Z}=\mathbf{z}\right]$, implying that $\mathbb{E}\left[\mathbf{1}_{\{D=d^{i}\}}\mid \mathbf{U},\mathbf{X},\mathbf{Z}\right]=\mathbb{E}\left[\mathbf{1}_{\{D=d^{i}\}}\mid \mathbf{X},\mathbf{Z}\right]$ a.s..
	
	As a consequence, we can construct the score function $\psi(W,\vartheta,\varrho)$ of $\theta^{i}$ as
	\begin{equation*}
	\begin{aligned}
	\vartheta-\mathcal{g}(d^{i},\mathbf{U},\mathbf{Z})-\frac{\mathbf{1}_{\{D=d^{i}\}}}{a_{i}(\mathbf{X},\mathbf{Z})}[Y-\mathcal{g}(d^{i},\mathbf{U},\mathbf{Z})],
	\end{aligned}
	\end{equation*}
	where $\varrho:=\left(\mathcal{g}(d^{i},\mathbf{U},\mathbf{Z}),a_{i}(\mathbf{X},\mathbf{Z})\right)$, $\rho:=\left(g(d^{i},\mathbf{U},\mathbf{Z}),\mathbb{E}\left[\mathbf{1}_{\{D=d^{i}\}}\mid \mathbf{X},\mathbf{Z}\right]\right)$ and $T_{i}:=\{\varrho^{i}\mid \mathcal{g} \textit{ is } \mathbb{P}\textit{-integrable}\}$.
	
	We then show that the score function is orthogonal through checking the Definition \ref{def:moment condition} and the Definition \ref{def:Neyman orthogonal score restated in appendix}.
	Note that 
	\begin{equation*}
	\begin{aligned}
	\mathbb{E}\left[\psi(W,\theta^{i},\rho)\right]&=\mathbb{E}\left[\theta^{i}-g(d^{i},\mathbf{U},\mathbf{Z})-\frac{\mathbf{1}_{\{D=d^{i}\}}}{\mathbb{E}\left[\mathbf{1}_{\{D=d^{i}\}}\mid \mathbf{X},\mathbf{Z}\right]}[Y-g(d^{i},\mathbf{U},\mathbf{Z})]\right]\\
	&=-\mathbb{E}\left[\frac{\mathbf{1}_{\{D=d^{i}\}}}{\mathbb{E}\left[\mathbf{1}_{\{D=d^{i}\}}\mid \mathbf{X},\mathbf{Z}\right]}[g(D,\mathbf{U},\mathbf{Z})+\xi-g(d^{i},\mathbf{U},\mathbf{Z})]\right]\\
	&=-\mathbb{E}\left[\frac{\mathbb{E}\left[\mathbf{1}_{\{D=d^{i}\}}\mid \mathbf{X},\mathbf{U},\mathbf{Z}\right]}{\mathbb{E}\left[\mathbf{1}_{\{D=d^{i}\}}\mid \mathbf{X},\mathbf{Z}\right]}\mathbb{E}\left[\xi\mid \mathbf{X},\mathbf{U},\mathbf{Z}\right]\right]=0.
	\end{aligned}
	\end{equation*}
	Moreover, we compute $\partial_{\mathcal{g}}\psi(W,\theta,\varrho)$ and $\partial_{a_{i}}\psi(W,\theta,\varrho)$, giving
	\begin{equation*}
	\begin{aligned}
	\partial_{\mathcal{g}}\psi(W,\theta,\varrho)&=-1+\frac{\mathbf{1}_{\{D=d^{i}\}}}{a_{i}(\mathbf{X},\mathbf{Z})}\quad \textit{ and }\quad \partial_{a_{i}}\psi(W,\theta,\varrho)=\frac{\mathbf{1}_{\{D=d^{i}\}}}{a_{i}(\mathbf{X},\mathbf{Z})^{2}}[Y-g(d^{i},\mathbf{U},\mathbf{Z})]
	\end{aligned}
	\end{equation*}
	respectively. Denoting $\mathcal{g}(d^{i},\mathbf{U},\mathbf{Z})$ and $g(d^{i},\mathbf{U},\mathbf{Z})$ as $\mathcal{g}^{i}$ and $g^{i}$ respectively, we compute $\mathbb{E}\left[\partial_{\mathcal{g}}\psi(W,\theta,\varrho)\mid_{\varrho=\rho}(\mathcal{g}^{i}-g^{i})\right]$ and $\mathbb{E}\left[\partial_{a_{i}}\psi(W,\theta,\varrho)\mid_{\varrho=\rho}(a_{i}-\mathbb{E}\left[\mathbf{1}_{\{D=d^{i}\}}\mid\mathbf{X},\mathbf{Z}\right])\right]$, which are all $0$. Indeed, we have
	\begin{equation*}
	\begin{aligned}
	\mathbb{E}\left[\partial_{\mathcal{g}}\psi(W,\theta,\varrho)\mid_{\varrho=\rho}(\mathcal{g}^{i}-g^{i})\right]&=\mathbb{E}\left[(\mathcal{g}^{i}-g^{i})\left(-1+\frac{\mathbb{E}\left[\mathbf{1}_{\{D=d^{i}\}}\mid \mathbf{U},\mathbf{X},\mathbf{Z}\right]}{\mathbb{E}\left[\mathbf{1}_{\{D=d^{i}\}}\mid \mathbf{X},\mathbf{Z}\right]}\right)\right]=0
	\end{aligned}
	\end{equation*}
	%
	and
	\begin{equation*}
	\begin{aligned}
	&\mathbb{E}\left[\partial_{a_{i}}\psi(W,\theta,\varrho)\mid_{\varrho=\rho}(a_{i}-\mathbb{E}\left[\mathbf{1}_{\{D=d^{i}\}}\mid\mathbf{X},\mathbf{Z}\right])\right]\\
	=&\mathbb{E}\left[\frac{\mathbf{1}_{\{D=d^{i}\}}\{Y-g(d^{i},\mathbf{U},\mathbf{Z})\}}{\mathbb{E}\left[\mathbf{1}_{\{D=d^{i}\}}\mid \mathbf{X},\mathbf{Z}\right]^{2}}(a_{i}-\mathbb{E}\left[\mathbf{1}_{\{D=d^{i}\}}\mid\mathbf{X},\mathbf{Z}\right])\right]\\
	=&\mathbb{E}\left[(a_{i}-\mathbb{E}\left[\mathbf{1}_{\{D=d^{i}\}}\mid\mathbf{X},\mathbf{Z}\right])\frac{\mathbb{E}\left[\mathbf{1}_{\{D=d^{i}\}}\mid\mathbf{X},\mathbf{U},\mathbf{Z}\right]\mathbb{E}\left[\xi\mid\mathbf{X},\mathbf{U},\mathbf{Z}\right]}{\mathbb{E}\left[\mathbf{1}_{\{D=d^{i}\}}\mid \mathbf{X},\mathbf{Z}\right]^{2}}\right]=0.\\
	\end{aligned}
	\end{equation*}
	Therefore, the orthogonal score functions which can be used to recover the $\theta^{i}$ is
	\begin{equation*}
	\begin{aligned}
	\vartheta-\mathcal{g}(d^{i},\mathbf{U},\mathbf{Z})-\frac{\mathbf{1}_{\{D=d^{i}\}}}{a_{i}(\mathbf{X},\mathbf{Z})}[Y-\mathcal{g}(d^{i},\mathbf{U},\mathbf{Z})]
	\end{aligned}.
	\end{equation*}
	
	\underline{\textbf{\textit{Proof of (b):}}}
	
	Next, we derive the orthogonal score function of $\theta^{i\mid j}:=\mathbb{E}\left[g(d^{i},\mathbf{U},\mathbf{Z})\mid D=d^{j}\right]=\frac{\mathbb{E}\left[\mathbf{1}_{\{D=d^{j}\}}g(d^{i},\mathbf{U},\mathbf{Z})\right]}{\mathbb{E}\left[\mathbf{1}_{\{D=d^{j}\}}\right]}$.
	We also let $m(X,\mathcal{g}):=\mathbf{1}_{\{D=d^{j}\}}\mathcal{g}(d^{i},\mathbf{U},\mathbf{Z})$, implying that $m(X,\mathcal{g}):=\mathbf{1}_{\{D=d^{j}\}}\mathcal{g}(d^{i},\mathbf{U},\mathbf{Z})$. We need to show that $\mathcal{g}\mapsto \mathbb{E}\left[m(X,\mathcal{g})\right]:=\mathbb{E}\left[\mathbf{1}_{\{D=d^{j}\}}\mathcal{g}(d^{i},\mathbf{U},\mathbf{Z})\right]$ is linear. In fact, for any $\mathcal{g}_{1}$, $\mathcal{g}_{2}$ and $a\in\mathbb{R}$, we have
	\begin{equation*}
	\begin{aligned}
	\mathbb{E}\left[m(X,\mathcal{g}_{1}+\mathcal{g}_{2})\right]:&=\mathbb{E}\left[\mathbf{1}_{\{D=d^{j}\}}(\mathcal{g}_{1}+\mathcal{g}_{2})(d^{i},\mathbf{U},\mathbf{Z})\right]\\
	&=\mathbb{E}\left[\mathbf{1}_{\{D=d^{j}\}}\mathcal{g}_{1}(d^{i},\mathbf{U},\mathbf{Z})\right]+\mathbb{E}\left[\mathbf{1}_{\{D=d^{j}\}}\mathcal{g}_{2}(d^{i},\mathbf{U},\mathbf{Z})\right]\\
	&=\mathbb{E}\left[m(X,\mathcal{g}_{1})\right]+\mathbb{E}\left[m(X,\mathcal{g}_{2})\right],\\
	\mathbb{E}\left[m(X, a \mathcal{g})\right]:&=\mathbb{E}\left[\mathbf{1}_{\{D=d^{j}\}}(a \mathcal{g})(d^{i},\mathbf{U},\mathbf{Z})\right]\\
	&=a\mathbb{E}\left[\mathbf{1}_{\{D=d^{j}\}}\mathcal{g}(d^{i},\mathbf{U},\mathbf{Z})\right]=a\mathbb{E}\left[m(X,\mathcal{g})\right].
	\end{aligned}
	\end{equation*}
	We then find $\alpha(X)$ such that $\mathbb{E}\left[m(X,\mathcal{g})\right]=\mathbb{E}\left[\alpha(X)\mathcal{g}(X)\right]$. Let $h_{D,\mathbf{U},\mathbf{X},\mathbf{Z}}(\cdot,\cdot,\cdot,\cdot)$ be the joint density function of $D$, $\mathbf{U}$, $\mathbf{X}$ and $\mathbf{Z}$, then
	\begin{equation*}
	\begin{aligned}
	\mathbb{E}[m(X,\mathcal{g})]=&\mathbb{E}[\mathbf{1}_{\{D=d^{j}\}}\mathcal{g}(d^{i},\mathbf{U},\mathbf{Z})]\\
	=&\sum_{k=1}^{n}\int\int\int\mathbf{1}_{\{d^{k}=d^{j}\}}\mathcal{g}(d^{i},\mathbf{u},\mathbf{z})h_{D,\mathbf{U},\mathbf{X},\mathbf{Z}}(d^{k},\mathbf{u},\mathbf{x},\mathbf{z})\;d\mathbf{u}\;d\mathbf{x}\;d\mathbf{\mathbf{z}},\\
	=&\int\int\int \mathcal{g}(d^{i},\mathbf{u},\mathbf{z})h_{D,\mathbf{U},\mathbf{X},\mathbf{Z}}(d^{j},\mathbf{u},\mathbf{x},\mathbf{z})\;d\mathbf{u}\;d\mathbf{x}\;d\mathbf{\mathbf{z}}\\
	\mathbb{E}[\alpha(X)\mathcal{g}(X)]=&\sum_{k=1}^{n}\int\int\int\alpha(d^{k},\mathbf{u},\mathbf{x},\mathbf{z})\mathcal{g}(d^{k},\mathbf{u},\mathbf{z})h_{D,\mathbf{U},\mathbf{X},\mathbf{Z}}(d^{k},\mathbf{u},\mathbf{x},\mathbf{z})\;d\mathbf{u}\;d\mathbf{x}\;d\mathbf{\mathbf{z}}.
	\end{aligned}
	\end{equation*}
	Equating the two equalities gives that
	\begin{equation*}
	\begin{aligned}
	\alpha(d,\mathbf{u},\mathbf{x},\mathbf{z})=\begin{cases}
	0,&\text{when $d\neq d^{i}$}\\
	\frac{h_{D,\mathbf{U},\mathbf{X},\mathbf{Z}}(d^{j},\mathbf{u},\mathbf{x},\mathbf{z})}{h_{D,\mathbf{U},\mathbf{X},\mathbf{Z}}(d^{i},\mathbf{u},\mathbf{x},\mathbf{z})},&\text{when $d=d^{i}$}
	\end{cases}
	\end{aligned}.
	\end{equation*}
	We derive another expression of $\frac{h_{D,\mathbf{U},\mathbf{X},\mathbf{Z}}(d^{j},\mathbf{u},\mathbf{x},\mathbf{z})}{h_{D,\mathbf{U},\mathbf{X},\mathbf{Z}}(d^{i},\mathbf{u},\mathbf{x},\mathbf{z})}$.
	Note that $\mathbb{E}[\mathbf{1}_{\{D=d^{i}\}}\mid\mathbf{X}=\mathbf{x},\mathbf{Z}=\mathbf{z}]=\overset{n}{\underset{k=1}{\sum}}\mathbf{1}_{\{d^{k}=d^{i}\}}\frac{h_{D,\mathbf{X},\mathbf{Z}}(d^{k},\mathbf{x},\mathbf{z})}{h_{\mathbf{X},\mathbf{Z}}(\mathbf{x},\mathbf{z})}=\frac{h_{D,\mathbf{x},\mathbf{z}}(d^{i},\mathbf{x},\mathbf{z})}{h_{\mathbf{X},\mathbf{Z}}(\mathbf{x},\mathbf{z})}$, where $h_{D,\mathbf{X},\mathbf{Z}}(\cdot,\cdot,\cdot)$ is the joint density function of $D$, $\mathbf{X}$ and $\mathbf{Z}$ and $h_{\mathbf{X},\mathbf{Z}}(\cdot,\cdot)$ is the joint density function of $\mathbf{X}$ and $\mathbf{Z}$. As a result, we have $\frac{\mathbb{E}[\mathbf{1}_{\{D=d^{j}\}}\mid\mathbf{X}=\mathbf{x},\mathbf{Z}=\mathbf{z}]}{\mathbb{E}[\mathbf{1}_{\{D=d^{i}\}}\mid\mathbf{X}=\mathbf{x},\mathbf{Z}=\mathbf{z}]}=\frac{h_{D,\mathbf{X},\mathbf{Z}}(d^{j},\mathbf{x},\mathbf{z})}{h_{D,\mathbf{X},\mathbf{Z}}(d^{i},\mathbf{x},\mathbf{z})}$.
	
	At the same time, we have
	\begin{equation*}
	\begin{aligned}
	\frac{h_{D,\mathbf{U},\mathbf{X},\mathbf{Z}}(d^{j},\mathbf{u},\mathbf{x},\mathbf{z})}{h_{D,\mathbf{U},\mathbf{X},\mathbf{Z}}(d^{i},\mathbf{u},\mathbf{x},\mathbf{z})}&=\frac{h_{\mathbf{U}\mid D,\mathbf{X},\mathbf{Z}}(\mathbf{u}\mid d^{j},\mathbf{x},\mathbf{z})}{h_{\mathbf{U}\mid D,\mathbf{X},\mathbf{Z}}(\mathbf{u}\mid d^{i},\mathbf{x},\mathbf{z})}\times \frac{\mathbb{E}[\mathbf{1}_{\{D=d^{j}\}}\mid\mathbf{X}=\mathbf{x},\mathbf{Z}=\mathbf{z}]}{\mathbb{E}[\mathbf{1}_{\{D=d^{i}\}}\mid\mathbf{X}=\mathbf{x},\mathbf{Z}=\mathbf{z}]},
	\end{aligned}
	\end{equation*}
	where $h_{\mathbf{U}\mid D,\mathbf{X},\mathbf{Z}}(\cdot\mid \cdot,\cdot,\cdot)$ is the conditional density function of $\mathbf{U}$ given $D$, $\mathbf{X}$ and $\mathbf{Z}$. According to the assumption that $\mathbf{U}$ is independent of $(\mathbf{X},\mathbf{Z})$, we know that $\frac{h_{\mathbf{U}\mid D,\mathbf{X},\mathbf{Z}}(\mathbf{u}\mid d^{j},\mathbf{x},\mathbf{z})}{h_{\mathbf{U}\mid D,\mathbf{X},\mathbf{Z}}(\mathbf{u}\mid d^{i},\mathbf{x},\mathbf{z})}=1$. Hence, we have
	\begin{equation*}
	\begin{aligned}
	\alpha(d,\mathbf{u},\mathbf{x},\mathbf{z})=\begin{cases}
	0,&\text{when $d\neq d^{i}$}\\
	\frac{\mathbb{E}\left[\mathbf{1}_{\{D=d^{j}\}}\mid\mathbf{X}=\mathbf{x},\mathbf{Z}=\mathbf{z}\right]}{\mathbb{E}\left[\mathbf{1}_{\{D=d^{i}\}}\mid\mathbf{X}=\mathbf{x},\mathbf{Z}=\mathbf{z}\right]},&\text{when $d=d^{i}$}
	\end{cases}
	\end{aligned},
	\end{equation*}
	implying that $\alpha(X):=\alpha(D,\mathbf{U},\mathbf{X},\mathbf{Z})=\mathbf{1}_{\{D=d^{i}\}}\frac{\mathbb{E}[\mathbf{1}_{\{D=d^{j}\}}\mid\mathbf{X},\mathbf{Z}]}{\mathbb{E}[\mathbf{1}_{\{D=d^{i}\}}\mid\mathbf{X},\mathbf{Z}]}$. 
	As a result, making use of Result \ref{result:riesz representer} gives the score function $\psi(W,\vartheta,\varrho)$ of $\theta^{i\mid j}$ which equals
	\begin{equation*}
	\begin{aligned}
	\frac{1}{m_{j}}\left\{\vartheta\mathbf{1}_{\{D=d^{j}\}}-\mathbf{1}_{\{D=d^{j}\}}\mathcal{g}(d^{i},\mathbf{U},\mathbf{Z})-\mathbf{1}_{\{D=d^{i}\}}\frac{a_{j}(\mathbf{X},\mathbf{Z})}{a_{i}(\mathbf{X},\mathbf{Z})}[Y-\mathcal{g}(d^{i},\mathbf{U},\mathbf{Z})]\right\},
	\end{aligned}
	\end{equation*}
	where
	\begin{equation*}
	\begin{aligned}
	\varrho&:=(\mathcal{g}(d^{i},\mathbf{U},\mathbf{Z}),m_{j},a_{j}(\mathbf{X},\mathbf{Z}),a_{i}(\mathbf{X},\mathbf{Z})),\\
	\rho&:=(g(d^{i},\mathbf{U},\mathbf{Z}),\mathbb{E}\left[\mathbf{1}_{\{D=d^{j}\}}\right],\mathbb{E}\left[\mathbf{1}_{\{D=d^{j}\}}\mid \mathbf{X},\mathbf{Z}\right],\mathbb{E}\left[\mathbf{1}_{\{D=d^{i}\}}\mid \mathbf{X},\mathbf{Z}\right]).
	\end{aligned}
	\end{equation*}
	We check if the score function \eqref{eqt:orthogonal score function conditional expectation restated in appendix} satisfies the Definition \ref{def:moment condition} and the Definition \ref{def:Neyman orthogonal score restated in appendix}. Now,
	\begin{equation*}
	\begin{aligned}
	&\mathbb{E}\left[\psi(W,\theta^{i\mid j},\rho)\right]\\
	=&\theta^{i\mid j}-\frac{\mathbb{E}\left[\mathbf{1}_{\{D=d^{j}\}}g(d^{i},\mathbf{U},\mathbf{Z})\right]}{\mathbb{E}\left[\mathbf{1}_{\{D=d^{j}\}}\right]}-\frac{1}{\mathbb{E}\left[\mathbf{1}_{\{D=d^{j}\}}\right]}\mathbb{E}\left[\frac{\mathbf{1}_{\{D=d^{i}\}}\mathbb{E}\left[\mathbf{1}_{\{D=d^{j}\}}\mid \mathbf{X},\mathbf{Z}\right]}{\mathbb{E}\left[\mathbf{1}_{\{D=d^{i}\}}\mid \mathbf{X},\mathbf{Z}\right]}[Y-g(d^{i},\mathbf{U},\mathbf{Z})]\right].
	\end{aligned}
	\end{equation*}
	We consider how to simplify $\mathbb{E}\left[\frac{\mathbf{1}_{\{D=d^{i}\}}\mathbb{E}\left[\mathbf{1}_{\{D=d^{j}\}}\mid \mathbf{X},\mathbf{Z}\right]}{\mathbb{E}\left[\mathbf{1}_{\{D=d^{i}\}}\mid \mathbf{X},\mathbf{Z}\right]}[Y-g(d^{i},\mathbf{U},\mathbf{Z})]\right]$. Note that it equals
	\begin{equation*}
	\begin{aligned}
	&\mathbb{E}\left[\frac{\mathbf{1}_{\{D=d^{i}\}}\mathbb{E}\left[\mathbf{1}_{\{D=d^{j}\}}\mid \mathbf{X},\mathbf{Z}\right]}{\mathbb{E}\left[\mathbf{1}_{\{D=d^{i}\}}\mid \mathbf{X},\mathbf{Z}\right]}[g(D,\mathbf{U},\mathbf{Z})+\xi-g(d^{i},\mathbf{U},\mathbf{Z})]\right]\\
	=&\mathbb{E}\left[\frac{\mathbb{E}\left[\mathbf{1}_{\{D=d^{j}\}}\mid \mathbf{X},\mathbf{Z}\right]}{\mathbb{E}\left[\mathbf{1}_{\{D=d^{i}\}}\mid \mathbf{X},\mathbf{Z}\right]}\mathbb{E}\left[\mathbf{1}_{\{D=d^{i}\}}\mid \mathbf{X},\mathbf{U},\mathbf{Z}\right]\mathbb{E}\left[\xi\mid \mathbf{X},\mathbf{U},\mathbf{Z}\right]\right]=0.
	\end{aligned}
	\end{equation*}
	Moreover, we compute $\partial_{\mathcal{g}}\psi(W,\theta^{i\mid j},\varrho)$, $\partial_{a_{i}}\psi(W,\theta^{i\mid j},\varrho)$, $\partial_{a_{j}}\psi(W,\theta^{i\mid j},\varrho)$ and $\partial_{m_{j}}\psi(W,\theta^{i\mid j},\varrho)$, giving
	\begin{equation*}
	\begin{aligned}
	\partial_{\mathcal{g}}\psi(W,\theta^{i\mid j},\varrho)&=-
	\frac{1}{m_{j}}\mathbf{1}_{\{D=d^{j}\}}+\frac{1}{m_{j}}\mathbf{1}_{\{D=d^{i}\}}\frac{a_{j}(\mathbf{X},\mathbf{Z})}{a_{i}(\mathbf{X},\mathbf{Z})},\\
	\partial_{a_{i}}\psi(W,\theta^{i\mid j},\varrho)&=\frac{1}{m_{j}}\frac{\mathbf{1}_{\{D=d^{i}\}}a_{j}(\mathbf{X},\mathbf{Z})}{a_{i}(\mathbf{X},\mathbf{Z})^{2}}[Y-g(d^{i},\mathbf{U},\mathbf{Z})],\\
	\partial_{a_{j}}\psi(W,\theta^{i\mid j},\varrho)&=-\frac{1}{m_{j}}\frac{\mathbf{1}_{\{D=d^{i}\}}}{a_{i}(\mathbf{X},\mathbf{Z})}[Y-g(d^{i},\mathbf{U},\mathbf{Z})],\\
	\partial_{m_{j}}\psi(W,\theta^{i\mid j},\varrho)&=-\frac{1}{m_{j}^{2}}\psi(W,\theta^{i\mid j},\varrho).
	\end{aligned}
	\end{equation*}
	Now, we have that $\mathbb{E}\left[\partial_{\mathcal{g}}\psi(W,\theta^{i\mid j},\varrho)\mid_{\varrho=\rho}(\mathcal{g}^{i}-g^{i})\right]$, $\mathbb{E}\left[\partial_{a_{i}}\psi(W,\theta^{i\mid j},\varrho)\mid_{\varrho=\rho}(a_{i}-\mathbb{E}\left[\mathbf{1}_{\{D=d^{i}\}}\mid\mathbf{X},\mathbf{Z}\right])\right]$, $\mathbb{E}\left[\partial_{a_{j}}\psi(W,\theta^{i\mid j},\varrho)\mid_{\varrho=\rho}(a_{j}-\mathbb{E}\left[\mathbf{1}_{\{D=d^{j}\}}\mid\mathbf{X},\mathbf{Z}\right])\right]$ and $\mathbb{E}\left[\partial_{m_{j}}\psi(W,\theta^{i\mid j},\varrho)\mid_{\varrho=\rho}(m_{j}-\mathbb{E}\left[\mathbf{1}_{\{D=d^{j}\}}\right])\right]$ are all $0$ where we denote $\mathcal{g}(d^{i},\mathbf{U},\mathbf{Z})$ and $g(d^{i},\mathbf{U},\mathbf{Z})$ as $\mathcal{g}^{i}$ and $g^{i}$ respectively.
	Therefore, the orthogonal score functions which can be used to recover $\theta^{i\mid j}$ is
	\begin{equation*}
	\begin{aligned}
	\frac{1}{m_{j}}\left\{\vartheta\mathbf{1}_{\{D=d^{j}\}}-\mathbf{1}_{\{D=d^{j}\}}\mathcal{g}(d^{i},\mathbf{U},\mathbf{Z})-\mathbf{1}_{\{D=d^{i}\}}\frac{a_{j}(\mathbf{X},\mathbf{Z})}{a_{i}(\mathbf{X},\mathbf{Z})}[Y-\mathcal{g}(d^{i},\mathbf{U},\mathbf{Z})]\right\}.
	\end{aligned}
	\end{equation*}

\end{proof}

\subsection{Score functions of IoC and Doubly Robust (DR) estimators}\label{sec:Score other than IwC}
In the section ``Proofs of Theorem", we present the score functions which can recover the IwC estimators of $\theta^{i}$ and $\theta^{i\mid j}$. In this section, we give out the score functions which are used to construct the IoC estimators and the doubly robust estimators (DREs) of $\theta^{i}$ and $\theta^{i\mid j}$, and check whether the score functions violate the orthogonal condition. As a result, we explain the differences between IoC estimators, DREs and IwC estimators from the theoretical standpoint. The results are summarized in the Proposition \ref{prop:score function of IoC} and the Proposition \ref{prop:score function of DR}.

To start with, we state the score functions that can recover the IoC estimators of $\theta^{i}$ and $\theta^{i\mid j}$. Let  $W=(Y,D,\mathbf{U},\mathbf{X},\mathbf{Z})$. Besides, we let $\Theta_{i}$ and $T_{i}$ be convex set such that
\begin{equation*}
\begin{aligned}
\Theta_{i}&:=\{\vartheta^i=\mathbb{E}\left[\mathcal{g}(d^{i},\mathbf{U},\mathbf{Z})\right]\mid \textit{$\mathcal{g}$ is $\mathbb{P}$-integrable}\},\quad\theta^{i}:=\mathbb{E}\left[g(d^{i},\mathbf{U},\mathbf{Z})\right]\in\Theta_{i},\\
T_{i}&:=\{\varrho^{i}=\mathcal{g}(d^{i},\mathbf{U},\mathbf{Z})\mid \textit{$\mathcal{g}$ is $\mathbb{P}$-integrable}\},\quad \rho^{i}:=g(d^{i},\mathbf{U},\mathbf{Z})\in T_{i}.
\end{aligned}
\end{equation*}
The score function which can recover the IoC estimator of $\theta^{i}$ is
\begin{equation}
\begin{aligned}\label{eqt:IoC score function expectation}
\vartheta-\mathcal{g}(d^{i},\mathbf{U},\mathbf{Z}).
\end{aligned}
\end{equation}
Simultaneously, let $\Theta_{i\mid j}$ and $T_{i\mid j}$ be convex set such that 
\begin{equation*}
\begin{aligned}
\Theta_{i\mid j}&:=\{\vartheta^{i|j}=\mathbb{E}\left[\mathcal{g}(d^{i},\mathbf{U},\mathbf{Z})\mid D=d^{j}\right]\mid \textit{$\mathcal{g}$ is $\mathbb{P}$-integrable}\},\quad\theta^{i\mid j}:=\mathbb{E}\left[g(d^{i},\mathbf{U},\mathbf{Z})\mid D=d^{j}\right]\in\Theta_{i\mid j},\\
T_{i\mid j}&:=\{\varrho^{i|j}=(\mathcal{g}(d^{i},\mathbf{U},\mathbf{Z}),m_{j})\mid \textit{$\mathcal{g}$ is $\mathbb{P}$-integrable}\},\quad \rho^{i|j}:=(g(d^{i},\mathbf{U},\mathbf{Z}),\mathbb{E}\left[\mathbf{1}_{\{D=d^{j}\}}\right])\in T_{i\mid j}.
\end{aligned}
\end{equation*}
Then the score function which can recover the IoC estimator of $\theta^{i\mid j}$ is
\begin{equation}
\begin{aligned}\label{eqt:IoC score function conditional expectation}
\frac{\vartheta\mathbf{1}_{\{D=d^{j}\}}}{m_{j}}-\frac{\mathbf{1}_{\{D=d^{j}\}}\mathcal{g}(d^{i},\mathbf{U},\mathbf{Z})}{m_{j}}.
\end{aligned}
\end{equation}
We check that whether \eqref{eqt:IoC score function expectation} and \eqref{eqt:IoC score function conditional expectation} satisfy the moment condition (the Definition \ref{def:moment condition}) and the orthogonal condition (the Definition \ref{def:Neyman orthogonal score restated in appendix}), which are summarized in Proposition \ref{prop:score function of IoC}.
\begin{proposition}\label{prop:score function of IoC} Both \eqref{eqt:IoC score function expectation} and \eqref{eqt:IoC score function conditional expectation} satisfy the moment condition but violate the orthogonal condition.
\end{proposition}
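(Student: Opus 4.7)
My plan is to handle the two score functions in parallel, first verifying the moment condition (which is essentially the definition of the quantities in question) and then exhibiting a nuisance-direction for which the Gateaux derivative fails to vanish.

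For the moment condition, I would simply substitute $\vartheta = \theta$ and $\varrho = \rho$ into each score function and take expectations. For \eqref{eqt:IoC score function expectation}, this gives $\mathbb{E}[\theta^{i} - g(d^{i},\mathbf{U},\mathbf{Z})] = \theta^{i} - \mathbb{E}[g(d^{i},\mathbf{U},\mathbf{Z})] = 0$ by the very definition of $\theta^{i}$. For \eqref{eqt:IoC score function conditional expectation}, plugging in $m_{j} = \mathbb{E}[\mathbf{1}_{\{D=d^{j}\}}]$ and using $\mathbb{E}[\mathbf{1}_{\{D=d^{j}\}} g(d^{i},\mathbf{U},\mathbf{Z})] = \theta^{i\mid j}\,\mathbb{E}[\mathbf{1}_{\{D=d^{j}\}}]$ (which is just the definition of conditional expectation given the event $\{D=d^{j}\}$) collapses the expression to $\theta^{i\mid j} - \theta^{i\mid j} = 0$. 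Both are routine.

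For the orthogonality failure, the strategy is to compute the Gateaux derivative with respect to the function-valued nuisance parameter $\mathcal{g}$ in a direction $\mathcal{g} - g$ and show it is not identically zero. For \eqref{eqt:IoC score function expectation}, the map $r \mapsto \mathbb{E}[\psi^{i}(W,\theta^{i},\rho^{i} + r(\varrho - \rho^{i}))]$ is $-r\,\mathbb{E}[(\mathcal{g} - g)(d^{i},\mathbf{U},\mathbf{Z})]$, so $D_{0}[\varrho - \rho^{i}] = -\mathbb{E}[(\mathcal{g} - g)(d^{i},\mathbf{U},\mathbf{Z})]$, which is easily non-zero (e.g., take $\mathcal{g}(d^{i},\mathbf{u},\mathbf{z}) = g(d^{i},\mathbf{u},\mathbf{z}) + c$ for any constant $c \neq 0$). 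For \eqref{eqt:IoC score function conditional expectation}, a parallel computation with $m_{j}$ held at its true value yields $D_{0}[\varrho - \rho^{i\mid j}] = -\mathbb{E}[\mathbf{1}_{\{D=d^{j}\}}(\mathcal{g} - g)(d^{i},\mathbf{U},\mathbf{Z})] / \mathbb{E}[\mathbf{1}_{\{D=d^{j}\}}]$; the same constant shift furnishes a direction $\mathcal{g} \in \mathcal{T}$ along which this derivative equals $-c \neq 0$.

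Neither step is a genuine obstacle since both reduce to linearity of expectation once one observes that $\mathcal{g}$ enters the score linearly. The only subtlety to flag in the write-up is that, for the orthogonal condition, it suffices to exhibit a single admissible direction of non-vanishing derivative; the constant-shift perturbation is the cleanest witness because it keeps $\mathcal{g}$ inside the class of $\mathbb{P}$-integrable functions used to define $T_{i}$ and $T_{i\mid j}$. I would close by contrasting this with the IwC score functions of Theorem \ref{thm:Appendix-desired unbiased estimator}, whose inverse-propensity-weighted residual term is engineered precisely to cancel the derivative $-\mathbb{E}[(\mathcal{g} - g)(d^{i},\mathbf{U},\mathbf{Z})]$ computed above.
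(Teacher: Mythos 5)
Your proposal is correct and follows essentially the same route as the paper's own proof: a direct verification of the moment condition by substituting the true parameters, followed by computing the Gateaux derivative in the $\mathcal{g}$-direction and observing it does not vanish. If anything, your write-up is slightly more complete, since the paper merely asserts $\mathbb{E}\left[\partial_{\mathcal{g}}\psi\mid_{\varrho=\rho}(\mathcal{g}^{i}-g^{i})\right]\neq 0$ without exhibiting a witness, whereas you supply the constant-shift direction $\mathcal{g}=g+c$ that makes the non-vanishing explicit.
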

\begin{proof} 
	We divide the proofs into two parts. First, we check if the moment condition is satisfied for \eqref{eqt:IoC score function expectation} and \eqref{eqt:IoC score function conditional expectation}. Second, we check if the orthogonal condition is satisfied for \eqref{eqt:IoC score function expectation} and \eqref{eqt:IoC score function conditional expectation}.
	
	\underline{\textbf{Moment condition check for \eqref{eqt:IoC score function expectation} and \eqref{eqt:IoC score function conditional expectation}}}.
	
	We check if $\mathbb{E}\left[\theta^{i}-g(d^{i},\mathbf{U},\mathbf{Z})\right]$ and $\mathbb{E}\left[\frac{\theta^{i|j}\mathbf{1}_{\{D=d^{j}\}}}{\mathbb{E}\left[\mathbf{1}_{\{D=d^{j}\}}\right]}-\frac{\mathbf{1}_{\{D=d^{j}\}}g(d^{i},\mathbf{U},\mathbf{Z})}{\mathbb{E}\left[\mathbf{1}_{\{D=d^{j}\}}\right]}\right]$ are $0$. Indeed, we have
	\begin{equation*}
	\begin{aligned}
	\mathbb{E}\left[\theta^{i}-g(d^{i},\mathbf{U},\mathbf{Z})\right]&=\theta^{i}-\mathbb{E}\left[g(d^{i},\mathbf{U},\mathbf{Z})\right]=0,\\
	\mathbb{E}\left[\frac{\theta^{i|j}\mathbf{1}_{\{D=d^{j}\}}}{\mathbb{E}\left[\mathbf{1}_{\{D=d^{j}\}}\right]}-\frac{\mathbf{1}_{\{D=d^{j}\}}g(d^{i},\mathbf{U},\mathbf{Z})}{\mathbb{E}\left[\mathbf{1}_{\{D=d^{j}\}}\right]}\right]&=\mathbb{E}\left[\frac{\theta^{i|j}\mathbf{1}_{\{D=d^{j}\}}}{\mathbb{E}\left[\mathbf{1}_{\{D=d^{j}\}}\right]}\right]-\mathbb{E}\left[\frac{\mathbf{1}_{\{D=d^{j}\}}g(d^{i},\mathbf{U},\mathbf{Z})}{\mathbb{E}\left[\mathbf{1}_{\{D=d^{j}\}}\right]}\right]\\
	&=\frac{\theta^{i|j}\mathbb{E}\left[\mathbf{1}_{\{D=d^{j}\}}\right]}{\mathbb{E}\left[\mathbf{1}_{\{D=d^{j}\}}\right]}-\frac{\mathbb{E}\left[\mathbf{1}_{\{D=d^{j}\}}g(d^{i},\mathbf{U},\mathbf{Z})\right]}{\mathbb{E}\left[\mathbf{1}_{\{D=d^{j}\}}\right]}\\
	&=\theta^{i|j}-\mathbb{E}\left[g(d^{i},\mathbf{U},\mathbf{Z})\mid D=d^{j}\right]=0.
	\end{aligned}
	\end{equation*}
	\underline{\textbf{Orthogonal condition check for \eqref{eqt:IoC score function expectation} and \eqref{eqt:IoC score function conditional expectation}}}.
	
	Let
	\begin{equation*}
	\begin{aligned}
	\psi^{i}(W,\vartheta,\varrho)&=\vartheta-\mathcal{g}(d^{i},\mathbf{U},\mathbf{Z})\quad \text{and}\quad \psi^{i\mid j}(W,\vartheta,\varrho)=\frac{\vartheta\mathbf{1}_{\{D=d^{j}\}}}{m_{j}}-\frac{\mathbf{1}_{\{D=d^{j}\}}\mathcal{g}(d^{i},\mathbf{U},\mathbf{Z})}{m_{j}}.
	\end{aligned}
	\end{equation*}
	We compute the derivatives of $\psi^{i}(W,\vartheta,\varrho)$ and $\psi^{i\mid j}(W,\vartheta,\varrho)$ w.r.t. nuisance parameters $\mathcal{g}$ and $m_{j}$, which give
	\begin{equation*}
	\begin{aligned}
	\partial_{\mathcal{g}}\psi^{i}(W,\vartheta,\varrho)&=-1,\quad\partial_{\mathcal{g}}\psi^{i\mid j}(W,\vartheta,\varrho)=-\frac{\mathbf{1}_{\{D=d^{j}\}}}{m_{j}}\\
	\partial_{m_{j}}\psi^{i\mid j}(W,\vartheta,\varrho)&=-\frac{\vartheta\mathbf{1}_{\{D=d^{j}\}}-\mathbf{1}_{\{D=d^{j}\}}\mathcal{g}(d^{i},\mathbf{U},\mathbf{Z})}{m_{j}^{2}}.
	\end{aligned}
	\end{equation*}
	Denoting $\mathcal{g}(d^{i},\mathbf{U},\mathbf{Z})$ and $g(d^{i},\mathbf{U},\mathbf{Z})$ as $\mathcal{g}^{i}$ and $g^{i}$ respectively, we have
	\begin{equation*}
	\begin{aligned}
	\mathbb{E}\left[\partial_{\mathcal{g}}\psi^{i}(W,\vartheta,\;\varrho)\mid_{\vartheta=\theta^{i},\varrho=\rho}(\mathcal{g}^{i}-g^{i})\right]&\neq 0,\quad \mathbb{E}\left[\partial_{\mathcal{g}}\psi^{i\mid j}(W,\vartheta,\varrho)\mid_{\vartheta=\theta^{i|j},\;\varrho=\rho}(\mathcal{g}^{i}-g^{i})\right]\neq 0
	\end{aligned}
	\end{equation*}
	and
	\begin{equation*}
	\begin{aligned}
	&\mathbb{E}\left[\partial_{m_{j}}\psi^{i\mid j}(W,\vartheta,\varrho)\mid_{\vartheta=\theta^{i|j},\;\varrho=\rho}(m_{j}-\mathbb{E}\left[\mathbf{1}_{\{D=d^{j}\}}\right])\right]\\
	=&-\theta^{i|j}\mathbb{E}\left[\frac{\mathbf{1}_{\{D=d^{j}\}}(m_{j}-\mathbb{E}\left[\mathbf{1}_{\{D=d^{j}\}}\right])}{\left(\mathbb{E}\left[\mathbf{1}_{\{D=d^{j}\}}\right]\right)^{2}}\right]+\mathbb{E}\left[\frac{\mathbf{1}_{\{D=d^{j}\}}g(d^{i},\mathbf{U},\mathbf{Z})(m_{j}-\mathbb{E}\left[\mathbf{1}_{\{D=d^{j}\}}\right])}{\left(\mathbb{E}\left[\mathbf{1}_{\{D=d^{j}\}}\right]\right)^{2}}\right]\\
	=&-\theta^{i|j}\frac{(m_{j}-\mathbb{E}\left[\mathbf{1}_{\{D=d^{j}\}}\right])}{\mathbb{E}\left[\mathbf{1}_{\{D=d^{j}\}}\right]}+\frac{\frac{\mathbb{E}\left[\mathbf{1}_{\{D=d^{j}\}}g(d^{i},\mathbf{U},\mathbf{Z})\right]}{\mathbb{E}\left[\mathbf{1}_{\{D=d^{j}\}}\right]}(m_{j}-\mathbb{E}\left[\mathbf{1}_{\{D=d^{j}\}}\right])}{\mathbb{E}\left[\mathbf{1}_{\{D=d^{j}\}}\right]}=0.
	\end{aligned}
	\end{equation*}
	Consequently, the orthogonal condition does not satisfied for \eqref{eqt:IoC score function expectation} and \eqref{eqt:IoC score function conditional expectation}.
\end{proof}
We can obtain the IoC estimators (the corresponding IoC estimations are computed using \eqref{eqt:IoC estimator for IwC formulation}) in the main paper. Clearly, the locally biasedness comes from the estimation of $g(d^{i},\cdot,\cdot)$ and we need to have a good estimation of $g(d^{i},\cdot,\cdot)$. Otherwise, the IoC estimators can be biased and lead to wrong conclusion in the causal inference analysis.

Next, we consider the score functions of $\theta^{i}$ and $\theta^{i\mid j}$ that we can use to recover the corresponding DREs respectively. 
According to \cite{farrell2015robust}, the score function which can be used to recover the DRE of $\theta^{i}$ is
\begin{equation}
\begin{aligned}\label{eqt:DR score function expectation}
\vartheta-\mathcal{g}(d^{i},\mathbf{U},\mathbf{Z})-\frac{\mathbf{1}_{\{D=d^{i}\}}}{a_{i}(\mathbf{X},\mathbf{Z})}(Y-\mathcal{g}(d^{i},\mathbf{U},\mathbf{Z}))
\end{aligned}
\end{equation}
while the score function which can be used to recover the DRE of $\theta^{i\mid j}$ is
\begin{equation}
\begin{aligned}\label{eqt:DR score function conditional expectation}
\vartheta-\frac{\mathbf{1}_{\{D=d^{j}\}}\mathcal{g}(d^{i},\mathbf{U},\mathbf{Z})}{m_{j}}-\frac{\mathbf{1}_{\{D=d^{i}\}}a_{j}(\mathbf{X},\mathbf{Z})}{m_{j}a_{i}(\mathbf{X},\mathbf{Z})}(Y-\mathcal{g}(d^{i},\mathbf{U},\mathbf{Z})).
\end{aligned}
\end{equation}
In the Proposition \ref{prop:score function of DR}, we check that if the score functions satisfy both the moment condition and the orthogonal condition. The results are summarized in the Proposition \ref{prop:score function of DR}.
\begin{proposition}\label{prop:score function of DR}\eqref{eqt:DR score function expectation} satisfies both the moment condition and the orthogonal condition. On the other hand, \eqref{eqt:DR score function conditional expectation} satisfies the moment condition but violates the orthogonal condition.
\end{proposition}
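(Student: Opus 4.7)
The plan is to split the proof into two clearly separate parts matching the two score functions. For the first score function \eqref{eqt:DR score function expectation}, I observe that it is \emph{identical} in form to \eqref{eqt:orthogonal score function expectation restated in appendix}, the IwC score function for $\theta^{i}$ that was handled in part (a) of Theorem \ref{thm:Appendix-desired unbiased estimator}. Hence both the moment condition and the orthogonal condition have already been established there, and I would simply cite that proof rather than repeat the Gateaux derivative computations.

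For the second score function \eqref{eqt:DR score function conditional expectation}, I would first verify the moment condition at $(\theta^{i\mid j},\rho^{i\mid j})$. Substituting the true nuisance parameters, the first two terms give $\theta^{i\mid j}-\mathbb{E}[\mathbf{1}_{\{D=d^{j}\}}g(d^{i},\mathbf{U},\mathbf{Z})]/\mathbb{E}[\mathbf{1}_{\{D=d^{j}\}}]=0$ by the definition of conditional expectation, while the residual term vanishes by the same conditioning argument (take expectation given $(\mathbf{X},\mathbf{U},\mathbf{Z})$ and use $\mathbb{E}[\xi\mid \mathbf{X},\mathbf{U},\mathbf{Z}]=0$) that was already carried out inside the proof of Theorem \ref{thm:Appendix-desired unbiased estimator}(b).

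Next, I would check the four Gateaux derivatives, one per coordinate of $\varrho=(\mathcal{g},m_{j},a_{j},a_{i})$. Three of them I expect to vanish at $\varrho=\rho^{i\mid j}$ by the same conditioning trick: $\partial_{\mathcal{g}}\psi$ reduces to $\mathbf{1}_{\{D=d^{j}\}}/m_{j}$ minus a propensity-weighted indicator whose expectation matches, and both $\partial_{a_{i}}\psi$ and $\partial_{a_{j}}\psi$ produce residual factors $Y-g(d^{i},\mathbf{U},\mathbf{Z})$ killed by $\mathbb{E}[\xi\mid\cdot]=0$. The coordinate $m_{j}$, however, I expect to break orthogonality. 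The structural reason is key: unlike in the IwC score \eqref{eqt:orthogonal score function conditional expectation restated in appendix} where $\vartheta$ appears multiplied by $\mathbf{1}_{\{D=d^{j}\}}/m_{j}$ (so that the $\partial_{m_{j}}$ contributions of the $\vartheta$ and $\mathcal{g}$ pieces cancel against each other), here $\vartheta$ stands alone with no $1/m_{j}$ prefactor. Differentiating gives $\partial_{m_{j}}\psi = \mathbf{1}_{\{D=d^{j}\}}\mathcal{g}(d^{i},\mathbf{U},\mathbf{Z})/m_{j}^{2} + \mathbf{1}_{\{D=d^{i}\}}a_{j}(\mathbf{X},\mathbf{Z})(Y-\mathcal{g}(d^{i},\mathbf{U},\mathbf{Z}))/(m_{j}^{2}a_{i}(\mathbf{X},\mathbf{Z}))$, whose expectation at $\rho^{i\mid j}$ reduces (after the residual term is killed as before) to a nonzero multiple of $\mathbb{E}[\mathbf{1}_{\{D=d^{j}\}}g(d^{i},\mathbf{U},\mathbf{Z})]$.

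The main obstacle, and the main insight, is pinpointing exactly \emph{which} nuisance coordinate breaks orthogonality and articulating \emph{why}: the asymmetry in how $\vartheta$ enters \eqref{eqt:DR score function conditional expectation} versus \eqref{eqt:orthogonal score function conditional expectation restated in appendix}. Once that asymmetry is highlighted, exhibiting a concrete nonzero Gateaux derivative for a plausible deviation $m_{j}-\mathbb{E}[\mathbf{1}_{\{D=d^{j}\}}]\neq 0$ is then a one-line calculation, and the remaining three derivative checks proceed mechanically by the Theorem \ref{thm:Appendix-desired unbiased estimator}(b) template.
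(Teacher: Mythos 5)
Your proposal is correct and follows essentially the same route as the paper's proof: cite Theorem \ref{thm:Appendix-desired unbiased estimator}(a) for \eqref{eqt:DR score function expectation}, reuse the conditioning arguments from part (b) for the moment condition of \eqref{eqt:DR score function conditional expectation}, and compute all four Gateaux derivatives, with exactly the $m_{j}$-coordinate failing because the surviving term $\mathbb{E}\bigl[\mathbf{1}_{\{D=d^{j}\}}g(d^{i},\mathbf{U},\mathbf{Z})(m_{j}-\mathbb{E}[\mathbf{1}_{\{D=d^{j}\}}])/\mathbb{E}[\mathbf{1}_{\{D=d^{j}\}}]^{2}\bigr]$ is generically nonzero. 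Your structural remark on the asymmetric placement of $\vartheta$ (which in the IwC score makes $\partial_{m_{j}}\psi$ a multiple of $\psi$ itself, hence mean-zero by the moment condition) is a correct and useful explanation that the paper leaves implicit.
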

\begin{proof}
	From \eqref{eqt:DR score function expectation}, we notice that the score function which is used to recover the DRE of $\theta^{i}$ is the same as the score function which is used to recover the IwC estimator of $\theta^{i}$. The moment condition and the orthogonal condition have been checked (see the proofs of \textit{\textbf{(a)}} of the Theorem 3.2).
	
	It remains to check the moment condition and the orthogonal condition of \eqref{eqt:DR score function conditional expectation}. Let
	\begin{equation*}
	\begin{aligned}
	\psi^{i\mid j}:=\vartheta-\frac{\mathbf{1}_{\{D=d^{j}\}}\mathcal{g}(d^{i},\mathbf{U},\mathbf{Z})}{m_{j}}-\frac{\mathbf{1}_{\{D=d^{i}\}}a_{j}(\mathbf{X},\mathbf{Z})}{m_{j}a_{i}(\mathbf{X},\mathbf{Z})}(Y-\mathcal{g}(d^{i},\mathbf{U},\mathbf{Z})).
	\end{aligned}
	\end{equation*}
	Again, we divide into two parts.
	
	\underline{\textbf{Moment condition check}}. Taking expectation on \eqref{eqt:DR score function conditional expectation}, we have
	\begin{equation*}
	\begin{aligned}
	&\mathbb{E}\left[\theta^{i|j}-\frac{\mathbf{1}_{\{D=d^{j}\}}g(d^{i},\mathbf{U},\mathbf{Z})}{\mathbb{E}\left[\mathbf{1}_{\{D=d^{j}\}}\right]}-\frac{\mathbf{1}_{\{D=d^{i}\}}\mathbb{E}\left[\mathbf{1}_{\{D=d^{j}\}}\mid \mathbf{X},\mathbf{Z}\right]}{\mathbb{E}\left[\mathbf{1}_{\{D=d^{j}\}}\right]\mathbb{E}\left[\mathbf{1}_{\{D=d^{i}\}}\mid \mathbf{X},\mathbf{Z}\right]}(Y-g(d^{i},\mathbf{U},\mathbf{Z}))\right]\\
	=&\theta^{i|j}-\mathbb{E}\left[\frac{\mathbf{1}_{\{D=d^{j}\}}g(d^{i},\mathbf{U},\mathbf{Z})}{\mathbb{E}\left[\mathbf{1}_{\{D=d^{j}\}}\right]}\right]-\mathbb{E}\left[\frac{\mathbf{1}_{\{D=d^{i}\}}\mathbb{E}\left[\mathbf{1}_{\{D=d^{j}\}}\mid \mathbf{X},\mathbf{Z}\right]}{\mathbb{E}\left[\mathbf{1}_{\{D=d^{j}\}}\right]\mathbb{E}\left[\mathbf{1}_{\{D=d^{i}\}}\mid \mathbf{X},\mathbf{Z}\right]}(Y-g(d^{i},\mathbf{U},\mathbf{Z}))\right]\\
	\\
	=&\theta^{i|j}-\frac{\mathbb{E}\left[\mathbf{1}_{\{D=d^{j}\}}g(d^{i},\mathbf{U},\mathbf{Z})\right]}{\mathbb{E}\left[\mathbf{1}_{\{D=d^{j}\}}\right]}-\mathbb{E}\left[\frac{\mathbf{1}_{\{D=d^{i}\}}\mathbb{E}\left[\mathbf{1}_{\{D=d^{j}\}}\mid \mathbf{X},\mathbf{Z}\right]}{\mathbb{E}\left[\mathbf{1}_{\{D=d^{j}\}}\right]\mathbb{E}\left[\mathbf{1}_{\{D=d^{i}\}}\mid \mathbf{X},\mathbf{Z}\right]}(Y-g(d^{i},\mathbf{U},\mathbf{Z}))\right]=0.
	\end{aligned}
	\end{equation*}
	The last term in the last equality equals $0$, which can be found in the proof of (b) of the Theorem 3.2.
	
	\underline{\textbf{Orthogonal condition check}}. We first compute $\partial_{\mathcal{g}}\psi^{i\mid j}$, $\partial_{m_{j}}\psi^{i\mid j}$, $\partial_{a_{i}}\psi^{i\mid j}$ and $\partial_{a_{j}}\psi^{i\mid j}$:
	\begin{equation*}
	\begin{aligned}
	\partial_{\mathcal{g}}\psi^{i\mid j}&=-\frac{\mathbf{1}_{\{D=d^{j}\}}}{m_{j}}+\frac{\mathbf{1}_{\{D=d^{i}\}}a_{j}(\mathbf{X},\mathbf{Z})}{m_{j}a_{i}(\mathbf{X},\mathbf{Z})}\\
	\partial_{m_{j}}\psi^{i\mid j}&=\frac{\mathbf{1}_{\{D=d^{j}\}}\mathcal{g}(d^{i},\mathbf{U},\mathbf{Z})}{m_{j}^{2}}+\frac{\mathbf{1}_{\{D=d^{i}\}}a_{j}(\mathbf{X},\mathbf{Z})}{m_{j}^{2}a_{i}(\mathbf{X},\mathbf{Z})}(Y-\mathcal{g}(d^{i},\mathbf{U},\mathbf{Z}))\\
	\partial_{a_{i}}\psi^{i\mid j}&=\frac{\mathbf{1}_{\{D=d^{i}\}}a_{j}(\mathbf{X},\mathbf{Z})}{m_{j}(a_{i}(\mathbf{X},\mathbf{Z}))^{2}}(Y-\mathcal{g}(d^{i},\mathbf{U},\mathbf{Z}))\\
	\partial_{a_{j}}\psi^{i\mid j}&=-\frac{\mathbf{1}_{\{D=d^{i}\}}}{m_{j}a_{i}(\mathbf{X},\mathbf{Z})}(Y-\mathcal{g}(d^{i},\mathbf{U},\mathbf{Z})).
	\end{aligned}
	\end{equation*}
	Denoting $\mathcal{g}(d^{i},\mathbf{U},\mathbf{Z})$ and $g^{i}(d^{i},\mathbf{U},\mathbf{Z})$ as $\mathcal{g}^{i}$ and $g^{i}$ respectively, we then compute
	\begin{equation*}
	\begin{aligned}
	&\mathbb{E}\left[\partial_{\mathcal{g}}\psi^{i\mid j}\mid_{\vartheta^{i\mid j}=\theta^{i\mid j}\;\varrho=\rho}(\mathcal{g}^{i}-g^{i})\right]\\
	&\mathbb{E}\left[\partial_{m_{j}}\psi^{i\mid j}\mid_{\vartheta^{i\mid j}=\theta^{i\mid j}\;\varrho=\rho}(m_{j}-\mathbb{E}\left[\mathbf{1}_{\{D=d^{j}\}}\right])\right]\\
	&\mathbb{E}\left[\partial_{a_{i}}\psi^{i\mid j}\mid_{\vartheta^{i\mid j}=\theta^{i\mid j}\;\varrho=\rho}(a_{i}-\mathbb{E}\left[\mathbf{1}_{\{D=d^{i}\}}\mid\mathbf{X},\mathbf{Z}\right])\right]\\
	&\mathbb{E}\left[\partial_{a_{j}}\psi^{i\mid j}\mid_{\vartheta^{i\mid j}=\theta^{i\mid j}\;\varrho=\rho}(a_{j}-\mathbb{E}\left[\mathbf{1}_{\{D=d^{j}\}}\mathbf{X},\mathbf{Z}\right])\right].
	\end{aligned}
	\end{equation*}
	Indeed, we have
	\begin{equation*}
	\begin{aligned}
	&\mathbb{E}\left[\partial_{\mathcal{g}}\psi^{i\mid j}\mid_{\vartheta^{i\mid j}=\theta^{i\mid j}\;\varrho=\rho}(\mathcal{g}^{i}-g^{i})\right]\\
	=&\mathbb{E}\left[\frac{(\mathcal{g}^{i}-g^{i})}{\mathbb{E}\left[\mathbf{1}_{\{D=d^{j}\}}\right]}\mathbb{E}\left[-\mathbf{1}_{\{D=d^{j}\}}+\frac{\mathbf{1}_{\{D=d^{i}\}}\mathbb{E}\left[\mathbf{1}_{\{D=d^{j}\}}\mid\mathbf{X},\mathbf{Z}\right]}{\mathbb{E}\left[\mathbf{1}_{\{D=d^{i}\}}\mid\mathbf{X},\mathbf{Z}\right]}\mid\mathbf{U},\mathbf{X},\mathbf{Z}\right]\right]=0,
	\end{aligned}
	\end{equation*}
	
	\begin{equation*}
	\begin{aligned}
	&\mathbb{E}\left[\partial_{m_{j}}\psi^{i\mid j}\mid_{\vartheta^{i\mid j}=\theta^{i\mid j}\;\varrho=\rho}(m_{j}-\mathbb{E}\left[\mathbf{1}_{\{D=d^{j}\}}\right])\right]\\
	=&\mathbb{E}\left[\frac{\mathbf{1}_{\{D=d^{j}\}}g(d^{i},\mathbf{U},\mathbf{Z})}{\mathbb{E}\left[\mathbf{1}_{\{D=d^{j}\}}\right]^{2}}(m_{j}-\mathbb{E}\left[\mathbf{1}_{\{D=d^{j}\}}\right])\right]\\
	&\qquad\qquad+\mathbb{E}\left[\frac{\mathbf{1}_{\{D=d^{i}\}}\mathbb{E}\left[\mathbf{1}_{\{D=d^{j}\}}\mid\mathbf{X},\mathbf{Z}\right]}{\mathbb{E}\left[\mathbf{1}_{\{D=d^{j}\}}\right]^{2}\mathbb{E}\left[\mathbf{1}_{\{D=d^{i}\}}\mid\mathbf{X},\mathbf{Z}\right]}(Y-g(d^{i},\mathbf{U},\mathbf{Z}))(m_{j}-\mathbb{E}\left[\mathbf{1}_{\{D=d^{j}\}}\right])\right]\\
	=&\mathbb{E}\left[\frac{\mathbf{1}_{\{D=d^{j}\}}g(d^{i},\mathbf{U},\mathbf{Z})}{\mathbb{E}\left[\mathbf{1}_{\{D=d^{j}\}}\right]^{2}}(m_{j}-\mathbb{E}\left[\mathbf{1}_{\{D=d^{j}\}}\right])\right]\\
	&\qquad\qquad+\mathbb{E}\left[\frac{\mathbb{E}\left[\mathbf{1}_{\{D=d^{i}\}}\mid\mathbf{U},\mathbf{X},\mathbf{Z}\right]\mathbb{E}\left[\mathbf{1}_{\{D=d^{j}\}}\mid\mathbf{X},\mathbf{Z}\right]}{\mathbb{E}\left[\mathbf{1}_{\{D=d^{j}\}}\right]^{2}\mathbb{E}\left[\mathbf{1}_{\{D=d^{i}\}}\mid\mathbf{X},\mathbf{Z}\right]}\mathbb{E}\left[\xi\mid\mathbf{U},\mathbf{X},\mathbf{Z})\right](m_{j}-\mathbb{E}\left[\mathbf{1}_{\{D=d^{j}\}}\right])\right]\\
	=&\mathbb{E}\left[\frac{\mathbf{1}_{\{D=d^{j}\}}g(d^{i},\mathbf{U},\mathbf{Z})}{\mathbb{E}\left[\mathbf{1}_{\{D=d^{j}\}}\right]^{2}}(m_{j}-\mathbb{E}\left[\mathbf{1}_{\{D=d^{j}\}}\right])\right]\neq 0,
	\end{aligned}
	\end{equation*}
	
	\begin{equation*}
	\begin{aligned}
	&\mathbb{E}\left[\partial_{a_{i}}\psi^{i\mid j}\mid_{\vartheta^{i\mid j}=\theta^{i\mid j}\;\varrho=\rho}(a_{i}-\mathbb{E}\left[\mathbf{1}_{\{D=d^{i}\}}\mid\mathbf{X},\mathbf{Z}\right])\right])\\
	=&\mathbb{E}\left[\frac{\mathbf{1}_{\{D=d^{i}\}}\mathbb{E}\left[\mathbf{1}_{\{D=d^{j}\}}\mid\mathbf{X},\mathbf{Z}\right]}{\mathbb{E}\left[\mathbf{1}_{\{D=d^{j}\}}\right]\mathbb{E}\left[\mathbf{1}_{\{D=d^{i}\}}\mid\mathbf{X},\mathbf{Z}\right]^{2}}(Y-g(d^{i},\mathbf{U},\mathbf{Z}))(a_{i}-\mathbb{E}\left[\mathbf{1}_{\{D=d^{i}\}}\mid\mathbf{X},\mathbf{Z}\right])\right]\\
	=&\mathbb{E}\left[\frac{\mathbb{E}\left[\mathbf{1}_{\{D=d^{i}\}}\mid\mathbf{U},\mathbf{X},\mathbf{Z}\right]\mathbb{E}\left[\mathbf{1}_{\{D=d^{j}\}}\mid\mathbf{X},\mathbf{Z}\right]}{\mathbb{E}\left[\mathbf{1}_{\{D=d^{j}\}}\right]\mathbb{E}\left[\mathbf{1}_{\{D=d^{i}\}}\mid\mathbf{X},\mathbf{Z}\right]^{2}}\mathbb{E}\left[\xi\mid \mathbf{U},\mathbf{X},\mathbf{Z}\right](a_{i}-\mathbb{E}\left[\mathbf{1}_{\{D=d^{i}\}}\mid\mathbf{X},\mathbf{Z}\right])\right]=0
	\end{aligned}
	\end{equation*}
	and
	\begin{equation*}
	\begin{aligned}
	&\mathbb{E}\left[\partial_{a_{j}}\psi^{i\mid j}\mid_{\vartheta^{i\mid j}=\theta^{i\mid j}\;\varrho=\rho}(a_{j}-\mathbb{E}\left[\mathbf{1}_{\{D=d^{j}\}}\mid\mathbf{X},\mathbf{Z}\right])\right]\\
	=&-\mathbb{E}\left[\frac{\mathbf{1}_{\{D=d^{i}\}}}{\mathbb{E}\left[\mathbf{1}_{\{D=d^{j}\}}\right]\mathbb{E}\left[\mathbf{1}_{\{D=d^{i}\}}\mid \mathbf{X},\mathbf{Z}\right]}(Y-g(d^{i},\mathbf{U},\mathbf{Z}))(a_{j}-\mathbb{E}\left[\mathbf{1}_{\{D=d^{j}\}}\mid\mathbf{X},\mathbf{Z}\right])\right]\\
	=&-\mathbb{E}\left[\frac{\mathbb{E}\left[\mathbf{1}_{\{D=d^{i}\}}\mid\mathbf{U},\mathbf{X},\mathbf{Z}\right]}{\mathbb{E}\left[\mathbf{1}_{\{D=d^{j}\}}\right]\mathbb{E}\left[\mathbf{1}_{\{D=d^{i}\}}\mid \mathbf{X},\mathbf{Z}\right]}\mathbb{E}\left[\xi\mid\mathbf{U},\mathbf{X},\mathbf{Z}\right](a_{j}-\mathbb{E}\left[\mathbf{1}_{\{D=d^{j}\}}\mid\mathbf{X},\mathbf{Z}\right])\right]=0.
	\end{aligned}
	\end{equation*}
	Hence, \eqref{eqt:DR score function conditional expectation} violates the orthogonal condition.
\end{proof}
When using the DRE of $\theta^{i\mid j}$, it requires the estimation of the quantity $\mathbb{E}\left[\mathbf{1}_{\{D=d^{j}\}}\right]$, which cause the locally biasedness of the DRE of $\theta^{i\mid j}$. This is different from our IwC estimator of $\theta^{i\mid j}$, which do not require the estimation of $\mathbb{E}\left[\mathbf{1}_{\{D=d^{j}\}}\right]$.
\clearpage
\subsection{Consistency}\label{sec:Consistency}
In this section we study the consistency of our IwC estimators of $\theta^{i}$ and $\theta^{i\mid j}$ by decomposing the error in estimating the `true' quantity using the IwC estimators. There are several advantages of using the error decomposition: 1. We can understand which terms contribute the most in estimating the true quantity $\mathbb{E}\left[g(d^{i},\mathbf{U},\mathbf{Z})\right]$ or $\mathbb{E}\left[g(d^{i},\mathbf{U},\mathbf{Z})\mid D=d^{j}\right]$; 2. We can understand if our estimators are consistent estimators.
Lemma \ref{lemma:unbiasedness and consistency} would help us determine if an estimator is consistent.
\begin{lemma}\label{lemma:unbiasedness and consistency}
	Given a probability space $(\Omega,\mathcal{F},\mathbb{P})$ and $\mathbf{V}$ is a random variable such that $\mathbf{V}_{1},\cdots,\mathbf{V}_{n}$ are independent and identically distributed with $\mathbf{V}$. Let $f$ be a scalar-valued function such that $f(\mathbf{V}_{i})$ have finite mean and finite variance. Defining $S:=\frac{\overset{n}{\underset{i=1}{\sum}}f(\mathbf{V}_{i})}{n}$. Then we have
	\begin{enumerate}
		\item $S$ is an unbiased estimator of $f(\mathbf{V})$\label{lemma:unbiasedness};
		\item $S$ converges to $f(\mathbf{V})$ in probability\label{lemma:consistency}.
	\end{enumerate}
\end{lemma}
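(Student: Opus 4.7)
The statement is essentially the Weak Law of Large Numbers applied to the i.i.d.\ sequence $f(\mathbf{V}_1),\dots,f(\mathbf{V}_n)$, phrased slightly loosely (the target of estimation should be read as $\mathbb{E}[f(\mathbf{V})]$ rather than the random variable $f(\mathbf{V})$ itself, since $S$ is a scalar statistic). My plan is to handle the two items in order, relying only on linearity of expectation, independence, and Chebyshev's inequality.

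For part (1), I would use linearity of expectation together with the identically distributed assumption. Since each $\mathbb{E}[f(\mathbf{V}_i)]$ equals the common value $\mu:=\mathbb{E}[f(\mathbf{V})]$ (finite by hypothesis), I get $\mathbb{E}[S]=\frac{1}{n}\sum_{i=1}^{n}\mathbb{E}[f(\mathbf{V}_i)]=\mu$, which gives unbiasedness. This step is essentially a one-line computation and requires no further machinery.

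For part (2), I would invoke Chebyshev's inequality on $S$. Writing $\sigma^{2}:=\mathrm{Var}(f(\mathbf{V}))<\infty$ by hypothesis, independence of the $\mathbf{V}_i$ yields
\begin{equation*}
\mathrm{Var}(S)=\frac{1}{n^{2}}\sum_{i=1}^{n}\mathrm{Var}(f(\mathbf{V}_{i}))=\frac{\sigma^{2}}{n}.
\end{equation*}
Then for any $\varepsilon>0$, Chebyshev's inequality gives $\mathbb{P}(|S-\mu|>\varepsilon)\leq \sigma^{2}/(n\varepsilon^{2})$, which tends to $0$ as $n\to\infty$. This establishes convergence in probability to $\mu$.

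There is no genuine obstacle here: both claims follow from textbook first-moment and second-moment arguments, and the only substantive hypotheses used (finite mean and finite variance, plus i.i.d.) are exactly those provided. The only subtlety worth flagging in the write-up is the mild abuse of notation in the statement, namely that $S$ converges to the constant $\mathbb{E}[f(\mathbf{V})]$ rather than to the random variable $f(\mathbf{V})$; I would note this briefly to avoid ambiguity and then proceed as above.
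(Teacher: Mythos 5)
Your proof is correct: the paper explicitly omits the proof of this lemma as standard, and your argument (linearity of expectation for unbiasedness, then Chebyshev's inequality via the variance bound $\mathrm{Var}(S)=\sigma^{2}/n$ for convergence in probability) is exactly the textbook weak-law-of-large-numbers reasoning the authors intend. Your remark that the limit should be read as the constant $\mathbb{E}[f(\mathbf{V})]$ rather than the random variable $f(\mathbf{V})$ is a fair and worthwhile clarification of the statement's loose phrasing.
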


The proof of the Lemma \ref{lemma:unbiasedness and consistency} is omitted. Besides, there are several Lemmas that are useful when we want to prove the consistency of an estimator.
\begin{lemma}\label{lemma:stochastic_bound}
Given an unknown quantity $\theta$. For each $n\in\mathbb{N}$, let $T_{n}$ and $S_{n}$ be the estimators based on the first $n$ data points. The following claims are correct.
\begin{enumerate}
\item If $T_{n}-\theta=O_{P}(n^{-\frac{1}{2}})$, then $T_{n}$ is a consistent estimator of $\theta$.\label{lemma:stochastic_bound_true}
\item If $T_{n}-S_{n}=O_{P}(n^{-\frac{1}{2}})$, then $T_{n}-S_{n}$ converges to $0$ in probability.\label{lemma:stochastic_bound_sequence}
\end{enumerate}
\end{lemma}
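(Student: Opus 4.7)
The plan is to unpack the definition of $O_P(n^{-1/2})$ and show directly that being stochastically bounded at rate $n^{-1/2}$ (which goes to zero) forces convergence in probability to zero. Recall that a sequence $X_n = O_P(a_n)$ means: for every $\epsilon > 0$ there exist $M = M(\epsilon) > 0$ and $N_1 = N_1(\epsilon) \in \mathbb{N}$ such that $\mathbb{P}(|X_n/a_n| > M) < \epsilon$ for all $n \geq N_1$. I would state this definition at the outset so that both claims reduce to the same short argument.

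For claim~\ref{lemma:stochastic_bound_true}, set $X_n := T_n - \theta$ and $a_n := n^{-1/2}$. Fix arbitrary $\delta > 0$ and $\epsilon > 0$; I must show that $\mathbb{P}(|T_n - \theta| > \delta) < \epsilon$ for all sufficiently large $n$. By the $O_P(n^{-1/2})$ hypothesis, choose $M$ and $N_1$ as above. Since $M n^{-1/2} \to 0$, pick $N_2$ so that $M n^{-1/2} < \delta$ for $n \geq N_2$; then $\{|T_n - \theta| > \delta\} \subseteq \{|T_n - \theta| > M n^{-1/2}\}$ for such $n$, and monotonicity of probability gives
\begin{equation*}
\mathbb{P}(|T_n - \theta| > \delta) \leq \mathbb{P}(|T_n - \theta| > M n^{-1/2}) < \epsilon
\qquad \text{for all } n \geq \max(N_1, N_2).
\end{equation*}
Since $\delta, \epsilon$ were arbitrary, this is exactly $T_n \xrightarrow{\mathbb{P}} \theta$.

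For claim~\ref{lemma:stochastic_bound_sequence}, the argument is identical after setting $X_n := T_n - S_n$ in place of $T_n - \theta$; nothing about the structure of $\theta$ was used beyond that it is a fixed (deterministic) quantity, and even that is irrelevant because we are asserting convergence of $T_n - S_n$ to $0$ rather than to anything involving $\theta$.

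I do not expect any real obstacle: the result is a direct consequence of the definition of $O_P$, and the only subtlety is the standard choice of two thresholds $N_1, N_2$ (one from the stochastic-boundedness constant $M$, one from $M n^{-1/2} < \delta$) and taking their maximum. I would keep the write-up to a single short paragraph per claim, with the second claim stated as ``the same argument applies verbatim'' to avoid duplication.
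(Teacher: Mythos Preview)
Your proposal is correct and matches the paper's own proof essentially line for line: both unpack the $O_P(n^{-1/2})$ definition, pick the constant $M$, then choose $n$ large enough that $Mn^{-1/2}<\delta$ (the paper writes this as $n\geq (M/\delta)^2$) and conclude by monotonicity of probability, with claim~\ref{lemma:stochastic_bound_sequence} dismissed as identical.
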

\begin{proof}\ \\
\underline{\textit{\textbf{Proof of \ref{lemma:stochastic_bound_true}:}}} Since $T_{n}-\theta=O_{P}(n^{-\frac{1}{2}})$, we have\\
$\forall \epsilon>0$, $\exists$ $M>0$ and $K>0$ such that
\begin{equation}
\begin{aligned}
Pr\left\{\left|\frac{T_{n}-\theta}{n^{-\frac{1}{2}}}\right|\geq M\right\}\leq\epsilon,\qquad \forall n\geq K.
\end{aligned}
\end{equation}
To prove that $T_{n}$ is a consistent estimator of $\theta$, we need to show that\\
$\forall \delta>0$ and $\forall \epsilon>0$, there exists $K^{'}>0$ such that
\begin{equation}
\begin{aligned}
Pr\left\{\left|T_{n}-\theta\right|\geq \delta\right\}\leq\epsilon,\qquad \forall n\geq K^{'}.
\end{aligned}
\end{equation}
Considering $Pr\left\{\left|T_{n}-\theta\right|\geq \delta\right\}$. We have
\begin{equation*}
\begin{aligned}
Pr\left\{\left|T_{n}-\theta\right|\geq \delta\right\}&= Pr\left\{\left|\frac{T_{n}-\theta}{n^{-\frac{1}{2}}}\right|\geq \frac{\delta}{n^{-\frac{1}{2}}}\right\}.
\end{aligned}
\end{equation*}
Now, take $K^{'}=\max\left\{K,\;\left(\frac{M}{\delta}\right)^{2} \right\}$. Then for any $n\geq K^{'}$, we have $n\geq\left(\frac{M}{\delta}\right)^{2}\Rightarrow \frac{\delta}{n^{-\frac{1}{2}}}\geq M$. Hence, $Pr\left\{\left|T_{n}-\theta\right|\geq \delta\right\}=Pr\left\{\left|\frac{T_{n}-\theta}{n^{-\frac{1}{2}}}\right|\geq \frac{\delta}{n^{-\frac{1}{2}}}\right\}\leq Pr\left\{\left|\frac{T_{n}-\theta}{n^{-\frac{1}{2}}}\right|\geq M\right\}\leq\epsilon$ $\forall n\geq K^{'}$.\\\\
The proof of \ref{lemma:stochastic_bound_sequence} in the Lemma \ref{lemma:stochastic_bound} is similar to the proof of \ref{lemma:stochastic_bound_true} in the Lemma \ref{lemma:stochastic_bound} and hence is omitted.
\end{proof}

\begin{lemma}\label{lemma:stochastic bound fraction}
Given a random variable $X>0\;a.s.$. Suppose $X_{n}>0\;a.s.$ is a sequence of random variables. Simultaneously, let $A_{n}$ be a sequence of positive constant which converges to $0$. If $X_{n}-X=O_{p}(A_{n})$, then $\frac{1}{X_{n}}-\frac{1}{X}=O_{p}(A_{n})$. Here, we say that $X_{n}=O_{p}(Y_{n})$ if for any $\epsilon>0$, there exists $M>0$ and $N>0$ such that
\begin{equation}
\begin{aligned}\label{eqt:stochastic bound fraction}
Pr\left\{\left|\frac{X_{n}}{Y_{n}}\right|\geq M\right\}\leq\epsilon.
\end{aligned}
\end{equation}
\end{lemma}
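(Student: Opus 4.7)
\textbf{Proof proposal for Lemma \ref{lemma:stochastic bound fraction}.} The plan is to exploit the elementary algebraic identity
\begin{equation*}
\frac{1}{X_{n}}-\frac{1}{X} \;=\; -\,\frac{X_{n}-X}{X_{n}X},
\end{equation*}
and then control the factor $(X_{n}X)^{-1}$ in probability by using positivity of $X$ together with the fact that $X_{n}-X$ converges to $0$ in probability. Dividing through by $A_{n}$ reduces the task to showing that for every $\epsilon>0$ there exist $M>0$ and $N\in\mathbb{N}$ with
\begin{equation*}
\Pr\!\left(\left|\frac{X_{n}-X}{A_{n}}\right|\cdot\frac{1}{X_{n}X}\geq M\right)\leq \epsilon\qquad\text{for all }n\geq N.
\end{equation*}

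The core of the argument is a three-event decomposition. First, because $X>0$ almost surely, the collection $\{X<\delta\}$ shrinks to a null set as $\delta\downarrow 0$; so I can pick $\delta>0$ small enough that $\Pr(X<\delta)<\epsilon/3$. Second, the assumption $X_{n}-X=O_{p}(A_{n})$ together with $A_{n}\to 0$ yields $X_{n}-X\to 0$ in probability (this is exactly the content of item \ref{lemma:stochastic_bound_sequence} in Lemma \ref{lemma:stochastic_bound} applied with $\theta=0$), so there is $N_{1}$ with $\Pr(|X_{n}-X|>\delta/2)<\epsilon/3$ for $n\geq N_{1}$. On the good event $E_{n}:=\{X\geq\delta\}\cap\{|X_{n}-X|\leq\delta/2\}$ we have $X_{n}\geq\delta/2$ by the reverse triangle inequality, hence $(X_{n}X)^{-1}\leq 2/\delta^{2}$. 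Third, by $X_{n}-X=O_{p}(A_{n})$ there exist $M'>0$ and $N_{2}$ with $\Pr(|X_{n}-X|/A_{n}\geq M')<\epsilon/3$ for $n\geq N_{2}$.

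Choosing $M=2M'/\delta^{2}$ and $N=\max\{N_{1},N_{2}\}$, I would then bound
\begin{equation*}
\Pr\!\left(\left|\tfrac{1/X_{n}-1/X}{A_{n}}\right|\geq M\right)\leq \Pr(E_{n}^{c}) \,+\, \Pr\!\left(\tfrac{2}{\delta^{2}}\left|\tfrac{X_{n}-X}{A_{n}}\right|\geq M\right)\leq \tfrac{\epsilon}{3}+\tfrac{\epsilon}{3}+\tfrac{\epsilon}{3}=\epsilon,
\end{equation*}
which is the defining inequality \eqref{eqt:stochastic bound fraction} of $O_{p}(A_{n})$. I expect the main subtlety to be the step where $X_{n}$ is bounded below away from $0$: this is where both the almost-sure positivity of $X$ and the fact $A_{n}\to 0$ are used, and without $A_{n}\to 0$ the convergence $X_{n}\to X$ in probability would fail, breaking the argument. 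Everything else is routine union-bound bookkeeping.
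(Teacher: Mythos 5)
Your proof is correct. It rests on the same two pillars as the paper's argument --- the identity $1/X_{n}-1/X=-(X_{n}-X)/(X_{n}X)$ and the observation that almost-sure positivity of $X$ together with $X_{n}\to X$ in probability (a consequence of $X_{n}-X=O_{p}(A_{n})$ and $A_{n}\to 0$) lets one bound $1/(X_{n}X)$ on a high-probability event --- but your execution is genuinely different and noticeably cleaner. The paper manipulates the event $\{|X_{n}-X|\geq \bar{M}A_{n}X_{n}X\}$ directly through a nested case analysis on whether $X_{n}$ exceeds $X\pm\bar{M}A_{n}$ and whether $X^{2}$ lies in $[\delta,\bar{\delta}]$; this forces it to introduce an auxiliary \emph{upper} bound $\bar{\delta}$ on $X^{2}$ (to control the correction term $\bar{M}^{2}A_{n}\sqrt{\bar{\delta}}$) and it terminates with a $5\epsilon$ bound. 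You instead isolate the good event $E_{n}=\{X\geq\delta\}\cap\{|X_{n}-X|\leq\delta/2\}$ up front, on which $X_{n}X\geq\delta^{2}/2$ holds deterministically, and then need only the single tail bound $\Pr(|X_{n}-X|/A_{n}\geq M')<\epsilon/3$; no upper bound on $X$ is required and the final estimate is exactly $\epsilon$. The one loose point is the citation: item 2 of Lemma \ref{lemma:stochastic_bound} is stated only for the rate $n^{-1/2}$, so strictly you should either remark that its proof carries over verbatim to any positive sequence $A_{n}\to 0$ or supply the one-line argument directly (for $n$ large, $M'A_{n}<\eta$, hence $\Pr(|X_{n}-X|\geq\eta)\leq\Pr(|X_{n}-X|\geq M'A_{n})\leq\epsilon$), which your sketch essentially contains.
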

\begin{proof}
From $X_{n}-X=O_{p}(A_{n})$, there exists $M>0$ and $N>0$ such that we have
\begin{equation}
\begin{aligned}\label{eqt:stochastic bound fraction 1}
Pr\left\{\left|X_{n}-X\right|\geq M A_{n}\right\}\leq\epsilon\qquad\forall n\geq N\\
\Rightarrow Pr\left\{X_{n}\geq X+M A_{n}\right\}+Pr\left\{X_{n}\leq X-M A_{n}\right\}\leq\epsilon\qquad\forall n\geq N.
\end{aligned}
\end{equation}
Now, for any $n$ and $\bar{M}>0$, we have
\begin{equation}
\begin{aligned}\label{eqt:stochastic bound fraction 2}
&Pr\left\{\left|\frac{1}{X_{n}}-\frac{1}{X}\right|\geq \bar{M} A_{n}\right\}\qquad\forall n\\
=& Pr\left\{\left|X_{n}-X\right|\geq \bar{M} A_{n} X_{n}X\right\}\qquad\forall n\\
=& Pr\left\{\left|X_{n}-X\right|\geq \bar{M} A_{n} X_{n}X,\;X_{n}\geq X+\bar{M} A_{n}\right\}+Pr\left\{\left|X_{n}-X\right|\geq \bar{M} A_{n} X_{n}X,\;X_{n}\leq X+\bar{M} A_{n}\right\}\qquad\forall n\\
\leq& Pr\left\{\left|X_{n}-X\right|\geq \bar{M} A_{n} X(X+\bar{M} A_{n})\right\}+Pr\left\{\left|X_{n}-X\right|\geq \bar{M} A_{n} X_{n}X,\;X_{n}\leq X+\bar{M} A_{n}\right\}\qquad\forall n\\
=& Pr\left\{\left|X_{n}-X\right|\geq \bar{M} A_{n} X(X+\bar{M} A_{n})\right\}+ Pr\left\{\left|X_{n}-X\right|\geq \bar{M} A_{n} X_{n}X,\;X_{n}\leq X+\bar{M} A_{n},\;X_{n}\leq X-\bar{M} A_{n}\right\}\\
+& Pr\left\{\left|X_{n}-X\right|\geq \bar{M} A_{n} X_{n}X,\;X_{n}\leq X+\bar{M} A_{n},\;X_{n}\geq X-\bar{M} A_{n}\right\}\qquad\forall n\\
\leq& Pr\left\{\left|X_{n}-X\right|\geq \bar{M} A_{n} X(X+\bar{M} A_{n})\right\}+Pr\left\{X_{n}\leq X-\bar{M} A_{n}\right\}\\
+& Pr\left\{\left|X_{n}-X\right|\geq \bar{M} A_{n} X_{n}X,\;X-\bar{M} A_{n}\leq X_{n}\leq X+\bar{M} A_{n}\right\}\qquad\forall n.
\end{aligned}
\end{equation}
Note that 
\begin{equation}
\begin{aligned}\label{eqt:stochastic bound fraction 3}
&Pr\left\{\left|X_{n}-X\right|\geq \bar{M} A_{n}X(X+\bar{M} A_{n})\right\}=Pr\left\{\frac{\left|X_{n}-X\right|}{A_{n}}\geq \bar{M} X^{2}+\bar{M}^{2}XA_{n}\right\}\leq Pr\left\{\frac{\left|X_{n}-X\right|}{A_{n}}\geq \bar{M}X^{2}\right\}.
\end{aligned}
\end{equation}
Now, since $X>0\;a.s.$, $X^{2}>0\;a.s.$. Furthermore, there exists $\delta>0$ and $\bar{\delta}>0$ with $\bar{\delta}>\delta$ such that $Pr\{X^{2}\leq\delta\}\leq\epsilon$ and $Pr\{X^{2}\geq\bar{\delta}\}\leq\epsilon$ respectively. Besides, from \eqref{eqt:stochastic bound fraction 3}, we have
\begin{equation}
\begin{aligned}\label{eqt:stochastic bound fraction 4}
&Pr\left\{\frac{\left|X_{n}-X\right|}{A_{n}}\geq \bar{M} X^{2}\right\}\\
=&Pr\left\{\frac{\left|X_{n}-X\right|}{A_{n}}\geq \bar{M} X^{2},\;X^{2}\leq\delta\right\}+Pr\left\{\frac{\left|X_{n}-X\right|}{A_{n}}\geq \bar{M} X^{2},\;X^{2}>\delta\right\}\\
\leq &Pr\left\{X^{2}\leq\delta\right\}+Pr\left\{\frac{\left|X_{n}-X\right|}{A_{n}}\geq \bar{M}\delta\right\}.
\end{aligned}
\end{equation}
Simultaneously, we consider $Pr\left\{\left|X_{n}-X\right|\geq \bar{M} A_{n} X_{n}X,\;X-\bar{M} A_{n}\leq X_{n}\leq X+\bar{M} A_{n}\right\}$ $\forall n$. Indeed, we have
\begin{equation}
\begin{aligned}\label{eqt:stochastic bound fraction 3b}
&Pr\left\{\left|X_{n}-X\right|\geq \bar{M} A_{n} X_{n}X,\;X-\bar{M} A_{n}\leq X_{n}\leq X+\bar{M} A_{n}\right\}\\
=&Pr\left\{\left|X_{n}-X\right|\geq \bar{M} A_{n} X_{n}X,\;X-\bar{M} A_{n}\leq X_{n}\leq X+\bar{M} A_{n},\;X^{2}\leq\delta\right\}\\
+&Pr\left\{\left|X_{n}-X\right|\geq \bar{M} A_{n} X_{n}X,\;X-\bar{M} A_{n}\leq X_{n}\leq X+\bar{M} A_{n},\;X^{2}>\delta\right\}\\
\leq&Pr\left\{X^{2}\leq\delta\right\}+Pr\left\{\left|X_{n}-X\right|\geq \bar{M} A_{n} X_{n}X,\;X-\bar{M} A_{n}\leq X_{n}\leq X+\bar{M} A_{n},\;X^{2}>\delta,\;X^{2}\geq\bar{\delta}\right\}\\
+&Pr\left\{\left|X_{n}-X\right|\geq \bar{M} A_{n} X_{n}X,\;X-\bar{M} A_{n}\leq X_{n}\leq X+\bar{M} A_{n},\;X^{2}>\delta,\;X^{2}<\bar{\delta}\right\}\\
\leq&Pr\left\{X^{2}\leq\delta\right\}+Pr\left\{X^{2}\geq\bar{\delta}\right\}+Pr\left\{\left|X_{n}-X\right|\geq \bar{M} A_{n} X_{n}X,\;X-\bar{M} A_{n}\leq X_{n}\leq X+\bar{M} A_{n},\;\delta\leq X^{2}\leq \bar{\delta}\right\}\\
\leq&Pr\left\{X^{2}\leq\delta\right\}+Pr\left\{X^{2}\geq\bar{\delta}\right\}+Pr\left\{\left|X_{n}-X\right|\geq \bar{M} A_{n} (X-\bar{M}A_{n})X,\;\delta\leq X^{2}\leq\bar{\delta}\right\}\\
\leq&Pr\left\{X^{2}\leq\delta\right\}+Pr\left\{X^{2}\geq\bar{\delta}\right\}+Pr\left\{\frac{\left|X_{n}-X\right|}{A_{n}}\geq \bar{M}\delta- \bar{M}^{2} A_{n}\sqrt{\bar{\delta}}\right\}.
\end{aligned}
\end{equation}
Since $A_{n}$ is a sequence which converges to $0$, together with 
Combining \eqref{eqt:stochastic bound fraction 1}, \eqref{eqt:stochastic bound fraction 2}, \eqref{eqt:stochastic bound fraction 4} and \eqref{eqt:stochastic bound fraction 3b}, we can conclude that there exists $M^{'}$ and $N^{'}$ such that 
\begin{equation*}
\begin{aligned}
Pr\left\{\left|\frac{1}{X_{n}}-\frac{1}{X}\right|\geq M^{'}A_{n}\right\}\leq 5\epsilon \qquad \forall n\geq N^{'}
\end{aligned}
\end{equation*}
hence the claim is proved.
\end{proof}

Before presenting the error decomposition, we redefine some of the notations which are used in the sequel to reduce the computation complexity. They are
\begin{equation*}
\begin{aligned}
g_{i}&:=g(d^{i},\mathbf{U},\mathbf{Z}),\quad\hat{g}_{i}:=\hat{g}(d^{i},\mathbf{U},\mathbf{Z})\\
E_{i}&:=\mathbb{E}\left[\mathbf{1}_{\{D=d^{i}\}}\mid\mathbf{X},\mathbf{Z}\right],\quad\hat{E}_{i}:=\hat{\mathbb{E}}\left[\mathbf{1}_{\{D=d^{i}\}}\mid\mathbf{X},\mathbf{Z}\right]\\
P_{j}&:=Pr\{D=d^{j}\},\quad\hat{P}_{j}:=\hat{Pr}\{D=d^{j}\}.
\end{aligned}
\end{equation*}
Here, the estimated functions $\hat{g}_{i}$ and $\hat{E}_{i}$ are estimated based on an observational dataset. We omit the size of the observational dataset in defining the notations for the estimated functions. When we want to emphasize the computations of the $m^{\mathrm{th}}$ individual, we add either a subscript or a superscript $m$ to the notations in the first two rows of the above equations. As a consequence, the notations become
\begin{equation*}
\begin{aligned}
g^{m}_{i}&:=g(d^{i},\mathbf{U}_{m},\mathbf{Z}_{m}),\quad\hat{g}^{m}_{i}:=\hat{g}(d^{i},\mathbf{U}_{m},\mathbf{Z}_{m})\\
E_{i}^{m}&:=\mathbb{E}\left[\mathbf{1}_{\{D_{m}=d^{i}\}}\mid\mathbf{X}_{m},\mathbf{Z}_{m}\right],\quad\hat{E}_{i}^{m}:=\hat{\mathbb{E}}\left[\mathbf{1}_{\{D_{m}=d^{i}\}}\mid\mathbf{X}_{m},\mathbf{Z}_{m}\right].
\end{aligned}
\end{equation*}
%

Next, we do the error decomposition on the difference between $\mathbb{E}\left[g(d^{i},\mathbf{U},\mathbf{Z})\right]$ and the IwC estimator of $\theta^{i}$, which gives
\begin{align*}
&\mathbb{E}\left[g(d^{i},\mathbf{U},\mathbf{Z})\right]-\frac{1}{N}\sum_{m=1}^{N}g^{m}_{i}+\frac{1}{N}\sum_{m=1}^{N}\big[g_{i}^{m}-\hat{g}_{i}^{m}\big]+\frac{1}{N}\sum_{m=1}^{N}\mathbf{1}_{\{D_{m}=d^{i}\}}\left[\frac{(Y_{m}-g_{i}^{m})}{E_{i}^{m}}-\frac{(Y_{m}-\hat{g}_{i}^{m})}{\hat{E}_{i}^{m}}\right]\\
+&\left\{\mathbb{E}\left[\frac{\mathbf{1}_{\{D=d^{i}\}}(Y-g(d^{i},\mathbf{U},\mathbf{Z}))}{\mathbb{E}\left[\mathbf{1}_{\{D=d^{i}\}}\mid \mathbf{X},\mathbf{Z}\right]}\right]-\frac{1}{N}\sum_{m=1}^{N}\frac{\mathbf{1}_{\{D_{m}=d^{i}\}}(Y_{m}-g_{i}^{m})}{E_{i}^{m}}\right\}-\mathbb{E}\left[\frac{\mathbf{1}_{\{D=d^{i}\}}(Y-g(d^{i},\mathbf{U},\mathbf{Z}))}{\mathbb{E}\left[\mathbf{1}_{\{D=d^{i}\}}\mid \mathbf{X},\mathbf{Z}\right]}\right].
\end{align*}
Simplifying $\frac{1}{N}\overset{N}{\underset{m=1}{\sum}}\mathbf{1}_{\{D_{m}=d^{i}\}}\left[\frac{(Y_{m}-g_{i}^{m})}{E_{i}^{m}}-\frac{(Y_{m}-\hat{g}_{i}^{m})}{\hat{E}_{i}^{m}}\right]$ gives
\begin{equation*}
\begin{aligned}
\frac{1}{N}\sum_{m=1}^{N}\mathbf{1}_{\{D_{m}=d^{i}\}}\frac{(\hat{E}_{i}^{m}-E_{i}^{m})(Y_{m}-g_{i}^{m})+E_{i}^{m}(\hat{g}_{i}^{m}-g_{i}^{m})}{E_{i}^{m}(\hat{E}_{i}^{m}-E_{i}^{m})+(E_{i}^{m})^{2}}.
\end{aligned}
\end{equation*}
As a result, we have
\begin{subequations}
	\begin{align}
	&\mathbb{E}\left[g(d^{i},\mathbf{U},\mathbf{Z})\right]-\frac{1}{N}\sum_{m=1}^{N}g^{m}_{i}\label{eqt:consistency expectation 1}\\
	+&\frac{1}{N}\sum_{m=1}^{N}\big[g_{i}^{m}-\hat{g}_{i}^{m}\big]\label{eqt:consistency expectation 2}\\
	+&\mathbb{E}\left[\frac{\mathbf{1}_{\{D=d^{i}\}}(Y-g(d^{i},\mathbf{U},\mathbf{Z}))}{\mathbb{E}\left[\mathbf{1}_{\{D=d^{i}\}}\mid \mathbf{X},\mathbf{Z}\right]}\right]-\frac{1}{N}\sum_{m=1}^{N}\frac{\mathbf{1}_{\{D_{m}=d^{i}\}}(Y_{m}-g_{i}^{m})}{E_{i}^{m}}\label{eqt:consistency expectation 3}\\
	+&\frac{1}{N}\sum_{m=1}^{N}\mathbf{1}_{\{D_{m}=d^{i}\}}\frac{(\hat{E}_{i}^{m}-E_{i}^{m})(Y_{m}-g_{i}^{m})+E_{i}^{m}(\hat{g}_{i}^{m}-g_{i}^{m})}{E_{i}^{m}(\hat{E}_{i}^{m}-E_{i}^{m})+(E_{i}^{m})^{2}}\label{eqt:consistency expectation 4}\\
	-&\mathbb{E}\left[\frac{\mathbf{1}_{\{D=d^{i}\}}(Y-g(d^{i},\mathbf{U},\mathbf{Z}))}{\mathbb{E}\left[\mathbf{1}_{\{D=d^{i}\}}\mid \mathbf{X},\mathbf{Z}\right]}\right].\label{eqt:consistency expectation 5}
	\end{align}
\end{subequations}
Clearly, \eqref{eqt:consistency expectation 5} equals $0$. Expectations of \eqref{eqt:consistency expectation 1} and \eqref{eqt:consistency expectation 3} are $0$ using \ref{lemma:unbiasedness} of Lemma \ref{lemma:unbiasedness and consistency}. Moreover, \eqref{eqt:consistency expectation 1} and \eqref{eqt:consistency expectation 3} converge to $0$ in probability using \ref{lemma:consistency} of Lemma \ref{lemma:unbiasedness and consistency}. Consequently, errors mainly come from \eqref{eqt:consistency expectation 2} and \eqref{eqt:consistency expectation 4}. When $\hat{E}_{i}$ and $\hat{g}_{i}$ should converge to $E_{i}$ and $g_{i}$ well,
we know that \eqref{eqt:consistency expectation 2} and \eqref{eqt:consistency expectation 4} converge to $0$ in probability based on the Lemma \ref{lemma:stochastic_bound} and the Lemma \ref{lemma:stochastic bound fraction}. Consequently, $\hat{\theta}^{i}_{w}$ is a consistent estimator.

Here, $\hat{g}^{i}$, $\hat{E}^{i}$ and $\hat{P}_{j}$ are the estimated functions of $g^{i}$, $E^{i}$ and $P_{j}$ accordingly. In addition, we omit the dependence of $N$ when defining the notations of the estimated functions. Next, we do the error decomposition on the difference between $\mathbb{E}\left[g(d^{i},\mathbf{U},\mathbf{Z})\mid D=d^{j}\right]$ and the IwC estimator of $\theta^{i\mid j}$. To start with, we rewrite the IwC estimator of $\theta^{i\mid j}$. Indeed, the IwC estimator of $\theta^{i\mid j}$ can be rewritten as
\begin{align*}
&\frac{1}{N_{j}}\left\{\overset{N_{j}}{\underset{m=1}{\sum}} \hat{g}(d^{i},\mathbf{U}_{m}^{j},\mathbf{Z}_{m}^{j})+\overset{N_{i}}{\underset{m=1}{\sum}}\frac{\hat{\mathbb{E}}\left[\mathbf{1}_{\{D_{m}=d^{j}\}}\mid\mathbf{X}_{m}^{i},\mathbf{Z}_{m}^{i}\right]}{\hat{\mathbb{E}}\left[\mathbf{1}_{\{D_{m}=d^{i}\}}\mid\mathbf{X}_{m}^{i},\mathbf{Z}_{m}^{i}\right]}\left[Y_{m}^{i}-\hat{g}(d^{i},\mathbf{U}_{m}^{i},\mathbf{Z}_{m}^{i})\right]\right\}\nonumber\\
=&\frac{\frac{1}{N}\overset{N}{\underset{m=1}{\sum}}\mathbf{1}_{\{D_{m}=d^{j}\}}\hat{g}(d^{i},\mathbf{U}_{m},\mathbf{Z}_{m})}{\frac{N_{j}}{N}}+\frac{\frac{1}{N}\overset{N}{\underset{m=1}{\sum}}\frac{\mathbf{1}_{\{D_{m}=d^{i}\}}\hat{\mathbb{E}}\big[\mathbf{1}_{\{D_{m}=d^{j}\}}\mid\mathbf{X}_{m},\mathbf{Z}_{m}\big]}{\hat{\mathbb{E}}\big[\mathbf{1}_{\{D_{m}=d^{i}\}}\mid\mathbf{X}_{m},\mathbf{Z}_{m}\big]}(Y_{m}-\hat{g}(d^{i},\mathbf{U}_{m},\mathbf{Z}_{m}))}{\frac{N_{j}}{N}}\nonumber\\
=&\frac{\frac{1}{N}\overset{N}{\underset{m=1}{\sum}}\mathbf{1}_{\{D_{m}=d^{j}\}}\hat{g}_{i}^{m}}{\frac{N_{j}}{N}}+\frac{\frac{1}{N}\overset{N}{\underset{m=1}{\sum}}\frac{\mathbf{1}_{\{D_{m}=d^{i}\}}\hat{E}_{j}^{m}}{\hat{E}_{i}^{m}}(Y_{m}-\hat{g}_{i}^{m})}{\frac{N_{j}}{N}}.
\end{align*}
Instead of undergoing the error decomposition on the difference between $\mathbb{E}\left[g(d^{i},\mathbf{U},\mathbf{Z})\mid D=d^{j}\right]$ and the IwC estimator of $\theta^{i\mid j}$, we consider the error decomposition on the difference between $\mathbb{E}\left[g(d^{i},\mathbf{U},\mathbf{Z})\mid D=d^{j}\right]$ and $\left\{\frac{\frac{1}{N}\overset{N}{\underset{m=1}{\sum}}\mathbf{1}_{\{D_{m}=d^{j}\}}\hat{g}_{i}^{m}}{\hat{P}_{j}}+\frac{\frac{1}{N}\overset{N}{\underset{m=1}{\sum}}\frac{\mathbf{1}_{\{D_{m}=d^{i}\}}\hat{E}_{j}^{m}}{\hat{E}_{i}^{m}}(Y_{m}-\hat{g}_{i}^{m})}{\hat{P}_{j}}\right\}$, which gives
\begin{align*}
&\frac{\mathbb{E}\left[g(d^{i},\mathbf{U},\mathbf{Z})\mathbf{1}_{\{D=d^{j}\}}\right]}{P_{j}}-\frac{\frac{1}{N}\overset{N}{\underset{m=1}{\sum}}\mathbf{1}_{\{D_{m}=d^{j}\}}g_{i}^{m}}{P_{j}}\\
+&\frac{\frac{1}{N}\overset{N}{\underset{m=1}{\sum}}\mathbf{1}_{\{D_{m}=d^{j}\}}g_{i}^{m}}{P_{j}}-\frac{\frac{1}{N}\overset{N}{\underset{m=1}{\sum}}\mathbf{1}_{\{D_{m}=d^{j}\}}\hat{g}_{i}^{m}}{\hat{P}_{j}}\\
+&\frac{\mathbb{E}\left[\frac{\mathbf{1}_{\{D=d^{i}\}}\mathbb{E}\left[\mathbf{1}_{\{D=d^{j}\}}\mid \mathbf{X},\mathbf{Z}\right](Y-g(d^{i},\mathbf{U},\mathbf{Z}))}{\mathbb{E}\left[\mathbf{1}_{\{D=d^{i}\}}\mid \mathbf{X},\mathbf{Z}\right]}\right]}{P_{j}}-\frac{\frac{1}{N}\overset{N}{\underset{m=1}{\sum}}\frac{\mathbf{1}_{\{D_{m}=d^{i}\}}E_{j}^{m}}{E_{i}^{m}}(Y_{m}-g_{i}^{m})}{P_{j}}
\\
+&\frac{\frac{1}{N}\overset{N}{\underset{m=1}{\sum}}\frac{\mathbf{1}_{\{D_{m}=d^{i}\}}E_{j}^{m}}{E_{i}^{m}}(Y_{m}-g_{i}^{m})}{P_{j}}-\frac{\frac{1}{N}\overset{N}{\underset{m=1}{\sum}}\frac{\mathbf{1}_{\{D_{m}=d^{i}\}}\hat{E}_{j}^{m}}{\hat{E}_{i}^{m}}(Y_{m}-\hat{g}_{i}^{m})}{\hat{P}_{j}}\\
-&\frac{\mathbb{E}\left[\frac{\mathbf{1}_{\{D=d^{i}\}}\mathbb{E}\left[\mathbf{1}_{\{D=d^{j}\}}\mid \mathbf{X},\mathbf{Z}\right](Y-g(d^{i},\mathbf{U},\mathbf{Z}))}{\mathbb{E}\left[\mathbf{1}_{\{D=d^{i}\}}\mid \mathbf{X},\mathbf{Z}\right]}\right]}{\mathbb{E}\left[\mathbf{1}_{\{D=d^{j}\}}\right]}
\end{align*}
We simplify the quantities $\frac{\frac{1}{N}\overset{N}{\underset{m=1}{\sum}}\mathbf{1}_{\{D_{m}=d^{j}\}}g_{i}^{m}}{P_{j}}-\frac{\frac{1}{N}\overset{N}{\underset{m=1}{\sum}}\mathbf{1}_{\{D_{m}=d^{j}\}}\hat{g}_{i}^{m}}{\hat{P}_{j}}$ and $\frac{\frac{1}{N}\overset{N}{\underset{m=1}{\sum}}\frac{\mathbf{1}_{\{D_{m}=d^{i}\}}E_{j}^{m}}{E_{i}^{m}}(Y_{m}-g_{i}^{m})}{P_{j}}-\frac{\frac{1}{N}\overset{N}{\underset{m=1}{\sum}}\frac{\mathbf{1}_{\{D_{m}=d^{i}\}}\hat{E}_{j}^{m}}{\hat{E}_{i}^{m}}(Y_{m}-\hat{g}_{i}^{m})}{\hat{P}_{j}}$. Indeed, we have
\begin{equation*}
\begin{aligned}
\frac{\frac{1}{N}\overset{N}{\underset{m=1}{\sum}}\mathbf{1}_{\{D_{m}=d^{j}\}}g_{i}^{m}}{P_{j}}-\frac{\frac{1}{N}\overset{N}{\underset{m=1}{\sum}}\mathbf{1}_{\{D_{m}=d^{j}\}}\hat{g}_{i}^{m}}{\hat{P}_{j}}=\frac{1}{N}\frac{\overset{N}{\underset{m=1}{\sum}}\mathbf{1}_{\{D_{m}=d^{j}\}}\hat{P}_{j}(g_{i}^{m}-\hat{g}_{i}^{m})+(\hat{P}_{j}-P_{j})\hat{g}_{i}^{m}}{(\hat{P}_{j}-P_{j})P_{j}+P_{j}^{2}}
\end{aligned}
\end{equation*}
and
\begin{equation*}
\begin{aligned}
&\frac{\frac{1}{N}\overset{N}{\underset{m=1}{\sum}}\frac{\mathbf{1}_{\{D_{m}=d^{i}\}}E_{j}^{m}}{E_{i}^{m}}(Y_{m}-g_{i}^{m})}{P_{j}}-\frac{\frac{1}{N}\overset{N}{\underset{m=1}{\sum}}\frac{\mathbf{1}_{\{D_{m}=d^{i}\}}\hat{E}_{j}^{m}}{\hat{E}_{i}^{m}}(Y_{m}-\hat{g}_{i}^{m})}{\hat{P}_{j}}\\
=&\frac{1}{N}\overset{N}{\underset{m=1}{\sum}}\mathbf{1}_{\{D_{m}=d^{i}\}}\frac{(\hat{P}_{j}-P_{j})(\hat{E}_{i}^{m}-E_{i}^{m})E_{j}^{m}Y_{m}+(\hat{P}_{j}-P_{j})E_{i}^{m}E_{j}^{m}Y_{m}}{[(\hat{P}_{j}-P_{j})E_{i}^{m}-P_{j}(\hat{E}_{i}^{m}-E_{i}^{m})]P_{j}E_{i}^{m}+(P_{j}E_{i}^{m})^{2}}\\
+&\frac{1}{N}\overset{N}{\underset{m=1}{\sum}}\mathbf{1}_{\{D_{m}=d^{i}\}}\frac{P_{j}[(\hat{E}_{i}^{m}-E_{i}^{m})E_{j}^{m}+E_{i}^{m}(E_{j}^{m}-\hat{E}_{j}^{m})]Y_{m}}{[(\hat{P}_{j}-P_{j})E_{i}^{m}-P_{j}(\hat{E}_{i}^{m}-E_{i}^{m})]P_{j}E_{i}^{m}+(P_{j}E_{i}^{m})^{2}}\\
+&\frac{1}{N}\overset{N}{\underset{m=1}{\sum}}\mathbf{1}_{\{D_{m}=d^{i}\}}\frac{P_{j}E_{i}^{m}(\hat{g}_{i}^{m}-g_{i}^{m})+[P_{j}^{m}(E_{i}^{m}-\hat{E}_{i}^{m})+(P_{j}-\hat{P}_{j})\hat{E}_{i}^{m}]g_{i}^{m}}{[(\hat{P}_{j}-P_{j})E_{i}^{m}-P_{j}(\hat{E}_{i}^{m}-E_{i}^{m})]P_{j}E_{i}^{m}+(P_{j}E_{i}^{m})^{2}}
\end{aligned}
\end{equation*}
As a result, $\mathbb{E}\left[g(d^{i},\mathbf{U},\mathbf{Z})\mid D=d^{j}\right]-\left\{\frac{\frac{1}{N}\overset{N}{\underset{m=1}{\sum}}\mathbf{1}_{\{D_{m}=d^{j}\}}\hat{g}_{i}^{m}}{\hat{P}_{j}}+\frac{\frac{1}{N}\overset{N}{\underset{m=1}{\sum}}\frac{\mathbf{1}_{\{D_{m}=d^{i}\}}\hat{E}_{j}^{m}}{\hat{E}_{i}^{m}}(Y_{m}-\hat{g}_{i}^{m})}{\hat{P}_{j}}\right\}$ can be decomposed as
\begin{subequations}
	\begin{align}
	&\frac{\mathbb{E}\left[g(d^{i},\mathbf{U},\mathbf{Z})\mathbf{1}_{\{D=d^{j}\}}\right]}{P_{j}}-\frac{\frac{1}{N}\overset{N}{\underset{m=1}{\sum}}\mathbf{1}_{\{D_{m}=d^{j}\}}g_{i}^{m}}{P_{j}}\label{eqt:consistency conditional expectation 1}\\
	+&\frac{1}{N}\frac{\overset{N}{\underset{m=1}{\sum}}\mathbf{1}_{\{D_{m}=d^{j}\}}\hat{P}_{j}(g_{i}^{m}-\hat{g}_{i}^{m})+(\hat{P}_{j}-P_{j})\hat{g}_{i}^{m}}{(\hat{P}_{j}-P_{j})P_{j}+P_{j}^{2}}\label{eqt:consistency conditional expectation 2}\\
	+&\frac{\mathbb{E}\left[\frac{\mathbf{1}_{\{D=d^{i}\}}\mathbb{E}\left[\mathbf{1}_{\{D=d^{j}\}}\mid \mathbf{X},\mathbf{Z}\right](Y-g(d^{i},\mathbf{U},\mathbf{Z}))}{\mathbb{E}\left[\mathbf{1}_{\{D=d^{i}\}}\mid \mathbf{X},\mathbf{Z}\right]}\right]}{P_{j}}-\frac{\frac{1}{N}\overset{N}{\underset{m=1}{\sum}}\frac{\mathbf{1}_{\{D_{m}=d^{i}\}}E_{j}^{m}}{E_{i}^{m}}(Y_{m}-g_{i}^{m})}{P_{j}}
	\label{eqt:consistency conditional expectation 3}\\
	+&\frac{1}{N}\overset{N}{\underset{m=1}{\sum}}\mathbf{1}_{\{D_{m}=d^{i}\}}\frac{(\hat{P}_{j}-P_{j})(\hat{E}_{i}^{m}-E_{i}^{m})E_{j}^{m}Y_{m}+(\hat{P}_{j}-P_{j})E_{i}^{m}E_{j}^{m}Y_{m}}{[(\hat{P}_{j}-P_{j})E_{i}^{m}-P_{j}(\hat{E}_{i}^{m}-E_{i}^{m})]P_{j}E_{i}^{m}+(P_{j}E_{i}^{m})^{2}}\nonumber\\
	+&\frac{1}{N}\overset{N}{\underset{m=1}{\sum}}\mathbf{1}_{\{D_{m}=d^{i}\}}\frac{P_{j}[(\hat{E}_{i}^{m}-E_{i}^{m})E_{j}^{m}+E_{i}^{m}(E_{j}^{m}-\hat{E}_{j}^{m})]Y_{m}}{[(\hat{P}_{j}-P_{j})E_{i}^{m}-P_{j}(\hat{E}_{i}^{m}-E_{i}^{m})]P_{j}E_{i}^{m}+(P_{j}E_{i}^{m})^{2}}\nonumber\\
	+&\frac{1}{N}\overset{N}{\underset{m=1}{\sum}}\mathbf{1}_{\{D_{m}=d^{i}\}}\frac{P_{j}E_{i}^{m}(\hat{g}_{i}^{m}-g_{i}^{m})+[P_{j}^{m}(E_{i}^{m}-\hat{E}_{i}^{m})+(P_{j}-\hat{P}_{j})\hat{E}_{i}^{m}]g_{i}^{m}}{[(\hat{P}_{j}-P_{j})E_{i}^{m}-P_{j}(\hat{E}_{i}^{m}-E_{i}^{m})]P_{j}E_{i}^{m}+(P_{j}E_{i}^{m})^{2}}\label{eqt:consistency conditional expectation 4}\\
	-&\frac{\mathbb{E}\left[\frac{\mathbf{1}_{\{D=d^{i}\}}\mathbb{E}\left[\mathbf{1}_{\{D=d^{j}\}}\mid \mathbf{X},\mathbf{Z}\right](Y-g(d^{i},\mathbf{U},\mathbf{Z}))}{\mathbb{E}\left[\mathbf{1}_{\{D=d^{i}\}}\mid \mathbf{X},\mathbf{Z}\right]}\right]}{\mathbb{E}\left[\mathbf{1}_{\{D=d^{j}\}}\right]}.\label{eqt:consistency conditional expectation 5}
	\end{align}
\end{subequations}
Using the similar assumptions and arguments in proving the consistency of the IwC estimator of $\theta^{i}$, we can conclude that the IwC estimator of $\theta^{i\mid j}$ is also a consistent estimator.

\end{document}